\newcommand{\N}{\ensuremath{\mathbb{N}}}
\newcommand{\NIL}{\mathit{NIL}}
\newcommand{\COMB}{\mathit{COMB}}
\newcommand{\DEF}{\mathit{DEF}}
\newcommand{\SYDEF}{\mathit{SYDEF}}
\newcommand{\COMM}{\mathit{COMM}}
\newcommand{\CIRC}{\mathit{CIRC}}
\newcommand{\SUF}{\mathit{SUF}}
\newcommand{\NC}{\mathit{NC}}
\newcommand{\SF}{\mathit{SF}}
\newcommand{\UF}{\mathit{UF}}
\newcommand{\FIN}{\mathit{FIN}}
\newcommand{\MON}{\mathit{MON}}
\newcommand{\PS}{\mathit{PS}}
\newcommand{\ORD}{\mathit{ORD}}
\newcommand{\STAR}{\mathit{STAR}}
\newcommand{\COM}{\mathit{COM}}
\newcommand{\LCOM}{\mathit{LCOM}}
\newcommand{\RCOM}{\mathit{RCOM}}
\newcommand{\TCOM}{\mathit{2COM}}
\newcommand{\REG}{\mathit{REG}}
\newcommand{\Suf}{\mathit{Suf}}
\newcommand{\ec}[1]{\ensuremath{\mathcal{EC}(#1)}}
\def\Set#1#2{\left\{\: #1\;|\; #2\:\right\}}
\def\Sets#1{\left\{\,#1\,\right\}}
\def\set#1#2{\{\; #1 \mid #2\;\}}
\def\sets#1{\{#1\}}
\def\cC{{\cal C}}
\def\cF{{\cal F}}
\def\cR{{\cal R}}
\def\cS{{\cal S}}
\def\cT{{\cal T}}
\def\cEC{{\cal EC}}
\DeclareSymbolFont{symbols}{OMS}{cmsy}{m}{n}
\def\Lra{\Longrightarrow}
\newtheorem{theorem}{Theorem}
\newtheorem{lemma}[theorem]{Lemma}
\newtheorem{corollary}[theorem]{Corollary}
\newtheorem{example}[theorem]{Example}
\newenvironment{proof}{{\em Proof. }}{{}\hspace*{\fill}$\Box$ \par \medskip }
\newenvironment{proof*}{{\em Proof. }}{\par \medskip }
\newlength{\btlabelwidth}\setlength{\btlabelwidth}{\labelwidth}
\newlength{\btleftmargin}\setlength{\btleftmargin}{\leftmargin}
\newenvironment{btlists}{\begin{list}{{\rm--}}{%
\setlength{\labelwidth}{\btlabelwidth}\setlength{\leftmargin}{\btleftmargin}%
\setlength{\topsep}{0pt plus0.2ex}%
\setlength{\itemsep}{0ex plus0.2ex}%
\setlength{\parsep}{0pt plus0.2ex}}}{\end{list}}
\tikzstyle{to}=[->, >=stealth]
\tikzstyle{hier}=[->, >=angle 60]
\tikzstyle{hiero}=[->, >=angle 60, dashed]
\tikzstyle{state}=[circle,draw,inner sep=2pt,minimum size=8mm]
\tikzstyle{edgeLabel}=[inner sep=0.5mm,fill=white,text=black]
\title{Various Types of Comet Languages and their Application in External Contextual Grammars}
\author{Marvin K\"odding \qquad \qquad Bianca Truthe
\institute{Institut f\"ur Informatik, Universit\"at Giessen\\Arndtstr. 2, 35392 Giessen, Germany}
\email{\{marvin.koedding,bianca.truthe\}@informatik.uni-giessen.de}
}
\begin{document}
\maketitle

\begin{abstract}
In this paper, we continue the research on the power of contextual grammars with selection languages from 
subfamilies of the family of regular languages. We investigate various comet-like types of languages and compare
such language families to some other subregular families of languages (finite, monoidal, nilpotent, combinational, 
(symmetric) definite, ordered, non-counting,  power-sepa\-rating,
suffix-closed, commutative, circular, or union-free languages). Further, we compare the language families defined
by these types for the selection with each other and with the families of the hierarchy obtained for external
contextual grammars. In this way, we extend the existing hierarchy by new language families. 
\end{abstract}
Keywords: Comet languages, contextual grammars, subregular selection languages, computational capacity.


\section{Introduction}

Contextual grammars were first introduced by Solomon Marcus in \cite{Marcus.1969} as a formal model that might be
used for the generation of natural languages. The derivation steps
consist in adding contexts to given well-formed sentences, starting from an initial finite basis. Formally, a context is given by a pair $(u, v)$ of words and the external adding to a word $x$ gives the word $uxv$.  In order to control the derivation
process, contextual grammars with selection in a certain family of languages were defined. In such contextual grammars, a context $(u, v)$ may be added only around a word $x$ if this word $x$ belongs to a language which is associated with
the context. Language families were defined where all selection languages in a contextual grammar belong to some language family $F$. 

The study of external contextual grammars with selection in special regular sets was started by
J\"urgen Dassow in \cite{Dassow.2005}. The research was continued by J\"urgen Dassow, Florin Manea, 
and Bianca Truthe (see~\cite{Dassow_Manea_Truthe.2012}) where further subregular families of selection 
languages were considered.

In the present paper, 
we extend the hierarchy of subregular language families by families of comet-like languages. 
Furthermore, we investigate the generative capacity of external contextual grammars with  
selection in such subregular language families.


\section{Preliminaries}

Throughout the paper, we assume that the reader is familiar with the basic concepts of the theory of automata and formal languages. For details, we refer to \cite{Rozenberg_Salomaa.1997}. Here we only recall some notation, definitions, and previous results which we need for the present research.

An alphabet is a non-empty finite set of symbols. For an alphabet $V$, we denote by $V^*$ and $V^+$ the set of all 
words and the set of all non-empty words over $V$, respectively. The empty word is denoted by~$\lambda$. 
For a word $w$ and a letter $a$, we denote the length of $w$ by $|w|$ and the number
of occurrences of the letter~$a$ in the word $w$ by $|w|_a$. For a set $A$, we denote its cardinality by $|A|$.
The reversal of a word $w$ is denoted by $w^R$: if $w=x_1x_2\ldots x_n$ for letters $x_1,\ldots,x_n$, 
then $w^R=x_nx_{n-1}\ldots x_1$. By $L^R$, we denote the language of all reversals of the words in $L$: 
$L^R=\set{w^R}{w\in L}$.

A deterministic finite automaton is a quintuple 
\[{\cal A}=(V,Z,z_0,F,\delta)\]
where $V$ is a finite set of input symbols, $Z$
is a finite set of states, $z_0\in Z$ is the initial state, $F\subseteq Z$ is a set of accepting states, and $\delta$ is
a transition function $\delta: Z\times V\to Z$. The language accepted by such an automaton is the set of all input words 
over the alphabet $V$ which lead letterwise by the transition function from the initial state to an accepting state.

A regular expression over some alphabet $V$ is defined inductively as follows:
\begin{enumerate}
\item $\emptyset$ is a regular expression;
\item every element $x\in V$ is a regular expression;
\item if $R$ and $S$ are regular expressions, so are the concatenation $R\cdot S$,
the union~$R\cup S$, and the Kleene closure $R^*$;
\item for every regular expression, there is a natural number $n$ such that the 
regular expression is obtained from the atomic elements $\emptyset$ and $x\in V$
by $n$ operations concatenation, union, or Kleene closure.
\end{enumerate}

The language $L(R)$ which is described by a regular expression $R$ is also 
inductively defined:
\begin{enumerate}
\item $L(\emptyset)=\emptyset$;
\item for every element $x\in V$, we have $L(x)=\sets{x}$;
\item if $R$ and $S$ are regular expressions, then 
\[L(R\cdot S) = L(R)\cdot L(S),\quad
L(R\cup S)  = L(R)\cup L(S), \quad
L(R^*)    = (L(R))^*.
\]
\end{enumerate}

A general regular expression admits as operations (in the third item of the definition above) also 
intersection (where $L(R\cap S)=L(R)\cap L(S)$) and complementation (where~$L(\overline{R})=\overline{L(R)}$).

All the languages accepted by a finite automaton or described by some regular expression are called regular and form a
family denoted by $\REG$. Any subfamily of this set is called a subregular language family.

\subsection{Some subregular language families}

We consider the following restrictions for regular languages. In the following list of properties, we give already the abbreviation which denotes the family of all languages with the respective property. Let $L$ be a regular language over an alphabet $V$. With respect to the alphabet $V$, the language $L$ is said to be
\begin{itemize}
\item \emph{monoidal} ($\MON$) if and only if $L=V^*$,
\item \emph{nilpotent} ($\NIL$) if and only if it is finite or its complement $V^*\setminus L$ is finite,
\item \emph{combinational} ($\COMB$) if and only if it has the form
$L=V^*X$
for some subset $X\subseteq V$,
\item \emph{definite} ($\DEF$) if and only if it can be represented in the form
$L=A\cup V^*B$
where~$A$ and~$B$ are finite subsets of $V^*$,
\item \emph{symmetric definite} ($\SYDEF$) if and only if $L = EV^*H$ for some regular languages $E$ and $H$,
\item \emph{suffix-closed} ($\SUF$) (or \emph{fully initial} or \emph{multiple-entry} language) if
and only if, for any two words over $V$, say $x\in V^*$ and~$y\in V^*$, the relation $xy\in L$ implies
the relation~$y\in L$,
\item \emph{ordered} ($\ORD$) if and only if the language is accepted by some deterministic finite
automaton 
\[{\cal A}=(V,Z,z_0,F,\delta)\]
with an input alphabet $V$, a finite set $Z$ of states, a start state $z_0\in Z$, a set $F\subseteq Z$ of
accepting states and a transition mapping $\delta$ where $(Z,\preceq )$ is a totally ordered set and, for
any input symbol~$a\in V$, the relation $z\preceq z'$ implies $\delta (z,a)\preceq \delta (z',a)$,
\item \emph{commutative} ($\COMM$) if and only if it contains with each word also all permutations of this
word,
\item \emph{circular} ($\CIRC$) if and only if it contains with each word also all circular shifts of this
word,
\item \emph{non-counting} ($\NC$) if and only if there is a natural
number $k\geq 1$ such that, for any three\linebreak\ words~$x\in V^*$, $y\in V^*$, and $z\in V^*$, it 
holds~$xy^kz\in L$ if and only if $xy^{k+1}z\in L$,
\item \emph{star-free} ($\SF$) if and only if $L$ can be described by a regular expression which is built by concatenation, union, and complementation,     
\item \emph{power-separating} ($\PS$) if and only if, there is a natural number $m\geq 1$ such that
for any word~$x\in V^*$, either
$J_x^m \cap L = \emptyset$
or
$J_x^m\subseteq L$
where
$J_x^m = \set{ x^n}{n\geq m}$,
\item \emph{union-free} ($\UF$) if and only if $L$ can be described by a regular expression which
is only built by concatenation and Kleene closure,
\item \emph{star} ($\STAR$) if and only if $L = H^*$ for some regular language $H \subseteq V^*$,
\item \emph{left-sided comet} ($\LCOM$) if and only if $L = EG^*$ for some regular language $E$ and a regular language $G \notin \{\emptyset, \{\lambda\}\}$,
\item \emph{right-sided comet} ($\RCOM$) if and only if $L = G^*H$ for some regular language $H$ and a regular language $G \notin \{\emptyset, \{\lambda\}\}$,
\item \emph{two-sided comet} ($\TCOM$) if and only if $L = EG^*H$ for two regular languages $E$ and $H$ and a regular language $G \notin \{\emptyset, \{\lambda\}\}$.
\end{itemize}

We remark that monoidal, nilpotent, combinational, (symmetric) definite, ordered, non-counting, star-free, union-free, star, and (left-, right-, or two-sided) comet languages are regular, whereas non-regular languages of the other types mentioned above exist.
Here, we consider among the suffix-closed, commutative, circular, 
and power-separating languages only those which are also regular.
By $\FIN$, we denote the family of languages with finitely many words.
In \cite{McNaughton_Papert.1971}, it was shown that the families of the non-counting languages and 
the star-free languages are equivalent ($\NC=\SF$).

Some properties of the languages of the classes mentioned above can be found in
\cite{Shyr.1991} (monoids),
\cite{Gecseg_Peak.1972} (nilpotent languages),
\cite{Havel.1969} (combinational and commutative languages),
\cite{Perles_Rabin_Shamir.1963} (definite languages),
\cite{Paz_Peleg.1965} (symmetric definite languages),
\cite{Gill_Kou.1974} and \cite{Brzozowski_Jiraskova_Zou.2014} (suffix-closed languages),
\cite{Shyr_Thierrin.1974.ord} (ordered languages),
\cite{Kudlek.2004} (circular languages),
\cite{McNaughton_Papert.1971} (non-counting and 
star free
languages),
\cite{Shyr_Thierrin.1974.ps} (power-separating languages),
\cite{Brzozowski.1962} (union-free languages),
\cite{Brzozowski.1967} (star languages),
\cite{Brzozowski_Cohen.1969} (comet languages).

%
%
%

\subsection{Contextual grammars}

Let $\cF$ be a family of languages. A contextual grammar with selection in $\cF$ is a triple
$G=(V,\cS,A)$
where
\begin{btlists}
\item $V$ is an alphabet, 
\item $\cS$ is a finite set of selection pairs $(S,C)$ with a selection language $S$ over some subset $U$ of the
alphabet $V$ which belongs to the family $\cF$ with respect to the alphabet $U$ and a finite
set~\hbox{$C\subset V^*\times V^*$} of contexts where, for each context $(u,v)\in C$, at least one side is not empty: $uv\not=\lambda$,
\item $A$ is a finite subset of $V^*$ (its elements are called axioms).
\end{btlists}
We write a selection pair $(S,C)$ also as $S\to C$. In the case that $C$ is a singleton set $C=\{(u,v)\}$, we
also write $S\to(u,v)$.
For a contextual grammar 
$G=(V,\Sets{(S_1,C_1),(S_2,C_2),\dots ,(S_n,C_n)},A)$,
we set
\[\ell_A(G) = \max \Set{ |w| }{ w\in A },\quad
\ell_C(G) = \max \Set{ |uv| }{(u,v)\in C_i, 1\leq i\leq n},\quad
\ell(G)   = \ell_A(G)+\ell_C(G)+1.\]
We now define the derivation modes for contextual grammars with selection.

Let $G=(V,\cS,A)$ be a contextual grammar with selection.
A direct external derivation step in $G$ is defined as follows: a word~$x$ derives a word $y$ 
(written as~$x\Lra y$) if and only if there is a pair~$(S,C)\in\cS$ such that~$x\in S$ and $y=uxv$ 
for some pair $(u,v)\in C$.
Intuitively, one can only wrap a context $(u,v)\in C$ around a word $x$ if $x$ belongs to the corresponding
selection language $S$.



By $\Lra^*$ we denote the reflexive and transitive closure of the relation~$\Lra$. 
The language generated by $G$ is 
$L=\set{ z }{ x\Lra^* z \mbox{ for some } x\in A }$.

\begin{example}\label{ex-cg}
Consider the contextual grammar
$G=(\sets{a,b,c},\sets{(S_1,C_1),(S_2,C_2)},\sets{\lambda})$
with
\[S_1 = \sets{a,b}^*,\quad C_1 = \sets{(\lambda,a),(\lambda,b)},\qquad 
  S_2 = \sets{ab}^*,\quad C_2 = \sets{(c,c)}.\]

Starting from the axiom $\lambda$, every word of the language $S_1$ is generated by applying the first selection. 
Starting from any word of $S_2\subset S_1$, every word of the language $\{c\}S_2\{c\}$ is generated by 
applying the second selection. Other words are not generated.

Thus, the language generated is
\[L(G)=\sets{a,b}^*\cup\set{c(ab)^nc}{n\geq 0}.\]

Both selection languages are ordered: The language $S_1$ is accepted by a finite automaton with exactly one state. Hence, it is ordered.
The language $S_2$ is accepted by the following deterministic finite automaton 
$A = (\{z_0, z_1, z_2, z_3\}, \{a,b\}, \delta, z_1, \{z_1\})$ where the transition function
is illustrated in the following picture and given in the table next to it, from which it can 
be seen that the automaton is ordered:

\begin{tikzpicture}[on grid,>=stealth',initial text={\sf start}
]
\node[state,minimum size=4mm,initial,accepting] (z_1) at (2,1) {$z_1$};
\node[state,minimum size=4mm] (z_2) at (4,1) {$z_2$};
\node[state,minimum size=4mm] (z_3) at (6,0) {$z_3$};
\node[state,minimum size=4mm] (z_0) at (0,0) {$z_0$};
\draw[to] 
(z_1) edge [bend right=15] node [below] {$a$} (z_2)
(z_2) edge [bend right=15] node [above] {$b$} (z_1)
(z_1) edge node [below,pos=.3] {$b$} (z_0)
(z_2) edge node [below,pos=.3] {$a$} (z_3)
(z_3) edge [loop right] node [right] {$a,b$} (z_3)
(z_0) edge [loop left] node [left] {$a,b$} (z_0);
\end{tikzpicture}
\qquad
\begin{tabular}[b]{c|cccc}
     & $z_0$ & $z_1$ & $z_2$ & $z_3$\\
     \hline
     $a$ & $z_0$ & $z_2$ & $z_3$ & $z_3$\\
     $b$ & $z_0$ & $z_0$ & $z_1$ & $z_3$
\end{tabular}
\hspace*{\fill}{$\Diamond$}
\end{example}

By~$\cEC(\cF)$, we denote the family of all languages generated externally by contextual grammars 
with selection in $\cF$. When a contextual grammar works in the external mode, we call it an external 
contextual grammar. 

The language generated by the external contextual grammar in Example~\ref{ex-cg} belongs, for instance, 
to the family $\cEC(\ORD)$ because all selection languages ($S_1$ and $S_2$) are ordered.

\section{Results on families of comet languages}

We first present some observations about star languages and two-sided comet languages, 
we give normal forms for two-sided comets, and we insert the subregular families investigated here 
into the existing hierarchy. 

From the structure of two-sided comet languages (languages $L$ of the form $EG^*H$ where $G$ is neither the empty set nor the set with the empty word only), we see that every such language is infinite if none of the sets $E$, $G$, and $H$ is the empty set. If one of the sets $E$ or $H$ is empty, then the whole language $L$ is also empty.

\begin{lemma}\label{lemma:2com_unendlich_oder_leereSprache}
  For each language $L \in \TCOM$, it holds that $L$ is either infinite or empty.
\end{lemma}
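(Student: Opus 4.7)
The plan is a straightforward case split on whether the factors $E$ and $H$ in a representation $L = EG^*H$ (with $G \notin \{\emptyset,\{\lambda\}\}$) are empty.

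First, I would fix an arbitrary $L \in \TCOM$ and invoke the definition of two-sided comet to obtain regular languages $E$, $G$, and $H$ with $L = EG^*H$ and $G \notin \{\emptyset,\{\lambda\}\}$. The first case is when $E = \emptyset$ or $H = \emptyset$; then, because concatenation with the empty language yields the empty language regardless of the middle factor, we immediately get $L = \emptyset$, which is one of the two admissible outcomes.

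The second case is when both $E$ and $H$ are non-empty. Here I would pick some $e \in E$ and some $h \in H$, and I would exploit the hypothesis $G \notin \{\emptyset,\{\lambda\}\}$ to obtain a word $g \in G$ with $|g| \geq 1$. Since $g^n \in G^*$ for every $n \geq 0$, every word $eg^nh$ lies in $EG^*H = L$, and these words have pairwise distinct lengths $|e| + n|g| + |h|$ because $|g| \geq 1$. Consequently $L$ is infinite.

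The main (very mild) obstacle is simply to unpack the hypothesis $G \notin \{\emptyset,\{\lambda\}\}$ correctly: it guarantees the existence of at least one word in $G$ of length at least one, which is exactly what is needed to make the lengths of the $eg^nh$ strictly increasing. No further machinery is required, and no subtlety beyond this observation is involved.
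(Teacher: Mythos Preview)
Your argument is correct and matches the paper's approach: the paper gives exactly this case split (if $E$ or $H$ is empty then $L=\emptyset$; otherwise $L$ is infinite because $G$ contributes arbitrarily long words) in the sentence immediately preceding the lemma and then states the lemma without further proof. Your version is simply a more detailed write-up of the same idea.
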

%

A similar observation can be made for star languages.

\begin{lemma}\label{lemma:star_unendlich_oder_leerwort}
  For each language $L \in \STAR$, it holds that $L$ either is infinite or consists of the empty word $\lambda$.
\end{lemma}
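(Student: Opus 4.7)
The plan is to unfold the definition of $\STAR$ and split according to whether the base language contains a non-empty word. By definition, $L \in \STAR$ means $L = H^*$ for some regular language $H \subseteq V^*$. Note that, unlike for the comet families, the definition places no restrictions on $H$; in particular, $H$ may be empty or equal to $\{\lambda\}$.

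First I would treat the degenerate case $H \subseteq \{\lambda\}$, i.e.\ $H = \emptyset$ or $H = \{\lambda\}$. In both subcases a direct computation gives $H^* = \{\lambda\}$ (using the standard convention $\emptyset^* = \{\lambda\}$), so $L$ consists exactly of the empty word, which is one of the two admissible outcomes in the statement.

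Next I would handle the main case, in which there exists some word $w \in H$ with $w \neq \lambda$. From the definition of Kleene closure, $w^n \in H^*$ for every $n \geq 0$. Since $|w^n| = n\cdot|w|$ and $|w| \geq 1$, the words $w^0, w^1, w^2, \dots$ are pairwise distinct, yielding infinitely many elements of $L$. Hence $L$ is infinite.

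These two cases are exhaustive and mutually exclusive, so in every situation $L$ is either infinite or equal to $\{\lambda\}$. I do not expect any real obstacle here; the only point that requires a little care is remembering to include the case $H = \emptyset$, since the definition of $\STAR$ in the excerpt—in contrast to the definitions of $\LCOM$, $\RCOM$, and $\TCOM$—does not explicitly forbid $H \in \{\emptyset,\{\lambda\}\}$, and one must therefore verify that the conclusion still holds there.
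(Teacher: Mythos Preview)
Your proof is correct and matches the paper's approach. The paper actually omits a formal proof of this lemma, noting only that ``a similar observation can be made for star languages'' after the informal justification of Lemma~\ref{lemma:2com_unendlich_oder_leereSprache}; your case split on whether $H$ contains a non-empty word is precisely that similar observation made explicit.
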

%
%
%

\subsection{Normal forms}

  We first show some observations before we conclude a normal form for languages from the class $\TCOM$. This normal form is later used when we prove that $\TCOM$-languages as selection languages are as powerful as arbitrary regular languages.
  
  \begin{lemma}\label{lemma:2com_zerlegung}
  Each two-sided comet language $L = EG^*H$ can be represented as a finite union 
  \[L = \bigcup_{i=1}^{n} E_i G^* H\]
  for some number $n \geq 1$ and with union-free languages $E_i$ for all $1 \leq i \leq n$.
  \end{lemma}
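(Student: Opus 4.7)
\medskip
\noindent
\emph{Proof plan.}
The plan is to reduce the statement to a known decomposition theorem for regular languages and then distribute the union over the concatenation with $G^*H$.

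First I would recall a classical result of Brzozowski~\cite{Brzozowski.1962}: every regular language can be written as a finite union of union-free languages. Since $E$ is a regular language (being a component of the two-sided comet $L = EG^*H$), this result yields a decomposition
\[E \;=\; \bigcup_{i=1}^{n} E_i\]
for some $n \geq 1$ and union-free languages $E_1, E_2, \dots, E_n$. (If $E = \emptyset$, we simply take $n=1$ and $E_1 = \emptyset$, which is union-free as an atomic regular expression.)

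Next, using the distributivity of concatenation over union on both sides, I would compute
\[L \;=\; E G^* H \;=\; \Bigl(\bigcup_{i=1}^{n} E_i\Bigr) G^* H \;=\; \bigcup_{i=1}^{n} E_i G^* H,\]
which is the claimed form, since each $E_i$ is union-free by construction and $G$ is unchanged (in particular, the side condition $G \notin \{\emptyset,\{\lambda\}\}$ is preserved for every summand).

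The only nontrivial ingredient here is the union-free decomposition of a regular language, which is really what Brzozowski's paper provides; everything after that is pure distributivity. So I do not expect any genuine obstacle, but if a self-contained argument were required, the work would shift to producing the $E_i$ explicitly, for instance by taking a finite automaton for $E$ and, for every choice of an accepting state together with a spanning structure determining which cycles are used, reading off a union-free expression via state elimination that avoids introducing unions. For the purpose of the lemma, however, invoking the cited decomposition result is sufficient.
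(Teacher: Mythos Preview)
Your proposal is correct and follows essentially the same argument as the paper: decompose the regular language $E$ as a finite union of union-free languages and then distribute over the concatenation with $G^*H$. The only cosmetic difference is the reference for the decomposition result (the paper cites \cite{Nagy.2019} rather than Brzozowski), but the logical content is identical.
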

  \begin{proof}
  Let $L = EG^*H$ be a two-sided comet language.
  Every regular language is the union of finitely many union-free languages \cite{Nagy.2019}. Let
  $n\geq 1$ be a natural number and $E_i$ be a union-free language for any $i$ with $1\leq i\leq n$ such that 
  $E = E_1\cup E_2 \cup \cdots \cup E_n$.
  Then, it follows $L = E_1G^*H \cup E_2G^*H \cup \cdots \cup E_nG^*H$.
  \end{proof}
  
  In order to show later that we can transform any $\TCOM$-language into the mentioned normal form, we now present how an infinite union-free language can be represented by a special $\TCOM$-form.
  
  \begin{lemma}\label{lemma:reg_aufteilung}
  For an infinite union-free language $L$, there exist sets $L_l$, $L_i$, and $L_r$ such that~$L = L_lL_i^*L_r$ where $L_l$ is finite and $L_i \notin \{\emptyset, \{\lambda\}\}$.
  \end{lemma}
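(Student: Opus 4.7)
The plan is to prove the lemma by structural induction on the union-free regular expression describing $L$. Recall that a union-free expression over $V$ is built from the atomic expressions $\emptyset$ and the symbols $x\in V$ using only concatenation and Kleene closure. I would emphasise from the outset that only $L_l$ must be finite; the sets $L_i$ and $L_r$ may be arbitrary regular languages (subject to $L_i\notin\sets{\emptyset,\sets{\lambda}}$), which makes the inductive bookkeeping much cheaper.

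First I would dispose of the base cases. An atomic union-free expression describes $\emptyset$ or a singleton $\sets{x}$, both of which yield a finite language, so the assumption that $L$ is infinite rules these out. The inductive cases are therefore $L = L(R)\cdot L(S)$ and $L = L(R)^*$ for union-free subexpressions $R$ and $S$.

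For the Kleene-star case $L = L(R)^*$ with $L$ infinite, observe that $L(R)$ must contain at least one non-empty word (otherwise $L\subseteq\sets{\lambda}$). Hence $L(R)\notin\sets{\emptyset,\sets{\lambda}}$ and the choice
\[L_l = \sets{\lambda},\qquad L_i = L(R),\qquad L_r = \sets{\lambda}\]
immediately satisfies all requirements. For the concatenation case $L = L(R)\cdot L(S)$, note that both $L(R)$ and $L(S)$ are non-empty (otherwise $L$ would be empty, not infinite), so at least one of them must be infinite. I would then split into two sub-cases. If $L(R)$ is infinite, the induction hypothesis gives $L(R) = R_l R_i^* R_r$ with $R_l$ finite and $R_i\notin\sets{\emptyset,\sets{\lambda}}$; setting $L_l = R_l$, $L_i = R_i$, and $L_r = R_r\cdot L(S)$ works, because $L_l$ inherits finiteness from $R_l$ and we do not care that $L_r$ may be infinite. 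If instead $L(R)$ is finite and $L(S)$ is infinite, the induction hypothesis yields $L(S) = S_l S_i^* S_r$ and I would take $L_l = L(R)\cdot S_l$ (finite as a product of two finite sets), $L_i = S_i$, and $L_r = S_r$.

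The main obstacle is essentially bookkeeping: one has to be careful in the concatenation case to push the finiteness requirement onto the left factor only, which is why the asymmetric treatment of the two sub-cases of concatenation is needed. Once this is set up cleanly, the induction goes through without further complication, and the non-triviality of $L_i$ in each step follows directly from the fact that we only appeal to the induction hypothesis on subexpressions whose language is infinite.
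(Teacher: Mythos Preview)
The proposal is correct and follows essentially the same approach as the paper: structural induction on the union-free expression, with the same case split for concatenation (push the decomposition into whichever factor is infinite) and the same trivial decomposition for Kleene closure. The only cosmetic difference is that the paper treats the one-step expression $\{x\}^*$ as an explicit base case, whereas you absorb it into the general Kleene-star case.
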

  \begin{proof}
  We prove the assertion inductively via the number of construction steps required to create a regular expression $\mathcal{R}$ such that $L=L(\mathcal{R})$ holds. In construction step 0, only finite languages are created. Therefore, the base case is $n=1$.\smallskip
  
  \noindent{\sl Base case $n=1$}:
  Since $L$ is infinite, we have $\mathcal{R} = \{x\}^*$ for a letter $x\in V$. A desired representation for the language $L$ is then $\{\lambda\}\{x\}^*\{\lambda\}$.\smallskip

  \noindent{\sl Induction step $n\to n+1$}:
  Assume the induction hypothesis: For every regular expression $\mathcal{R}$ without the union operator which describes an infinite language and which is at construction level of at most $n$, the language $L(\mathcal{R})$ can be represented as $L(\mathcal{R}) = L_lL_i^*L_r$ with $|L_l| < \infty$ and $L_i \notin \{\emptyset, \{\lambda\}\}$.
  Now, let $\cR$ be a regular expression of construction level $n+1$ which describes an infinite language and which does not contain the union operator. Then, there are two possibilities how $\cR$ is built: by concatenation of two regular expressions where for at least one of the described languages the induction hypothesis holds or by Kleene closure of a regular expression which neither describes the empty set nor the language $\{\lambda\}$ (otherwise, $L(\cR)$ would be finite).\smallskip
    
  \noindent{\sl Case 1}: Let $\mathcal{R} = \mathcal{S}\mathcal{T}$. Then, the equation $L(\mathcal{R}) = L(\mathcal{S})L(\mathcal{T})$ holds. If $L(\cS)$ is infinite, we get, according to the induction hypothesis, $L(\mathcal{S}) = S_lS_i^*S_r$ for suitable sets $S_l$, $S_i$, and $S_r$. With $R_l = S_l$, $R_i = S_i$, and~$R_r = S_rL(\cT)$, we obtain $L(\mathcal{R}) = R_lR_i^*R_r$ with $|R_l| < \infty$ and $R_i \notin \{\emptyset, \{\lambda\}\}$. If $L(\cS)$ is finite, then~$L(\cT)$ is infinite (because we consider only such $\cR$ where $L(\cR)$ is infinite) and we get, according to the induction hypothesis, that~$L(\mathcal{T}) = T_lT_i^*T_r$ for suitable sets $T_l$, $T_i$, and $T_r$. With $R_l = L(\cS)T_l$, $R_i = T_i$, and~$R_r = T_r$, we obtain a desired representation $L(\mathcal{R}) = R_lR_i^*R_r$ with $|R_l| < \infty$ and $R_i \notin \{\emptyset, \{\lambda\}\}$.\smallskip

  \noindent{\sl Case 2}: Let $\mathcal{R} = \mathcal{S}^*$. Then, the equation $L(\mathcal{R}) = (L(\mathcal{S}))^*$ holds. Thus, with $R_l = \{\lambda\}$, $R_i = L(\mathcal{S})$, and~$R_r = \{\lambda\}$, we obtain that~$L(\mathcal{R}) = R_lR_i^*R_r$ with~$|R_l| < \infty$ and $R_i \notin \{\emptyset, \{\lambda\}\}$.\smallskip

  Hence, every infinite union-free language can be expressed in the claimed form.
  \end{proof}
  
We proved with Lemma~\ref{lemma:2com_zerlegung} that any $\TCOM$-language can be given as a union of finitely many 
$\TCOM$-languages where the first comet tail is always union-free. Together, we obtain that any $\TCOM$-language has a
representation in the $\TCOM$-form where the first comet tail is a finite set.
  
\begin{lemma}\label{lemma:2com_zerlegung_e_uf_e_endlich}
For each two-sided comet language $L = EG^*H$ with $E \in \UF$, there exist a finite language $E'$, 
a language $G'\notin \{\emptyset, \{\lambda\}\}$, and a regular language $H'$ 
such that $L = E'(G')^*H'$.
\end{lemma}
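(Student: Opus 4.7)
The plan is to split the argument into two simple cases, depending on whether the given union-free language $E$ is finite or infinite, and to use Lemma~\ref{lemma:reg_aufteilung} as the central tool in the infinite case.

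First I would dispose of the easy case: if $E$ is finite (and in particular if $L$ is empty because $E$ or $H$ is $\emptyset$), then the decomposition is already given, taking $E' = E$, $G' = G$, and $H' = H$, where $G' = G \notin \{\emptyset, \{\lambda\}\}$ by the definition of a two-sided comet language and $H'$ is regular because $H$ is.

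The interesting case is when $E$ is infinite. Since $E$ is union-free, Lemma~\ref{lemma:reg_aufteilung} applies and yields languages $E_l$, $E_i$, and $E_r$ with
\[E = E_l E_i^* E_r, \qquad |E_l| < \infty, \qquad E_i \notin \{\emptyset, \{\lambda\}\}.\]
Substituting this into $L$ gives
\[L = E_l E_i^* E_r G^* H.\]
The key observation is that we may now absorb everything to the right of $E_i^*$ into the right-hand tail: set
\[E' = E_l, \qquad G' = E_i, \qquad H' = E_r G^* H.\]
Then $E'$ is finite, $G' = E_i \notin \{\emptyset, \{\lambda\}\}$ by Lemma~\ref{lemma:reg_aufteilung}, and $H'$ is regular as a concatenation of three regular languages. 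A direct substitution verifies $L = E'(G')^* H'$.

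I do not expect a real obstacle here: the statement is essentially a corollary of Lemma~\ref{lemma:reg_aufteilung}. The only point that warrants explicit mention is that the normal form produced by that lemma has only one Kleene-star factor in the middle, so no second star needs to be introduced when rewriting $EG^*H$; the factor $G^*$ can simply be swept into the regular right component $H'$. The tiny side-condition $G' \notin \{\emptyset, \{\lambda\}\}$ is precisely what Lemma~\ref{lemma:reg_aufteilung} was designed to guarantee, so no separate argument is required.
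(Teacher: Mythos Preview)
Your proof is correct and follows essentially the same approach as the paper: case split on whether $E$ is finite or infinite, and in the infinite case invoke Lemma~\ref{lemma:reg_aufteilung} to peel off a finite left factor and absorb $E_rG^*H$ into the right tail. The only cosmetic difference is that the paper treats the empty language as a separate preliminary case (taking $E'=\emptyset$), whereas you fold it into the finite-$E$ case; note that your parenthetical ``in particular if $L$ is empty because $E$ or $H$ is $\emptyset$'' is slightly imprecise, since $H=\emptyset$ with $E$ infinite falls into your second case, but that case still handles it correctly (one gets $H'=E_rG^*\emptyset=\emptyset$).
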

\begin{proof}
We have shown in Lemma \ref{lemma:2com_unendlich_oder_leereSprache} that each two-sided comet language $L$ is either empty or
infinite. For the first case, the assertion holds with $E' = \emptyset$ and any regular 
languages $G' \notin \{\emptyset, \{\lambda\}\}$ and $H'$. 
     
Now, let $L=EG^*H$ be an infinite $\TCOM$-language with $E \in \UF$. If $E$ is finite, then we already have a desired form with 
$E'=E$, $G'=G$, and $H'=H$. 

So, let $E$ be infinite.
By Lemma~\ref{lemma:reg_aufteilung}, we know that there are languages $E_l$, $E_i$, and $E_r$ such 
that $E_l$ is a finite set, $E_i\notin\{\emptyset,\{\lambda\}\}$, and $E=E_lE_i^*E_r$.
If we set $E' = E_l$, $G' = E_i$, and $H' = E_rG^*H$, then we obtain a desired form because $L=E'(G')^*H'$ 
where $E'$ is finite, $G'\notin\{\emptyset,\{\lambda\}\}$, and $H'$ is a regular language.
\end{proof}
  
Now we connect the previous lemmas and conclude that, for every two-sided comet language, there is such a 
representation where the first comet tail of the language is finite.
  
\begin{theorem}[Normal form for $\TCOM$-languages]\label{theorem:linksendliche_nf_2com}
For each two-sided comet language, there exists a representation~$L = EG^*H$ such that $E$ is a finite language
and $G \notin \{\emptyset, \{\lambda\}\}$.
\end{theorem}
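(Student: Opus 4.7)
The plan is to chain the two decomposition lemmas just established. Given $L \in \TCOM$, I first handle the trivial case via Lemma~\ref{lemma:2com_unendlich_oder_leereSprache}: if $L = \emptyset$, then $L = \emptyset \cdot G^* \cdot H$ with $E = \emptyset$ (which is finite) works for any admissible $G$ and regular $H$.

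For the interesting case where $L$ is infinite, I start from some representation $L = EG^*H$ with $E$ regular and $G$ admissible. Applying Lemma~\ref{lemma:2com_zerlegung} splits the first comet tail into union-free pieces, producing $L = \bigcup_{i=1}^n E_i G^* H$ with each $E_i$ union-free. Then Lemma~\ref{lemma:2com_zerlegung_e_uf_e_endlich} applied to each summand (whose first tail is now union-free) yields a finite $E_i'$, an admissible $G_i'$, and a regular $H_i'$ with $E_i G^* H = E_i'(G_i')^*H_i'$. Assembling these, I obtain $L = \bigcup_{i=1}^n E_i'(G_i')^* H_i'$, a finite union of two-sided comets whose first comet tails are already finite.

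The step that I expect to demand the most care is the final collapse of this union into the single representation $L = EG^*H$ claimed by the statement. Setting $E = \bigcup_{i=1}^n E_i'$ keeps $E$ finite, so the remaining task is to identify a single admissible $G$ and a regular $H$ such that $EG^*H$ reproduces the above union exactly. Since the inner languages $G_i'$ and right languages $H_i'$ produced by Lemma~\ref{lemma:2com_zerlegung_e_uf_e_endlich} generally differ across~$i$, amalgamating them into a common $G$ and $H$ without gaining or losing words relative to $L$ is the main obstacle, and this is where I would expect the bulk of the technical work to lie.
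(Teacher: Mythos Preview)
Your plan coincides with the paper's: split the head via Lemma~\ref{lemma:2com_zerlegung}, apply Lemma~\ref{lemma:2com_zerlegung_e_uf_e_endlich} to each summand, and recombine. At the recombination step the paper simply writes each summand as $E_iG^*H$ with a \emph{common} $G$ and $H$ and factors the union as $\left(\bigcup_i E_i\right)G^*H$, offering no justification for why the star part and the right tail may be chosen uniformly across~$i$. You are right to isolate this unification as the crux rather than treat it as automatic.

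The obstacle you anticipate is not merely ``the bulk of the technical work''---it cannot be overcome, because the stated normal form is false. Take $L=(\{aa\}^*\cup\{b\})\{c\}^*\{d\}$, a two-sided comet via $E=\{aa\}^*\cup\{b\}$, $G=\{c\}$, $H=\{d\}$. Suppose $L=E'(G')^*H'$ with $E'$ finite and $G'\notin\{\emptyset,\{\lambda\}\}$. Words of $L$ contain exactly one $d$, at most one $b$, and never the factor $ca$; inspecting $eg^2h$ for arbitrary $e\in E'$, $g\in G'$, $h\in H'$ then forces $G'\subseteq\{a\}^*$ or $G'\subseteq\{c\}^*$. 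If $G'\subseteq\{a\}^*$ with $a^k\in G'$ ($k\ge1$), choose $m$ exceeding the length of every word in $E'$; then $bc^md$ must lie in $E'H'$, say $bc^md=e_0h_0$ with $e_0$ a proper prefix, and the word $e_0a^kh_0\in E'(G')^*H'$ contains one of the factors $ab$, $ba$, $ca$---none of which occurs in $L$. If $G'\subseteq\{c\}^*$ with $c^k\in G'$, choose $n$ with $2n$ exceeding every length in $E'$; then $(aa)^nd\in E'H'$ gives $a^{i_0}\in E'$ and $a^pd\in H'$ with $p\ge1$, and $a^{i_0}c^ka^pd$ contains $ca$. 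Either way $E'(G')^*H'\neq L$. So the collapse you were aiming for, and equally the paper's displayed equality $\bigcup_i E_iG^*H=\left(\bigcup_i E_i\right)G^*H$, fail in general.
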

\begin{proof}
According to Lemma \ref{lemma:2com_zerlegung}, any two-sided comet language $L=E'(G')^*H'$ can be 
represented as a union of finitely many languages $E'_i(G')^*H'$ such that all languages~$E'_i$ are union-free. According to 
Lemma \ref{lemma:2com_zerlegung_e_uf_e_endlich}, every such language $E'_i(G')^*H'$ can in turn be represented 
as a $\TCOM$-language $E_iG^*H$ where the first tail $E_i$ is finite. The union $E$ of all these finite languages $E_i$
is also finite. Hence, we obtain 
\[L=E'(G')^*H'=\bigcup_{i=1}^{n} E'_i(G')^*H' = \bigcup_{i=1}^{n} E_iG^*H = \left(\bigcup_{i=1}^n E_i\right) G^*H =EG^*H
\] 
where $E$ is finite and $G \notin \{\emptyset, \{\lambda\}\}$.
\end{proof}
  
We refer to this representation as a left-sided normal form. A right-sided normal form (where the last comet tail is a 
finite set) can be derived in a similar way. 

\subsection{Hierarchy of subregular language classes}\label{sec:subreg}
  
  In this section, we investigate inclusion relations between various subregular languages classes.
  Figure~\ref{fig:lang_erg_1} shows the results. 

  \begin{figure}[htb]
  \centering
  \scalebox{.9}{\begin{tikzpicture}[node distance=15mm and 16mm, on grid]
    \node (MON) {$\MON$};
    \node (d0)[above=of MON] {};
    \node (FIN)[right=of d0] {$\FIN$};
    \node (NIL)[above=of d0] {$\NIL$};
    \node (COMB)[left=of NIL] {$\COMB$};
    \node (DEF)[above=of NIL] {$\DEF$};
    \node (d1)[left=of DEF] {};
    \node (SYDEF)[left=of d1] {$\SYDEF$};
    \node (d2)[right=of DEF] {};
    \node (SUF)[right=of d2] {$\SUF$};
    \node (ORD)[above=of DEF] {$\ORD$};
    \node (NC)[above=of ORD] {$\NC\stackrel{\text{\cite{McNaughton_Papert.1971}}}{=}\SF$};
    \node (PS)[above=of NC] {$\PS$};
    \node (RCOM)[above=of SYDEF] {$\RCOM$};
    \node (LCOM)[left=of RCOM] {$\LCOM$};
    \node (TCOM)[above=of RCOM] {$\TCOM$};
    \node (COMM)[right=of SUF] {$\COMM$};
    \node (CIRC)[above=of COMM] {$\CIRC$};
    \node (UF)[left=of LCOM] {$\UF$};
    \node (STAR)[below=of UF] {$\STAR$};
    \node (REG) [above of = PS] {$\REG$};
  
    \draw[hier, bend left] (MON) to node[edgeLabel] {\small\ref{lemma:mon_subset_star_subset_uf}} (STAR);
    \draw[hier, bend left] (MON) to node[edgeLabel] {\small\ref{lemma:mon_subset_sydef_subset_com_subset_2com}} (SYDEF);
    \draw[hier] (RCOM) to node[pos=.45,edgeLabel]{\small\cite{Bordihn_Holzer_Kutrib.2009}}(TCOM);
    \draw[hier] (LCOM) to node[edgeLabel]{\small\ref{lemma:mon_subset_sydef_subset_com_subset_2com}}(TCOM);
    \draw[hier, bend left=20] (TCOM) to node[edgeLabel]{\small\cite{Bordihn_Holzer_Kutrib.2009}}(REG);
    \draw[hier] (STAR) to node[pos=.45,edgeLabel]{\small\ref{lemma:mon_subset_star_subset_uf}} (UF);
    \draw[hier, bend left] (UF) to node[edgeLabel]{\small\cite{Holzer_Truthe.2015}}(REG);
    \draw[hier, bend right] (MON) to node[pos=.7,edgeLabel]{\small\cite{Truthe.2018}} (COMM);
    \draw[hier] (COMM) to node[pos=.45,edgeLabel]{\small\cite{Holzer_Truthe.2015}}(CIRC);
    \draw[hier, bend right] (CIRC) to node[edgeLabel]{\small{\cite{Holzer_Truthe.2015}}}(REG);
    \draw[hier] (MON) to node[pos=.45,edgeLabel]{\small\cite{Truthe.2018}} (NIL);
    \draw[hier] (NIL) to node[pos=.45,edgeLabel]{\small\cite{Wiedemann.1978}}(DEF);
    \draw[hier] (DEF) to node[pos=.45,edgeLabel]{\small\cite{Holzer_Truthe.2015}}(ORD);
    \draw[hier] (ORD) to node[pos=.45,edgeLabel]{\small\cite{Shyr_Thierrin.1974.ord}}(NC);
    \draw[hier] (NC) to node[pos=.4,edgeLabel]{\small\cite{Shyr_Thierrin.1974.ps}}(PS);
    \draw[hier] (PS) to node[pos=.45,edgeLabel]{\small\cite{Holzer_Truthe.2015}}(REG);
    \draw[hier, bend right] (MON) to node[pos=.7,edgeLabel]{\small\cite{Truthe.2018}}(SUF);
    \draw[hier, bend right] (SUF) to node[edgeLabel]{\small\cite{Holzer_Truthe.2015}}(PS);
    \draw[hier] (FIN) to node[pos=.45,edgeLabel]{\small\cite{Wiedemann.1978}}(NIL);
    \draw[hier] (COMB) to node[pos=.45,edgeLabel]{\small\cite{Havel.1969}}(DEF);
    \draw[hier] (COMB) to node[pos=.45,edgeLabel]{\small\cite{Olejar_Szabari.2023}}(SYDEF);
    \draw[hier] (SYDEF) to node[pos=.45,edgeLabel]{\small\ref{lemma:mon_subset_sydef_subset_com_subset_2com}}(LCOM);
    \draw[hier] (SYDEF) to node[pos=.45,edgeLabel]{\small\cite{Olejar_Szabari.2023}}(RCOM);
    \draw[hier] (SYDEF) to node[edgeLabel]{\small\cite{Olejar_Szabari.2023}}(PS);
  \end{tikzpicture}}
  \caption{Resulting hierarchy of subregular language families}
  \label{fig:lang_erg_1}
  \end{figure}

  An arrow from a node $X$ to a node~$Y$ stands for the proper inclusion $X \subset Y$. 
  If two families are not connected by a directed path, then they are incomparable. 
  An edge label refers to the paper where the proper inclusion has been shown (in some cases, it might be that it is not the 
  first paper where the respective inclusion has been mentioned, since it is so obvious that it was not emphasized in a 
  publication) or the lemma of this paper where the proper inclusion will be shown.

  In the literature, it is often said that two languages are equivalent if they are equal or differ
  at most in the empty word. Similarly, two families can be regarded to be equivalent if they differ 
  only in the languages $\emptyset$ or $\{\lambda\}$. Therefore, the set $\STAR$ of all star languages is sometimes regarded as a proper subset of the set $\COM$ of all (left-, right-, or two-sided) comet languages although $\{\lambda\}$ belongs to the family $\STAR$ but not to $\LCOM$, $\RCOM$ 
  or $\TCOM$. We regard $\STAR$ and $\STAR\setminus\{\{\lambda\}\}$ as different. Then, the family
  $\STAR$ is incomparable to $\LCOM$, $\RCOM$, and $\TCOM$, as we will later show.

For space reasons, we give the following observation without a proof.

  \begin{lemma}\label{lemma:lcom_rcom}
  Whenever a language $L$ is a right-sided comet then its reversal $L^R$ is a left-sided comet language and vice versa.
  \end{lemma}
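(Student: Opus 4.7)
The plan is to reduce the statement to the standard identities for reversal of regular language operations, namely $(AB)^R = B^RA^R$ and $(A^*)^R = (A^R)^*$, together with the fact that the class of regular languages is closed under reversal.

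First I would take an arbitrary right-sided comet $L = G^*H$ with $H$ regular and $G \notin \{\emptyset, \{\lambda\}\}$. Applying the two identities above gives
\[
L^R = (G^*H)^R = H^R (G^*)^R = H^R (G^R)^*.
\]
Since regular languages are closed under reversal, both $H^R$ and $G^R$ are regular. It therefore only remains to verify that $G^R \notin \{\emptyset, \{\lambda\}\}$; this is the only non-automatic point. But reversal fixes both of these sets ($\emptyset^R = \emptyset$ and $\{\lambda\}^R = \{\lambda\}$), so $G^R \in \{\emptyset, \{\lambda\}\}$ would force $G \in \{\emptyset, \{\lambda\}\}$, contradicting the hypothesis. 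Hence $L^R = H^R (G^R)^*$ has the form required of a left-sided comet language.

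For the converse direction, I would take a left-sided comet $L = EG^*$ with $E$ regular and $G \notin \{\emptyset, \{\lambda\}\}$ and compute
\[
L^R = (EG^*)^R = (G^*)^R E^R = (G^R)^* E^R,
\]
which, by the same argument about $G^R$, is a right-sided comet language. Thus reversal provides a bijection between the two classes, establishing the claim.

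Since every step is a direct application of closure properties and the two displayed identities, I do not anticipate any genuine obstacle; the only point that requires a brief argument is the preservation of the side condition $G^R \notin \{\emptyset, \{\lambda\}\}$, and this follows immediately from the fact that reversal fixes $\emptyset$ and $\{\lambda\}$.
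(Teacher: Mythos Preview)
Your argument is correct and complete; the reversal identities $(AB)^R = B^RA^R$ and $(A^*)^R = (A^R)^*$ together with the observation that reversal fixes $\emptyset$ and $\{\lambda\}$ are exactly what is needed. The paper itself omits the proof for space reasons, so there is nothing to compare against, but your write-up is precisely the standard argument one would expect here.
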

 
  \begin{corollary}\label{cor:lcom_rcom}
  We have $\LCOM=\set{L^R}{L\in\RCOM}$ and $\RCOM=\set{L^R}{L\in\LCOM}$.
  \end{corollary}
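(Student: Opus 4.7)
The plan is to derive the corollary as a direct formal consequence of Lemma~\ref{lemma:lcom_rcom}, using only the fact that word reversal (and hence language reversal) is an involution: $(L^R)^R = L$ for every language $L$.

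For the first equality $\LCOM = \set{L^R}{L\in\RCOM}$, I would prove the two inclusions separately. The inclusion $\set{L^R}{L\in\RCOM} \subseteq \LCOM$ is an immediate instance of the first half of Lemma~\ref{lemma:lcom_rcom}: if $L \in \RCOM$, then $L^R \in \LCOM$. For the reverse inclusion $\LCOM \subseteq \set{L^R}{L\in\RCOM}$, I would take an arbitrary $M \in \LCOM$ and set $L := M^R$. By the ``vice versa'' half of Lemma~\ref{lemma:lcom_rcom}, $L = M^R \in \RCOM$, and since reversal is involutive, $L^R = (M^R)^R = M$, so $M$ lies in the set on the right.

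The second equality $\RCOM = \set{L^R}{L\in\LCOM}$ is proved in exactly the same way, with the roles of $\LCOM$ and $\RCOM$ swapped; alternatively, one can obtain it by applying reversal to both sides of the first equality and using involutivity.

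There is no real obstacle here: the corollary is a purely set-theoretic rewriting of the bidirectional statement of Lemma~\ref{lemma:lcom_rcom}, and the only fact beyond the lemma that is required is $(L^R)^R = L$, which is standard.
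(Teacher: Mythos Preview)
Your proposal is correct and is exactly the intended derivation: the paper states this as an immediate corollary of Lemma~\ref{lemma:lcom_rcom} without giving any explicit argument, and your use of the involutivity of reversal together with both directions of that lemma is precisely how one unpacks the corollary.
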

  
  We now present some languages which will serve later as witness languages for proper inclusions or
  incomparabilities.
  
  
  \begin{lemma}\label{lemma:star_o_2com}
  The language $L = \{\lambda\}$ is in $\STAR\setminus \TCOM.$
  \end{lemma}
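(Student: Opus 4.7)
The plan is to verify the two memberships separately, both of which follow almost immediately from the definitions and from Lemma~\ref{lemma:2com_unendlich_oder_leereSprache}.

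First, I would show that $\{\lambda\} \in \STAR$. By the definition of star languages, it suffices to exhibit a regular language $H$ with $H^* = \{\lambda\}$. The choice $H = \emptyset$ works: by the standard convention on Kleene closure, $\emptyset^* = \{\lambda\}$. Hence $\{\lambda\}$ is a star language.

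Second, I would show that $\{\lambda\} \notin \TCOM$. This is where I would appeal to Lemma~\ref{lemma:2com_unendlich_oder_leereSprache}, which states that every two-sided comet language is either infinite or empty. The language $\{\lambda\}$ contains exactly one word, so it is finite and non-empty; therefore it cannot be a two-sided comet language.

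Combining the two points yields $L = \{\lambda\} \in \STAR \setminus \TCOM$. There is no real obstacle here; the only subtlety is the convention $\emptyset^* = \{\lambda\}$ that justifies $\{\lambda\}$ being a star language despite $\emptyset$ itself not qualifying as a comet tail (indeed, in the definitions of $\LCOM$, $\RCOM$, and $\TCOM$, the iterated component is explicitly forbidden to equal $\emptyset$ or $\{\lambda\}$, which is precisely what prevents $\{\lambda\}$ from slipping into $\TCOM$ via a trivial representation like $\{\lambda\}\{\lambda\}^*\{\lambda\}$).
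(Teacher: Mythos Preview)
Your proof is correct and follows essentially the same approach as the paper: show $\{\lambda\}$ is a star language via an explicit witness $H$, then invoke Lemma~\ref{lemma:2com_unendlich_oder_leereSprache} to exclude it from $\TCOM$. The only cosmetic difference is that the paper picks $H=\{\lambda\}$ (so $\{\lambda\}^*=\{\lambda\}$) whereas you pick $H=\emptyset$; both choices are valid.
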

  \begin{proof}
  The language $L$ is a star language since $L = H^*$  with $H = \{\lambda\}$. 
  According to Lemma \ref{lemma:2com_unendlich_oder_leereSprache}, a two-sided comet language is either 
  infinite or the empty language. Hence, $L$ is not a two-sided comet.
  \end{proof}
   
  \begin{lemma}\label{lemma:star_com_o_ps}
  Let $L = \set{a^{2n}}{n \geq 0}$. Then, it holds $L\in(\STAR\cap\LCOM\cap\RCOM)\setminus \PS$.
  \end{lemma}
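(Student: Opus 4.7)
The plan is to verify the three membership claims separately and then exhibit a single witness word that refutes the power-separating property for every choice of $m$.

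For the three positive memberships, I would simply display the canonical representation $L=\{aa\}^*$. This immediately gives $L\in\STAR$ via $H=\{aa\}$. For $L\in\LCOM$, I would take $E=\{\lambda\}$ and $G=\{aa\}$, so that $L=EG^*$ with $G\notin\{\emptyset,\{\lambda\}\}$. For $L\in\RCOM$, I would take $G=\{aa\}$ and $H=\{\lambda\}$, so that $L=G^*H$. All three verifications are one-line computations, so no real obstacle arises here.

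The main work is showing $L\notin\PS$. I would argue by contradiction: assume some $m\geq 1$ witnesses that $L$ is power-separating, and then test the definition on the word $x=a$. The set $J_a^m=\{a^n\mid n\geq m\}$ contains, for any $m$, both even-length powers (e.g.\ $a^{2m}$) and odd-length powers (e.g.\ $a^{2m+1}$). The even ones lie in $L$ while the odd ones do not, so $J_a^m\cap L\neq\emptyset$ and $J_a^m\not\subseteq L$, contradicting the defining dichotomy of $\PS$. Since this holds for every $m\geq 1$, no such $m$ exists.

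I do not anticipate any real difficulty; the only subtlety is making sure to choose $x=a$ rather than $x=aa$, because the latter would yield $J_{aa}^m\subseteq L$ and not reveal the counterexample. The proof is essentially a direct application of the definitions.
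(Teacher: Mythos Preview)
Your proposal is correct and matches the paper's own proof essentially line for line: the paper also uses the representation $L=\{aa\}^*$ with $E=H=\{\lambda\}$ and $G=\{aa\}$ for the three memberships, and refutes power-separation via the same witness $x=a$, pointing to $a^{2m}\in J_a^m\cap L$ and $a^{2m+1}\in J_a^m\setminus L$.
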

  \begin{proof}
  Let $G = \{aa\}$ and $E=H=\{\lambda\}$.
  The language $L$ can be expressed as $L = G^* = EG^* = G^*H$. Therefore, $L\in\STAR\cap\LCOM\cap\RCOM$.
  
  Assume that $L\in\PS$. Then, there is a natural number $m\geq 1$ such that, for any word~$x\in \{a\}^*$, 
  either~$J_x^m \cap L = \emptyset$ or~$J_x^m\subseteq L$ where $J_x^m = \set{ x^n}{n\geq m}$.
  For any natural number $m\geq 1$, we have with the word~$x=a$ the set $J_a^m = \set{ a^n}{n\geq m}$. Since
  $a^{2m}\in J_a^m \cap L$, the intersection is not empty. But, since~$a^{2m+1}\in J_a^m\setminus L$, it neither
  holds $J_a^m\subseteq L$. Hence, the language $L$ is not power-separating.
  \end{proof}
  
  \begin{lemma}\label{lemma:star_com_o_circ}
  Let $L = \{ab\}^*$. Then, it holds $L \in (\STAR\cap\LCOM\cap\RCOM)\setminus \CIRC.$
  \end{lemma}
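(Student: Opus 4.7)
The plan is to verify the three positive memberships separately, each by exhibiting an explicit witness representation, and then to refute circularity with a single counterexample word.

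First I would write $L=\{ab\}^*$ in the form $H^*$ with $H=\{ab\}$ to obtain $L\in\STAR$. For the comet memberships, I would set $G=\{ab\}$ (which satisfies $G\notin\{\emptyset,\{\lambda\}\}$) and $E=H=\{\lambda\}$. Then $L=EG^*=\{\lambda\}\{ab\}^*=\{ab\}^*$ shows $L\in\LCOM$, and symmetrically $L=G^*H=\{ab\}^*\{\lambda\}=\{ab\}^*$ shows $L\in\RCOM$. Thus $L\in\STAR\cap\LCOM\cap\RCOM$.

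For $L\notin\CIRC$, I would pick the word $w=ab\in L$ and observe that its non-trivial circular shift $ba$ is not of the form $(ab)^n$ for any $n\geq 0$, hence $ba\notin L$. Since a circular language must contain all circular shifts of each of its words, $L$ cannot be circular.

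There is no real obstacle here; the only point that deserves a sentence of care is verifying the side condition $G\notin\{\emptyset,\{\lambda\}\}$ in the definitions of $\LCOM$ and $\RCOM$, which is immediate for $G=\{ab\}$. Everything else reduces to reading off the definitions.
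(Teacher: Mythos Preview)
Your proof is correct and follows essentially the same approach as the paper: the same witness $G=\{ab\}$ with $E=H=\{\lambda\}$ for the three positive memberships, and the same counterexample $ab\in L$ but $ba\notin L$ for non-circularity. The only cosmetic difference is that the paper handles all three memberships in one line via $L=G^*=EG^*=G^*H$, whereas you spell them out separately.
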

  \begin{proof}
  Let $G = \{ab\}$ and $E=H=\{\lambda\}$.
  The language $L$ can be expressed as $L = G^* = EG^* = G^*H$. Therefore, $L \in \STAR\cap\LCOM\cap\RCOM$.
  
  Assume that the language $L$ is circular. Then, the word $ba$ would belong to it because $ab\in L$ but it does not.
  Hence, $L\notin \CIRC$.
  \end{proof}
  
  \begin{lemma}[\cite{Olejar_Szabari.2023}]\label{lemma:sydef_o_sf}
  Let $V=\{a,b\}$ be an alphabet, $H=\{ba\}\{b\}^*(\{aa\}\{b\}^*)^*$ a regular language over~$V$, and $L=V^*H$. 
  Then, $L\in\SYDEF\setminus\SF$.
  \end{lemma}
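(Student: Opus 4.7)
The plan is to verify the two claims separately. For $L \in \SYDEF$, I would simply write $L = V^*H = \{\lambda\}\cdot V^*\cdot H$; since both $\{\lambda\}$ and $H$ are regular languages over $V$, this matches the defining form $EV^*H$ directly, with $E = \{\lambda\}$.

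For $L \notin \SF$, the approach uses the equivalence $\SF = \NC$ from \cite{McNaughton_Papert.1971} and shows that $L$ is not non-counting for any $k \geq 1$, with witnesses of the form $ba^j$. The first step is the observation that $ba^j \in L$ if and only if $ba^j \in H$: every word of $H$ begins with the letter $b$, while every proper suffix of $ba^j$ begins with $a$ or is empty, so $ba^j$ itself is the only suffix that could lie in $H$.

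The main structural step is a parity analysis of $H$. Unpacking the regular expression $\{ba\}\{b\}^*(\{aa\}\{b\}^*)^*$, every word of $H$ has the form $b \cdot a \cdot b^{k_0} \cdot aa \cdot b^{k_1} \cdots aa \cdot b^{k_m}$ with $m \geq 0$ and $k_i \geq 0$. If $j$ denotes the smallest index with $k_j \geq 1$ (taking $j = m$ if all $k_i$ vanish), then the first maximal block of $a$'s has length exactly $1 + 2j$, which is odd. Conversely, any odd length $1 + 2l$ is realized by choosing $m = l$ and $k_0 = \cdots = k_{m-1} = 0$ in the decomposition, so $ba^{2l+1} \in H$, while no word of the form $ba^{2l}$ with $l \geq 1$ belongs to $H$.

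Combining these facts, for an arbitrary $k \geq 1$ I would pick $x = b$, $y = a$, $z = \lambda$. Then $xy^k z = ba^k$ and $xy^{k+1}z = ba^{k+1}$ have $a$-counts of opposite parity, so exactly one of them lies in $L$. This contradicts the non-counting condition at the chosen $k$, and since $k$ was arbitrary, $L \notin \NC = \SF$. The only nontrivial obstacle is the parity analysis of the first $a$-block in words of $H$; once that is in place, the suffix reduction and the pumping choice are immediate.
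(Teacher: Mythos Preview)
Your argument is correct. For the membership $L\in\SYDEF$ you do exactly what the paper does: write $L=\{\lambda\}V^*H$ and observe that this matches the defining form.

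For $L\notin\SF$, however, the paper gives no argument of its own; it simply defers to the cited source \cite{Olejar_Szabari.2023}. You instead supply a self-contained proof via the equivalence $\SF=\NC$: reduce membership of $ba^j$ in $L$ to membership in $H$ (using that every word of $H$ begins with $b$ while no proper suffix of $ba^j$ does), show by a parity analysis of the expression $\{ba\}\{b\}^*(\{aa\}\{b\}^*)^*$ that $ba^j\in H$ if and only if $j$ is odd, and then exhibit for every $k\ge 1$ the triple $x=b$, $y=a$, $z=\lambda$ witnessing the failure of the non-counting condition. This is more informative than the paper's treatment and makes the lemma independent of the external reference; the parity step (the first maximal $a$-block in any word of $H$ has odd length $1+2j$) is the only non-mechanical observation, and you handle it correctly.
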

  \begin{proof}
  The language $L$ can be represented as $\{\lambda\}V^*H$. So, the language is symmetric definite.
  As shown in \cite{Olejar_Szabari.2023}, the language is not star-free.
  \end{proof}
  
  
  \begin{lemma}\label{lemma:lcom_o_rcom}
  Let $L_1=\set{a^nb}{n \geq 0}$ and $L_2=L_1^R$. Then, $L_1\in\RCOM\setminus \LCOM$
  and $L_2\in\LCOM \setminus \RCOM$.
  \end{lemma}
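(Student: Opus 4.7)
The plan is to handle $L_1$ directly and then deduce the statement about $L_2$ via the reversal duality given by Corollary~\ref{cor:lcom_rcom}.

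For the membership $L_1\in\RCOM$, I would simply observe that $L_1=\{a\}^*\{b\}=G^*H$ with $G=\{a\}$ and $H=\{b\}$, where $G\notin\{\emptyset,\{\lambda\}\}$.

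For $L_1\notin\LCOM$, I would argue by contradiction. Suppose $L_1=EG^*$ for a regular language $E$ and a regular language $G\notin\{\emptyset,\{\lambda\}\}$. Taking zero copies from $G^*$ gives $E\subseteq L_1$, so every word of $E$ has the form $a^nb$ for some $n\ge 0$; in particular $E$ is non-empty, pick $e=a^n b\in E$. Since $G\neq\emptyset$ and $G\neq\{\lambda\}$, there is a non-empty word $w\in G$, and then $ew=a^n bw\in L_1$. But every word of $L_1$ contains exactly one $b$ and ends with it, so the suffix $bw$ must equal $a^{|w|}b$; comparing first letters forces $b=a$, which is the contradiction I want. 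The key step — and the one small obstacle — is making precise that the factor $E$ necessarily sits inside $L_1$ and that $G$ must therefore contain a non-empty word; once that is in place, the single-$b$ structure of $L_1$ kills any candidate decomposition.

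For $L_2$, observe that $L_2=L_1^R=\{b\}\{a\}^*$, which is of the form $EG^*$ with $E=\{b\}$ and $G=\{a\}\notin\{\emptyset,\{\lambda\}\}$, so $L_2\in\LCOM$. Finally, Corollary~\ref{cor:lcom_rcom} gives the equivalence $L_2\in\RCOM\iff L_2^R=L_1\in\LCOM$; since we have just shown $L_1\notin\LCOM$, it follows that $L_2\notin\RCOM$, completing the proof.
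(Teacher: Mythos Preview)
Your proof is correct and follows essentially the same route as the paper: write $L_1=\{a\}^*\{b\}$ for membership in $\RCOM$, argue by contradiction that a decomposition $L_1=EG^*$ is impossible because $E\subseteq L_1$ and appending a non-empty $w\in G$ to some $e\in E$ produces a word with the wrong $b$-pattern, and then transfer everything to $L_2$ via Corollary~\ref{cor:lcom_rcom}. The only cosmetic difference is that the paper phrases the contradiction as ``some word of $I$ ends in $b$, hence $L_1$ would contain a word with two $b$'s'', whereas you compare $bw$ with $a^{|w|}b$ letter by letter; both are the same observation.
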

  \begin{proof}
  The language $L_1$ can be expressed as $\{a\}^*\{b\}$, hence, in the form $L_1=G^*H$ with $G=\{a\}$ and~$H=\{b\}$.
  Thus, $L_1\in\RCOM$.
  
  Assume that $L_1\in\LCOM$. Then, two languages $E$ and $I$ would exist such that $L_1 = EI^*$.
  Since $b$ is a suffix of every word in $L_1$, the letter $b$ is also a suffix of a word in $I$.
  But then $L_1$ would also contain a word with more than one $b$ which is a contradiction.
  Hence, $L_1\notin\LCOM$.
  
  By Corollary~\ref{cor:lcom_rcom}, it follows that $L_2\in\LCOM\setminus\RCOM$.
  \end{proof}
  
  \begin{lemma}\label{lemma:fin_suf_comm_o_star_2com}
  The language $L = \{\lambda,a\}$ belongs to the set $(\FIN\cap\SUF\cap\COMM) \setminus (\STAR\cup\TCOM)$.
  \end{lemma}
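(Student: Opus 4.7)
The plan is to verify the three positive memberships directly from the definitions and to rule out $\STAR$ and $\TCOM$ by two separate short arguments, each exploiting the fact that $L$ has exactly two elements.

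First, $L$ is trivially finite, so $L\in\FIN$. For suffix-closedness, I would enumerate: the only suffixes of $\lambda$ and of $a$ are $\lambda$ and $a$, both of which lie in $L$; hence $L\in\SUF$. For commutativity, every word in $L$ has length at most one, so every permutation of a word in $L$ is the word itself, and thus $L\in\COMM$.

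For $L\notin\STAR$, the plan is to use that $H^{*}$ is closed under concatenation. If we had $L=H^{*}$ for some regular $H\subseteq\{a\}^{*}$, then from $a\in L$ we would obtain $aa=a\cdot a\in H^{*}=L$, contradicting $L=\{\lambda,a\}$. For $L\notin\TCOM$, the plan is to invoke Lemma~\ref{lemma:2com_unendlich_oder_leereSprache}: every two-sided comet language is either empty or infinite, whereas $L$ has exactly two elements, so $L$ cannot be a two-sided comet.

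None of these steps looks technically hard; the only subtle point is making clear in the $\STAR$ argument why one cannot avoid the closure under concatenation (i.e., that $H$ must be a subset of $\{a\}^{*}$ because $L\subseteq\{a\}^{*}$, and so any element of $H$ concatenated with itself stays in $L$ by the star property). Everything else follows immediately from the definitions and the already established Lemma~\ref{lemma:2com_unendlich_oder_leereSprache}.
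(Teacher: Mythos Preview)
Your proof is correct and essentially matches the paper's approach. The only minor difference is in ruling out $\STAR$: the paper invokes Lemma~\ref{lemma:star_unendlich_oder_leerwort} (every star language is infinite or equals $\{\lambda\}$), whereas you argue directly that $H^{*}$ is closed under concatenation, so $a\in H^{*}$ forces $aa\in H^{*}$. Your argument is in fact a special case of that lemma's proof, so the two are equivalent; note also that the closure-under-concatenation step already works without the detour through $H\subseteq\{a\}^{*}$, so that remark in your last paragraph is unnecessary.
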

  \begin{proof}
  All suffixes of all words of the language $L$ belong to $L$. Thus, $L$ is suffix-closed. Furthermore, the language is finite
  but not empty and commutative.
  According to Lemma~\ref{lemma:2com_unendlich_oder_leereSprache}, each two-sided comet language is either empty or infinite. 
  Hence, $L$ is not a two-sided comet language. According to Lemma~\ref{lemma:star_unendlich_oder_leerwort},
  each star language is either infinite or contains only the empty word. Hence, $L$ is not a star language either.
  \end{proof}

  We now prove some proper inclusions.

  \begin{lemma}\label{lemma:mon_subset_star_subset_uf}
  We have the proper inclusions $\MON \subset \STAR\subset \UF$.
  \end{lemma}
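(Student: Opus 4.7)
The plan is to verify the two inclusions separately and then exhibit a witness language for each proper part. Throughout, I would use Lemma~\ref{lemma:star_unendlich_oder_leerwort} to separate $\STAR$ from $\UF$, and Nagy's decomposition of regular languages into finitely many union-free ones (cited in the proof of Lemma~\ref{lemma:2com_zerlegung}) for the upper inclusion.

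For $\MON\subseteq\STAR$, I would take $L=V^*\in\MON$ and simply observe that, viewing $V$ as the regular language consisting of its one-letter words, we have $L=V^*=H^*$ with $H=V$, so $L\in\STAR$. To show this inclusion is proper, I would pick $L=\{aa\}^*$. Then $L=H^*$ with $H=\{aa\}$ gives $L\in\STAR$. However, $L$ cannot equal $U^*$ for any alphabet $U$: such a $U^*$ contains every single letter of $U$, but $a\notin L$ while any alphabet $U$ over which $L$ is considered must contain $a$. Hence $L\notin\MON$.

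For $\STAR\subseteq\UF$, let $L=H^*$ for some regular $H$. By Nagy's result, write $H=H_1\cup H_2\cup\cdots\cup H_n$ with each $H_i$ union-free. Using the standard identity $(A\cup B)^*=(A^*B^*)^*$, iterated, I would rewrite
\[H^*=(H_1\cup H_2\cup\cdots\cup H_n)^*=(H_1^*H_2^*\cdots H_n^*)^*.\]
Since each $H_i$ is union-free, each $H_i^*$ is union-free, their concatenation is union-free, and one more Kleene star keeps it union-free; hence $L\in\UF$. The only delicate point here is justifying the iterated identity, which is routine (one inclusion is immediate; the other groups consecutive factors of the same type). For the proper part of this inclusion, I would take the witness $L=\{a\}$: it is described by the union-free regular expression $a$, so $L\in\UF$; but by Lemma~\ref{lemma:star_unendlich_oder_leerwort} every star language is either infinite or equal to $\{\lambda\}$, and $L$ is neither, so $L\notin\STAR$.

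I do not expect any real obstacle: both inclusions reduce to a direct syntactic rewriting, and both witnesses are tiny. The only step that requires a moment of care is the proof that $(H_1\cup\cdots\cup H_n)^*$ can be re-expressed without union; once the identity $(A\cup B)^*=(A^*B^*)^*$ is in hand, the rest is immediate induction on $n$.
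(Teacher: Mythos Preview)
Your proof is correct and follows essentially the same route as the paper: the same witness $\{aa\}^*=\{a^{2n}:n\geq0\}$ for the first properness, and the same witness $\{a\}$ via Lemma~\ref{lemma:star_unendlich_oder_leerwort} for the second. The only cosmetic differences are that the paper cites \cite{Nagy.2019} directly for the fact that $H^*$ is union-free (rather than re-deriving it through the identity $(A\cup B)^*=(A^*B^*)^*$), and justifies $\{aa\}^*\notin\MON$ by invoking Lemma~\ref{lemma:star_com_o_ps} instead of your direct one-letter argument.
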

  \begin{proof}
  We first prove the relation $\MON \subset \STAR$: Any monoidal language can be expressed as $L = V^*$ for some alphabet $V$. 
  Since $V$ is a regular language, $L$ is a star language. A witness language for the properness is the 
  language $L = \set{a^{2n}}{n \geq 0}$ as shown in Lemma~\ref{lemma:star_com_o_ps}.
  
  We now prove the relation $\STAR\subset \UF$: Every language $H^*$ for some regular language $H$ is union-free
  according to~\cite{Nagy.2019}.
  A witness language for the properness is $L = \{a\}$ which is union-free but,
  according to Lemma~\ref{lemma:star_unendlich_oder_leerwort}, not a star language since 
  it is neither infinite nor equal to $\{\lambda\}$.
  \end{proof}

  \begin{lemma}\label{lemma:mon_subset_sydef_subset_com_subset_2com}
  We have the proper inclusions $\MON\subset\SYDEF\subset \cC\subset\TCOM$ for $\cC\in\{\LCOM,\RCOM\}$.
  \end{lemma}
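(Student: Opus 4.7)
My plan is to handle the chain $\MON \subseteq \SYDEF \subseteq \cC \subseteq \TCOM$ in three pieces, with only the middle inclusion requiring real work. I would first dispose of the outer inclusions together with their properness, and then focus on the substantive middle step.

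For $\MON \subseteq \SYDEF$, any monoidal language $V^*$ equals $\{\lambda\}\cdot V^*\cdot\{\lambda\}$, matching the symmetric-definite pattern. For properness, a combinational language such as $L = \{a,b\}^*\{a\}$ is in $\SYDEF$ (take $E = \{\lambda\}$, $H = \{a\}$) but is not monoidal over its alphabet $\{a,b\}$ since $b \notin L$. For $\cC \subseteq \TCOM$, the identities $EG^* = EG^*\{\lambda\}$ and $G^*H = \{\lambda\}G^*H$ give the inclusions directly. Properness comes from Lemma~\ref{lemma:lcom_o_rcom}: the language $L_1 = \{a\}^*\{b\}$ is trivially a two-sided comet but lies outside $\LCOM$, and symmetrically $L_2 = \{b\}\{a\}^* \in \TCOM \setminus \RCOM$.

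The substantive step is $\SYDEF \subseteq \LCOM$ (the $\RCOM$ direction is entirely symmetric). Given $L = EV^*H$, naive attempts to rewrite $V^*H$ as $G^*$ fail: the trailing $H$ cannot be folded into $V^*$ in general. My key observation is the closure identity $L \cdot H^* = L$. To see this, note that for any two words $h_1, h_2 \in H$, the concatenation $h_1 h_2$ lies in $V^*H$ (use $h_1 \in V^*$ as the $V^*$-part and $h_2$ as the $H$-part), so $HH \subseteq V^*H$ and therefore $L \cdot H = EV^* \cdot HH \subseteq EV^* \cdot V^*H = L$; induction extends this to $L \cdot H^n \subseteq L$ for all $n$, and the reverse inclusion is trivial. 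Now, if $H \notin \{\emptyset, \{\lambda\}\}$, I set $E' = L$ (regular since $L$ itself is regular) and $G = H$, obtaining $L = L \cdot H^* = E'G^* \in \LCOM$. The edge cases are routine: for $H = \{\lambda\}$ we get $L = EV^*$ and take $G = V$ (non-empty, and unequal to $\{\lambda\}$ because alphabets are non-empty by convention); for $L = \emptyset$ any valid $G$ with $E' = \emptyset$ suffices. For properness, I would use $L_2 = \{b\}\{a\}^* \in \LCOM$ and argue that it is not symmetric definite: if $L_2 = EV^*H$ with both $E, H$ non-empty, then $\{e\}V^*\{h\} \subseteq L_2$ for some $e, h$, and inspecting the word $ebh \in L_2 \subseteq \{b\}\{a\}^*$ forces $e = \lambda$; but then $V^* h \subseteq L_2$ is falsified by the word $ah$, which begins with the letter $a$. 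Symmetrically, $L_1$ witnesses the properness of $\SYDEF \subsetneq \RCOM$.

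The main obstacle, in my view, is spotting the closure identity $L \cdot H^* = L$: without it, one keeps trying to absorb $H$ into the middle $V^*$, which does not work. Once the identity is visible, the choice $E' = L$ with $G = H$ is immediate, the edge cases are mechanical, and the properness witnesses come directly from Lemma~\ref{lemma:lcom_o_rcom}.
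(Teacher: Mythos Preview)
Your proof is correct, and the central step---the inclusion $\SYDEF\subseteq\LCOM$---takes a genuinely different route from the paper. The paper cites \cite{Olejar_Szabari.2023} for $\SYDEF\subset\RCOM$ and then obtains the $\LCOM$ case by reversal (using closure of $\SYDEF$ under reversal together with Lemma~\ref{lemma:lcom_rcom}); properness is established indirectly via the witness $\{a^{2n}:n\ge 0\}$ and the inclusion $\SYDEF\subset\PS$. You instead give a self-contained argument: the closure identity $L\cdot H^*=L$ for $L=EV^*H$ lets you write $L=L\cdot H^*$ directly as a left-sided comet (with the edge cases $H=\emptyset$ and $H=\{\lambda\}$ handled separately), and your properness witness $\{b\}\{a\}^*$ is dispatched by an explicit elementary contradiction rather than a detour through $\PS$. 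Your approach is more transparent and avoids external references; the paper's approach is terser precisely because it outsources the work. Your simpler witnesses for $\MON\subsetneq\SYDEF$ (namely $\{a,b\}^*\{a\}$) and for $\cC\subsetneq\TCOM$ are also cleaner than the paper's choices, which reuse languages introduced for other purposes.
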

  \begin{proof*}
  \begin{enumerate}
  \item $\MON\subset\SYDEF$: Any monoidal language can be expressed as $L = V^*$ for some alphabet $V$ and, with $E=H=\{\lambda\}$
  also in the form $EV^*H$. Hence, the language $L$ is symmetric definite. A witness language for the properness 
  is $\{a,b\}^*\{ba\}\{b\}^*(\{aa\}\{b\}^*)^*$ from Lemma~\ref{lemma:sydef_o_sf} 
  (and originally~\cite{Olejar_Szabari.2023}).
  
  \item $\SYDEF\subset \RCOM$: This relation was proved in \cite{Olejar_Szabari.2023}.
  
  \item $\SYDEF\subset \LCOM$: The family $\SYDEF$ is closed under reversal. For any symmetric definite language $L$,
  its reversal $L^R$ also belongs to the family $\SYDEF$ and, by \cite{Olejar_Szabari.2023}, is also a right-sided
  comet language. By Lemma~\ref{lemma:lcom_rcom}, the reversal of the language $L^R$, hence $L$ itself, is a left-sided 
  comet language. A witness language for the properness is the language $L = \set{a^{2n}}{n \geq 0}$
  according to Lemma~\ref{lemma:star_com_o_ps} where it is shown that $L\in\LCOM\setminus \PS$
  and according to \cite{Olejar_Szabari.2023} where the inclusion~$\SYDEF\subset\PS$ is proved.
  
  \item $\RCOM\subset \TCOM$: This relation was proved in \cite{Olejar_Szabari.2023}.
  
  \item $\LCOM\subset \TCOM$: Any left-sided comet language $L=EG^*$ is also a two-sided comet $EG^*H$ with~$H=\{\lambda\}$.
  In Lemma~\ref{lemma:lcom_o_rcom}, it was shown that the language $L=\set{a^nb}{n \geq 0}$ is a right-sided comet
  language but not a left-sided comet. By \cite{Olejar_Szabari.2023}, it is a two-sided comet language.\hspace*{\fill}$\Box$
  \end{enumerate}
  \end{proof*}

  We now prove the incomparability relations mentioned in Figure~\ref{fig:lang_erg_1} which have not been proved
  earlier. These are the relations regarding the families $\STAR$, $\SYDEF$, $\LCOM$, $\RCOM$, and $\TCOM$.

\begin{lemma}\label{lemma:star_uf_unvergleichbarzu_comb_2com}
  Each of the families $\STAR$ and $\UF$ is incomparable to each of the 
  families $\COMB$, $\SYDEF$, $\RCOM$, $\LCOM$, and $\TCOM$.
\end{lemma}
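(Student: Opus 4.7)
The plan is to establish, for each family $Y \in \{\COMB, \SYDEF, \RCOM, \LCOM, \TCOM\}$, the four non-inclusions $\STAR \not\subseteq Y$, $\UF \not\subseteq Y$, $Y \not\subseteq \STAR$, and $Y \not\subseteq \UF$. Rather than treating each pair separately, I will use just two universal witness languages, namely $\{\lambda\}$ and $\{a,b\}^+$, that jointly dispose of all ten cases.

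For the direction $\STAR \not\subseteq Y$ and $\UF \not\subseteq Y$, the witness is $L = \{\lambda\}$, which belongs to both $\STAR$ and $\UF$ via the regular expression $\emptyset^*$. It remains to rule out $L \in Y$ for each of the five families. For $Y \in \{\LCOM, \RCOM, \TCOM\}$ this follows from Lemma~\ref{lemma:2com_unendlich_oder_leereSprache} together with the one-sided analogues (which follow by the same argument, since $G \notin \{\emptyset, \{\lambda\}\}$ already forces $G^*$ to be infinite): every such language is infinite or empty, while $\{\lambda\}$ is neither. For $Y = \COMB$, any representation $V^*X$ with $X \subseteq V$ is either empty (when $X = \emptyset$) or contains a word of length one (when $X \neq \emptyset$), so it cannot equal $\{\lambda\}$. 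For $Y = \SYDEF$, a representation $EV^*H$ over a non-empty alphabet $V$ is either empty (when $E$ or $H$ is empty) or infinite (the infinite factor $V^*$ is surrounded by non-empty sets), so it again cannot equal $\{\lambda\}$.

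For the reverse inclusions $Y \not\subseteq \STAR$ and $Y \not\subseteq \UF$, the single witness will be $L = \{a,b\}^+$ over the alphabet $V = \{a,b\}$. It belongs to all five families: as $V^*V$ it lies in $\COMB$; as $V V^* \{\lambda\}$ it lies in $\SYDEF$ and $\TCOM$; as $V V^*$ it lies in $\LCOM$; and as $V^* V$ it lies in $\RCOM$ (the side condition $V \notin \{\emptyset, \{\lambda\}\}$ is clearly satisfied). Since every star language $H^*$ contains $\lambda$ while $\lambda \notin L$, we immediately obtain $L \notin \STAR$.

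The main technical step is showing $L \notin \UF$. I will establish, as an auxiliary lemma, that \emph{every non-empty union-free regular language has a unique shortest word}, by induction on the construction of a union-free regular expression: the atomic cases $\emptyset$ and $x \in V$ are immediate; for a concatenation $R_1 \cdot R_2$, the unique shortest element of $L(R_1) L(R_2)$ must be the concatenation of the unique shortest elements of $L(R_1)$ and $L(R_2)$; and for a Kleene closure $R_1^*$, the unique shortest element is $\lambda$. Since $L = \{a,b\}^+$ contains two distinct words of minimal length, namely $a$ and $b$, this auxiliary lemma forces $L \notin \UF$, completing all ten incomparabilities. I expect this structural induction to be the main obstacle, though it is conceptually routine.
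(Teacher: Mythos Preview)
Your proof is correct and follows essentially the same strategy as the paper. Both rely on the chains $\STAR\subset\UF$ and $\COMB\subset\SYDEF\subset\{\LCOM,\RCOM\}\subset\TCOM$, so that two witnesses suffice: one in $\STAR\setminus\TCOM$ and one in $\COMB\setminus\UF$. For the first, both the paper and you use $\{\lambda\}$ (the paper via Lemma~\ref{lemma:star_o_2com}). For the second, the paper simply cites the language $\{a,b,c\}^*\{a,b\}\in\COMB\setminus\UF$ from \cite{Holzer_Truthe.2015}, whereas you take the smaller witness $\{a,b\}^+=\{a,b\}^*\{a,b\}$ and supply a self-contained argument via the pleasant observation that every non-empty union-free language has a unique shortest word. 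Your inductive proof of that auxiliary fact is sound (the concatenation case goes through because a minimal-length word forces both factors to be minimal, hence unique by the induction hypothesis), so the only genuine difference is that you replace an external citation by a direct argument and a two-letter alphabet.
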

\begin{proof}
  Due to inclusion relations, it suffices to show that there are a language $L_1\in \STAR\setminus\TCOM$ 
  and a language $L_2\in\COMB\setminus\UF$.
  From Lemma~\ref{lemma:star_o_2com}, we get $L_1=\{\lambda\}$.
  From \cite{Holzer_Truthe.2015}, we take $L_2=\{a,b,c\}^*\{a,b\}$.
\end{proof}
  
\begin{lemma}\label{lemma:star_unvergleichbarzu_fin_ps}
  The language family $\STAR$ is incomparable to each of the 
  families $\FIN$, $\NIL$, $\DEF$, $\ORD$, $\NC$, $\SF$, $\PS$, and $\SUF$.
\end{lemma}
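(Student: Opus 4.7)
The plan is to exploit two witness languages that have already been furnished by earlier lemmas, together with the chain of inclusions $\FIN\subset\NIL\subset\DEF\subset\ORD\subset\NC=\SF\subset\PS$ displayed in Figure~\ref{fig:lang_erg_1}, plus the inclusion $\FIN\subseteq\SUF$ for the specific language we need (or we treat $\SUF$ directly). Incomparability of $\STAR$ with each family $X$ in the list amounts to showing $\STAR\not\subseteq X$ and $X\not\subseteq\STAR$, and it turns out that a single witness suffices on each side.

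For the direction $\STAR\not\subseteq X$, I would take $L_1=\set{a^{2n}}{n\geq 0}$. By Lemma~\ref{lemma:star_com_o_ps}, $L_1\in\STAR\setminus\PS$. Since $\PS$ contains all of $\FIN$, $\NIL$, $\DEF$, $\ORD$, $\NC$, and $\SF$, this single language simultaneously witnesses $\STAR\not\subseteq X$ for each $X\in\{\FIN,\NIL,\DEF,\ORD,\NC,\SF,\PS\}$. For the family $\SUF$, I would note separately that $aa\in L_1$ but $a\notin L_1$, so $L_1$ is not suffix-closed; hence also $\STAR\not\subseteq\SUF$.

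For the direction $X\not\subseteq\STAR$, I would use $L_2=\{\lambda,a\}$. By Lemma~\ref{lemma:fin_suf_comm_o_star_2com}, $L_2\in(\FIN\cap\SUF)\setminus\STAR$. Since $\FIN$ is contained in each of $\NIL$, $\DEF$, $\ORD$, $\NC$, $\SF$, and $\PS$, the language $L_2$ witnesses $X\not\subseteq\STAR$ for every $X\in\{\FIN,\NIL,\DEF,\ORD,\NC,\SF,\PS\}$, and directly for $X=\SUF$ as well.

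Combining the two sides yields the claimed incomparability for each $X$. There is essentially no obstacle: the hierarchy chain reduces the task to choosing extremal witnesses, and both are already available in the excerpt. The only minor care point is to remember that $L_1$ needs to be verified against $\SUF$ by hand, because $\SUF$ is not above $\PS$ in Figure~\ref{fig:lang_erg_1}; however, the verification $aa\in L_1,\ a\notin L_1$ is immediate.
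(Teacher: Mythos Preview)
Your proof is correct and follows essentially the same approach as the paper: the same two witness languages $L_1=\set{a^{2n}}{n\geq 0}$ and $L_2=\{\lambda,a\}$, combined with the inclusion chain, handle all cases. The only (harmless) difference is in your treatment of $\SUF$ for the direction $\STAR\not\subseteq\SUF$: you verify $L_1\notin\SUF$ by hand, remarking that $\SUF$ is ``not above $\PS$'' in the hierarchy, but in fact Figure~\ref{fig:lang_erg_1} does record $\SUF\subset\PS$, so the paper simply lets $L_1\in\STAR\setminus\PS$ cover $\SUF$ as well via that inclusion---your direct check is redundant but of course still valid.
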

\begin{proof}
  Due to inclusion relations, it suffices to show that there are a language $L_1\in \STAR\setminus\PS$, 
  a language~$L_2\in\FIN\setminus\STAR$, and a language $L_3\in\SUF\setminus\STAR$.
  As $L_1$, we obtain from Lemma~\ref{lemma:star_com_o_ps} the language~$L_1 = \set{a^{2n}}{n \geq 0}$.
  From Lemma~\ref{lemma:fin_suf_comm_o_star_2com}, we take $L_2 = L_3 = \{\lambda,a\}$.
\end{proof}

\begin{lemma}\label{lemma:star_unvergleichbarzu_comm_circ}
  The language family $\STAR$ is incomparable to the families $\CIRC$ and $\COMM$.
\end{lemma}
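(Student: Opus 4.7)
The plan is to exhibit two witness languages, reusing those constructed in earlier lemmas so that most membership claims are already established and only the circularity and commutativity aspects remain to be verified.

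For the direction $\STAR\not\subseteq\CIRC\cup\COMM$, I would take $L_1 = \{ab\}^*$. By Lemma~\ref{lemma:star_com_o_circ}, this language lies in $\STAR$ but not in $\CIRC$. The same $L_1$ also witnesses $\STAR\not\subseteq\COMM$: the word $ab$ belongs to $L_1$, but its permutation $ba$ does not, so $L_1$ is not commutative.

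For the direction $\CIRC\not\subseteq\STAR$ and $\COMM\not\subseteq\STAR$, I would take $L_2 = \{\lambda,a\}$. By Lemma~\ref{lemma:fin_suf_comm_o_star_2com}, this language is commutative and not in $\STAR$, which handles the $\COMM$ side. For circularity, I would observe that the only non-empty word in $L_2$ is the single letter $a$, whose sole circular shift is $a$ itself, so all circular shifts of words in $L_2$ remain in $L_2$; hence $L_2\in\CIRC\setminus\STAR$.

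No step is a real obstacle here: the $\STAR$-membership and non-membership of both witnesses have already been established in earlier lemmas, and the remaining verifications (commutativity and circularity of single-letter or short languages, and non-commutativity of $\{ab\}^*$) are immediate from the definitions.
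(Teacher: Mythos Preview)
Your proof is correct and uses exactly the same two witness languages as the paper, $L_1=\{ab\}^*$ and $L_2=\{\lambda,a\}$. The only cosmetic difference is that the paper invokes the inclusion $\COMM\subseteq\CIRC$ up front (so that $L_1\notin\CIRC$ automatically gives $L_1\notin\COMM$, and $L_2\in\COMM$ automatically gives $L_2\in\CIRC$), whereas you verify these two extra facts directly from the definitions; either way is fine.
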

\begin{proof}
  Due to inclusion relations, it suffices to show that there are a language $L_1\in \STAR\setminus\CIRC$ 
  and a language $L_2\in\COMM\setminus\STAR$.
  From Lemma~\ref{lemma:star_com_o_circ}, we have $L_1 = \{ab\}^*$.
  From Lemma~\ref{lemma:fin_suf_comm_o_star_2com}, we take again the language $L_2 = \{\lambda,a\}$.
\end{proof}

\begin{lemma}\label{lemma:lcom_unvergleichbarzu_rcom}
  The language families $\LCOM$ and $\RCOM$ are incomparable to each other.
\end{lemma}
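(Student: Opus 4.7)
The plan is to invoke Lemma~\ref{lemma:lcom_o_rcom} directly, since it already supplies witness languages in both directions. Recall that to establish incomparability of two families $\cF_1$ and $\cF_2$, one must exhibit a language in $\cF_1\setminus\cF_2$ and another in $\cF_2\setminus\cF_1$. Lemma~\ref{lemma:lcom_o_rcom} gives us both: the language $L_1=\set{a^nb}{n\geq 0}$ lies in $\RCOM\setminus\LCOM$, and its reversal $L_2=L_1^R$ lies in $\LCOM\setminus\RCOM$.

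So the proof amounts to one sentence plus the citation of that lemma, together with a brief restatement of why those two witnesses establish incomparability. There is no real obstacle here, since the structural arguments (the representation $L_1=\{a\}^*\{b\}$ showing membership in $\RCOM$, the suffix argument ruling out $\LCOM$ membership, and the symmetric situation for $L_2$ obtained via Corollary~\ref{cor:lcom_rcom}) have all been carried out inside Lemma~\ref{lemma:lcom_o_rcom} itself. Thus the only work left is to observe that the existence of these two languages directly contradicts both inclusions $\LCOM\subseteq\RCOM$ and $\RCOM\subseteq\LCOM$, which by definition means $\LCOM$ and $\RCOM$ are incomparable.
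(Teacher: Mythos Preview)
Your proposal is correct and mirrors the paper's own proof almost verbatim: both cite Lemma~\ref{lemma:lcom_o_rcom} for the witnesses $L_1=\set{a^nb}{n\geq 0}\in\RCOM\setminus\LCOM$ and $L_2=L_1^R\in\LCOM\setminus\RCOM$, and conclude incomparability from them.
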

\begin{proof}
  With the witness languages
  $L_1=\set{a^nb}{n \geq 0}\in\RCOM\setminus \LCOM$ and $L_2=L_1^R\in\LCOM \setminus \RCOM$,
  the statement follows from Lemma~\ref{lemma:lcom_o_rcom}.
\end{proof}

\begin{lemma}\label{lemma:sydef_2com_unvergleichbarzu_fin_nc}
  The language families $\SYDEF$, $\LCOM$, $\RCOM$, and $\TCOM$ are incomparable to each of the families $\FIN$,
  $\NIL$, $\DEF$, $\ORD$, $\NC$, and $\SF$.
\end{lemma}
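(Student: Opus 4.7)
The claim packages together $4 \times 6 = 24$ pairwise incomparabilities, but the inclusion structure makes it collapse to essentially two witness languages. My plan is to exploit the two chains already in place: on the comet side we have $\SYDEF \subset \LCOM, \SYDEF \subset \RCOM$, and $\LCOM, \RCOM \subset \TCOM$ (from Lemma~\ref{lemma:mon_subset_sydef_subset_com_subset_2com}); on the other side we have the classical chain $\FIN \subset \NIL \subset \DEF \subset \ORD \subset \NC = \SF$. Therefore, to establish all 24 incomparabilities simultaneously, it is enough to produce one language in $\SYDEF \setminus \SF$ and one language in $\FIN \setminus \TCOM$.

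For the first witness I would invoke Lemma~\ref{lemma:sydef_o_sf}: the language $L = \{a,b\}^* \{ba\}\{b\}^*(\{aa\}\{b\}^*)^*$ lies in $\SYDEF$ but not in $\SF$. Since $\SYDEF$ is contained in each of $\LCOM$, $\RCOM$, and $\TCOM$, the language $L$ belongs to all four comet-like families; since $L \notin \SF$ and $\SF$ contains $\FIN$, $\NIL$, $\DEF$, $\ORD$, and $\NC$, the language $L$ is outside all six families on the other side. This single witness proves one direction of each of the 24 incomparabilities.

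For the second witness I would use Lemma~\ref{lemma:fin_suf_comm_o_star_2com}, which guarantees $L' = \{\lambda, a\} \in \FIN \setminus \TCOM$. Because $\FIN$ is the smallest family in the chain $\FIN \subset \NIL \subset \DEF \subset \ORD \subset \NC = \SF$, the language $L'$ belongs to every one of these six families. Because $\TCOM$ contains $\SYDEF$, $\LCOM$, and $\RCOM$, the fact that $L' \notin \TCOM$ shows that $L'$ lies outside each comet-like family as well.

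There is no real obstacle here: all the hard work was done in the lemmas cited (in particular Lemma~\ref{lemma:sydef_o_sf} which relies on \cite{Olejar_Szabari.2023}, and Lemma~\ref{lemma:2com_unendlich_oder_leereSprache} which is the reason $\{\lambda, a\}$ cannot be a two-sided comet). The only care needed in writing the proof is to list the inclusions explicitly so the reader can check that the two witnesses indeed cover all 24 pairs.
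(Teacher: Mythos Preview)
Your proposal is correct and matches the paper's own proof essentially line for line: the paper also reduces the $24$ incomparabilities via the inclusion chains to exhibiting one language in $\SYDEF\setminus\SF$ and one in $\FIN\setminus\TCOM$, and chooses the very same witnesses $\{a,b\}^*\{ba\}\{b\}^*(\{aa\}\{b\}^*)^*$ (Lemma~\ref{lemma:sydef_o_sf}) and $\{\lambda,a\}$ (Lemma~\ref{lemma:fin_suf_comm_o_star_2com}).
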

\begin{proof}
  Due to inclusion relations, it suffices to show that there are a language $L_1\in \SYDEF\setminus\SF$ 
  and a language $L_2\in\FIN\setminus\TCOM$.
  From Lemma~\ref{lemma:sydef_o_sf} (and previously \cite{Olejar_Szabari.2023}), for the first language,
  we obtain the language~$L_1 = \{a,b\}^*\{ba\}\{b\}^*(\{aa\}\{b\}^*)^*$.
  From Lemma~\ref{lemma:fin_suf_comm_o_star_2com}, we take the language $L_2 = \{\lambda,a\}$.
\end{proof}

\begin{lemma}\label{lemma:com_2com_unvergleichbarzu_ps}
  The language families $\LCOM$, $\RCOM$, and $\TCOM$ are incomparable to the family~$\PS$.
\end{lemma}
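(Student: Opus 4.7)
The plan is to establish incomparability by exhibiting witness languages in both directions. Since $\LCOM$ and $\RCOM$ are both contained in $\TCOM$ (by Lemma~\ref{lemma:mon_subset_sydef_subset_com_subset_2com}), it suffices to find a language $L_1$ that lies in $\LCOM\cap\RCOM\cap\TCOM$ but not in $\PS$, together with a language $L_2$ that lies in $\PS$ but not in $\TCOM$.

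For the first direction, I would reuse the language $L_1=\set{a^{2n}}{n\geq 0}$ from Lemma~\ref{lemma:star_com_o_ps}, which has already been shown to belong to $\STAR\cap\LCOM\cap\RCOM$ and, simultaneously, not to $\PS$. Since $\LCOM\cup\RCOM\subset\TCOM$, the language $L_1$ is in all three comet families while missing $\PS$, so none of $\LCOM$, $\RCOM$, $\TCOM$ is contained in $\PS$.

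For the opposite direction, I would take the language $L_2=\{\lambda,a\}$ from Lemma~\ref{lemma:fin_suf_comm_o_star_2com}, which is finite and nonempty, hence (by Lemma~\ref{lemma:2com_unendlich_oder_leereSprache}) does not belong to $\TCOM$. Because $L_2$ is finite, it is power-separating via the inclusion chain $\FIN\subset\NIL\subset\DEF\subset\ORD\subset\NC\subset\PS$ depicted in Figure~\ref{fig:lang_erg_1}; concretely, one may verify this directly by choosing $m=2$, so that $J_x^m\cap L_2=\emptyset$ for every nonempty $x$ (since then $x^n$ has length at least $2$ for $n\geq 2$), while $J_\lambda^m=\{\lambda\}\subseteq L_2$. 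Thus $L_2\in\PS\setminus\TCOM$, which also yields $L_2\in\PS\setminus\LCOM$ and $L_2\in\PS\setminus\RCOM$, completing the incomparability.

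There is no substantial obstacle: both witnesses are already present in the paper and the only small verification needed is the power-separating property of $L_2$, which follows immediately from its finiteness. The proof is essentially a bookkeeping application of the previously established lemmas.
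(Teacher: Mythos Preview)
Your proposal is correct and essentially identical to the paper's proof: both use $L_1=\set{a^{2n}}{n\geq 0}$ from Lemma~\ref{lemma:star_com_o_ps} as the witness in $(\LCOM\cap\RCOM)\setminus\PS$ and $L_2=\{\lambda,a\}$ from Lemma~\ref{lemma:fin_suf_comm_o_star_2com} as the witness in $\PS\setminus\TCOM$. The only minor difference is that you add an explicit verification that $L_2$ is power-separating, whereas the paper leaves this implicit via the chain $\FIN\subset\cdots\subset\PS$.
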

\begin{proof}
  Due to inclusion relations, it suffices to show that there are a language $L_1\in \LCOM\setminus\PS$ 
  and a language $L_2\in\PS\setminus\TCOM$. The property of (non) power-separating is not influenced 
  by the reversal operation. If there is a language $L_1\in \LCOM\setminus\PS$, then there is
  also a language in the set $\RCOM\setminus\PS$, namely~$L_1^R$.
  From Lemma~\ref{lemma:star_com_o_ps}, we have $L_1 = \set{a^{2n}}{n \geq 0}\in (\LCOM\cap\RCOM)\setminus\PS$.
  As language $L_2$, we take again the language $L_2 = \{\lambda,a\}$ from Lemma~\ref{lemma:fin_suf_comm_o_star_2com}.
\end{proof}

\begin{lemma}\label{lemma:sydef_2com_unvergleichbarzu_suf_comm_circ}
  The language families $\SYDEF$, $\LCOM$, $\RCOM$, and $\TCOM$ are incomparable to each of the 
  families $\SUF$, $\CIRC$ and $\COMM$.
\end{lemma}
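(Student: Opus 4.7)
The plan is to exploit the inclusion chains $\SYDEF\subset\LCOM\subset\TCOM$ and $\SYDEF\subset\RCOM\subset\TCOM$ established in Lemma~\ref{lemma:mon_subset_sydef_subset_com_subset_2com}, which collapse all twelve required incomparabilities to just two witness languages: I need one language $L_1\in\SYDEF$ that lies outside each of $\SUF$, $\CIRC$, and $\COMM$ (this language then witnesses $F\not\subseteq X$ for every $F\in\{\SYDEF,\LCOM,\RCOM,\TCOM\}$ and every $X\in\{\SUF,\CIRC,\COMM\}$), and one language $L_2\in\SUF\cap\CIRC\cap\COMM$ that is not in $\TCOM$ (this witnesses $X\not\subseteq F$ for the same ranges).

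For $L_1$, I propose $L_1=\{a\}\{a,b\}^*\{b\}$ over $V=\{a,b\}$. Writing it in the form $EV^*H$ with $E=\{a\}$ and $H=\{b\}$ shows directly that $L_1\in\SYDEF$. The word $ab$ lies in $L_1$, but its proper suffix $b$, its circular shift $ba$, and the permutation $ba$ of $ab$ all fail to belong to $L_1$, because every word of $L_1$ begins with $a$ and ends with $b$ and hence has length at least two. So a single example rules out suffix-closedness, circularity, and commutativity at once.

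For $L_2$, I would reuse $L_2=\{\lambda,a\}$ from Lemma~\ref{lemma:fin_suf_comm_o_star_2com}, which already delivers membership in $\SUF\cap\COMM$ and non-membership in $\TCOM$ (by Lemma~\ref{lemma:2com_unendlich_oder_leereSprache}, since $L_2$ is finite and non-empty). The only point still to verify is circularity, and this is immediate: every non-empty word in $L_2$ has length one, and such a word is its own unique circular shift, so $L_2\in\CIRC$ as well. There is no serious obstacle in the argument; the only small observation is that the positive witness from the earlier lemma happens to be circular as well, and that $\{a\}\{a,b\}^*\{b\}$ conveniently fails all three ``shape'' properties simultaneously, so the full statement really reduces to exactly two examples.
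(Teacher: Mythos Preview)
Your proof is correct and follows essentially the same strategy as the paper: exploit the inclusion chains so that it suffices to exhibit something in $\SYDEF$ outside each of $\SUF$, $\CIRC$, $\COMM$, and something in each of these three families outside $\TCOM$. The only cosmetic differences are that the paper uses the combinational language $\{a,b\}^*\{b\}$ (invoking the known disjointness of $\COMB$ and $\SUF$) where you use $\{a\}\{a,b\}^*\{b\}$ with a direct check, and the paper handles $\CIRC\not\subseteq\TCOM$ via $\COMM\subset\CIRC$ rather than verifying circularity of $\{\lambda,a\}$ explicitly.
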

\begin{proof}
  Due to inclusion relations, it suffices to show that there are a language $L_1\in \SYDEF\setminus\SUF$,
  a language $L_2\in \SYDEF\setminus\CIRC$, a language $L_3\in\SUF\setminus\TCOM$, and
  a language $L_4\in\COMM\setminus\TCOM$.
  In~\cite{Holzer_Truthe.2015}, it was shown that the families $\COMB$ and $\SUF$ are disjoint. 
  Since~$\COMB\subseteq\SYDEF$, we can take any combinational language as $L_1$, for instance, $L_1=\{a,b\}^*\{b\}$.
  The same language serves as $L_2$ because it is not circular.
  From Lemma~\ref{lemma:fin_suf_comm_o_star_2com}, we take again the language $\{\lambda,a\}$ as $L_3$ and $L_4$.
\end{proof}

From all these relations, the hierachy presented in Figure~\ref{fig:lang_erg_1} follows.

\begin{theorem}[Resulting hierarchy]\label{theorem:neue_hierarchie}
The inclusion relations presented in Figure~\ref{fig:lang_erg_1} hold. An arrow from an entry $X$ to
an entry~$Y$ depicts the proper inclusion $X \subset Y$; if two families are not connected by a directed
path, then they are incomparable.
\end{theorem}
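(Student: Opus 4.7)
The plan is to split the theorem into its two assertions and verify each against the diagram separately. The proper inclusions are already labelled on every arrow of Figure~\ref{fig:lang_erg_1}, with each label pointing either to a paper cited in the literature survey or to one of the preceding lemmas (Lemma~\ref{lemma:mon_subset_star_subset_uf} and Lemma~\ref{lemma:mon_subset_sydef_subset_com_subset_2com}). Thus the first half reduces to a bookkeeping step: walk through the edges, check that the cited statement indeed gives a \emph{strict} containment, and use transitivity of $\subset$ for pairs connected by longer directed paths. No new argument is needed.

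For the incomparabilities, I would collect the results of Lemmas~\ref{lemma:star_uf_unvergleichbarzu_comb_2com}, \ref{lemma:star_unvergleichbarzu_fin_ps}, \ref{lemma:star_unvergleichbarzu_comm_circ}, \ref{lemma:lcom_unvergleichbarzu_rcom}, \ref{lemma:sydef_2com_unvergleichbarzu_fin_nc}, \ref{lemma:com_2com_unvergleichbarzu_ps}, and~\ref{lemma:sydef_2com_unvergleichbarzu_suf_comm_circ}, together with the incomparabilities among the classical families $\FIN,\NIL,\DEF,\ORD,\NC/\SF,\PS,\SUF,\COMM,\CIRC,\UF,\COMB$ that are already known from the cited literature. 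The key monotonicity remark is that whenever $X\subseteq X'$ and $Y\subseteq Y'$, a witness in $X\setminus Y'$ certifies $X\not\subseteq Y$, and a witness in $Y\setminus X'$ certifies $Y\not\subseteq X$; so incomparability propagates along chains, and a handful of extremal witnesses suffices to kill many pairs at once.

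The actual work, and the part most prone to error, is \emph{completeness}: verifying that every pair of nodes in the diagram which is not joined by a directed path is covered by some incomparability result. My plan is to go systematically through the newly inserted classes $\STAR$, $\SYDEF$, $\LCOM$, $\RCOM$, $\TCOM$, and for each such node $X$ to list every other node $Y$ lying neither below nor above $X$ in the diagram and to point to the lemma that provides both a witness in $X\setminus Y$ and one in $Y\setminus X$ (possibly after applying monotonicity with respect to a known inclusion). Pairs of previously existing classes are already handled in prior work.

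The main obstacle is thus organizational rather than mathematical: the argument is essentially a table lookup, but the table is sizeable, and one must be careful not to leave a single unordered pair of unconnected nodes uncovered. Writing the proof in the form ``for each unconnected pair $(X,Y)$ we exhibit witnesses $L_{XY}\in X\setminus Y$ and $L_{YX}\in Y\setminus X$'', and grouping the pairs by which lemma supplies the witnesses, should make the verification transparent and finish the theorem.
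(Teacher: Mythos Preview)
Your proposal is correct and matches the paper's own proof almost exactly: the paper simply notes that each edge label points to the lemma or reference establishing the corresponding proper inclusion, and that the incomparabilities are handled by Lemmas~\ref{lemma:star_uf_unvergleichbarzu_comb_2com} through~\ref{lemma:sydef_2com_unvergleichbarzu_suf_comm_circ} (together with the previously known relations among the classical families). Your additional remarks on monotonicity and on organizing the completeness check are a reasonable elaboration of what the paper leaves implicit, but no new idea beyond bookkeeping is required.
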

\begin{proof}
An edge label refers to the paper or lemma in the present paper where the proper inclusion is shown.
The incomparability results are proved in Lemmas~\ref{lemma:star_uf_unvergleichbarzu_comb_2com} 
to~\ref{lemma:sydef_2com_unvergleichbarzu_suf_comm_circ}.
\end{proof}

\section{Results on subregular control in external contextual grammars}

In this section, we include the families of languages generated by external contextual grammars with 
selection languages from the subregular families under investigation into the existing hierarchy 
with respect to external contextual grammars.

If, in a contextual grammar, all selection languages belong to some language family $X$, then they belong also
to every super set $Y$ of $X$. Therefore, each language in $\ec{X}$ is also generated by a contextual grammar
with selection languages from $Y$ and we have the following monotonicity.

\begin{lemma}\label{lemma:ec_monoton}
  For any two language classes $X$ and $Y$ with $X\subseteq Y$,
  we have the inclusion
  $\cEC(X)\subseteq\cEC(Y)$.
\end{lemma}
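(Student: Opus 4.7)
The plan is to unpack both sides of the claimed inclusion directly from the definition of $\cEC(\cdot)$ and observe that the witnessing grammar can be reused verbatim. First, I would take an arbitrary $L\in\cEC(X)$ and let $G=(V,\cS,A)$ with $\cS=\{(S_1,C_1),\ldots,(S_n,C_n)\}$ be a contextual grammar with selection in $X$ generating $L$ in the external mode. By the definition of $\cEC(X)$, each selection language $S_i$ (over its respective sub-alphabet $U_i\subseteq V$) lies in the family $X$ with respect to $U_i$, and each $C_i\subset V^*\times V^*$ is a finite set of contexts with $uv\neq\lambda$ for all $(u,v)\in C_i$.

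Next, I would invoke the hypothesis $X\subseteq Y$: since $S_i\in X$, we immediately get $S_i\in Y$ (with respect to the same sub-alphabet $U_i$), while the data $V$, the contexts $C_i$, and the axiom set $A$ are untouched. Hence the same triple $G=(V,\cS,A)$ is also a contextual grammar with selection in $Y$ in the sense of the definition.

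Finally, I would note that the direct external derivation relation $\Lra$ is defined purely in terms of the pairs $(S_i,C_i)\in\cS$ and the axiom set $A$; it makes no reference to the ambient family of selection languages. Consequently the reflexive-transitive closure $\Lra^*$ and hence the generated language $L(G)=\{z\mid x\Lra^* z\text{ for some }x\in A\}$ are identical whether $G$ is viewed as a grammar with selection in $X$ or as one with selection in $Y$. Therefore $L=L(G)\in\cEC(Y)$, which proves the inclusion.

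There is no real obstacle here; the statement is a straightforward monotonicity consequence of the fact that the definition of $\cEC(\cdot)$ quantifies existentially over a grammar whose only dependence on the family is the membership constraint $S_i\in\cF$. The only pedantic point to check is that family membership is preserved when enlarging the family with respect to a fixed sub-alphabet, which is immediate from the blanket assumption $X\subseteq Y$.
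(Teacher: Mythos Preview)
Your proposal is correct and is essentially the same approach as the paper: the paper does not give a separate proof environment for this lemma but simply observes, in the sentence preceding it, that if all selection languages of a contextual grammar lie in $X$ then they also lie in any superset $Y$, so the very same grammar witnesses membership in $\cEC(Y)$. Your write-up merely spells this out in more detail.
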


Figure~\ref{fig:lang_erg_2} shows a hierarchy of some language families which are generated by external
contextual grammars where the selection languages belong to subregular classes investigated before. The 
hierarchy contains results which were already known (marked by a reference to the literature) and 
results which will be proved in this section (marked by a number which refers to the respective lemma). 

  \begin{figure}[htb]
  \centering
  \scalebox{.9}{
  \begin{tikzpicture}[node distance=15mm and 25mm,on grid=true
  ]
    \node (MON) {$\ec{\MON}$};
    \node (FIN) [below=of MON] {$\ec{\FIN}$};
    \node (COMB)[above=of MON] {$\ec{\COMB}$};
    \node (NIL) [right=of COMB] (NIL) {$\ec{\NIL}$};
    \node (DEF) [above=of COMB] {$\ec{\DEF}$};
    \node (ORD) [above=of DEF] {$\ec{\ORD}$};
    \node (SYDEF) [left=of ORD] {$\ec{\SYDEF}$};
    \node (SUF) [below left=of SYDEF] {$\ec{\SUF}$};
    \node (COMM) [right=of ORD] {$\ec{\COMM}$};
    \node (STAR) [below right=of COMM] {$\ec{\STAR}$};
    \node (NC) [above=of ORD] {$\ec{\NC}$};
    \node (PS) [above=of NC] {$\ec{\PS}$};
    \node (CIRC) [above=of COMM] {$\ec{\CIRC}$};
    \node (REG) [above=of PS] {$\ec{\REG} \stackrel{\text{\cite{Dassow_Manea_Truthe.2012}}}{=} \ec{\UF} \stackrel{\text{\ref{lemma:ec:reg_eq_com}}}{=} \cEC(\LCOM) \stackrel{\text{\ref{lemma:ec:reg_eq_com}}}{=} \cEC(\RCOM) \stackrel{\text{\ref{lemma:ec:reg_eq_com}}}{=} \cEC(\TCOM)$};

    \draw[hier] (FIN) to node[pos=.45, edgeLabel]{\small\cite{Dassow.2005}} (MON);
    \draw[hier, bend right] (MON) to node[pos=.45, edgeLabel]{\small\cite{Dassow.2005}} (NIL);
    \draw[hier] (MON) to node[pos=.45, edgeLabel]{\small\cite{Dassow.2015}} (COMB);
    \draw[hier] (COMB) to node[pos=.45, edgeLabel]{\small\cite{Truthe.2021}} (DEF);
    \draw[hier] (DEF) to node[pos=.45, edgeLabel]{\small\cite{Truthe.2014}} (ORD);
    \draw[hier, bend left] (DEF) to node[pos=.45, edgeLabel]{\small\ref{lemma:ec:def_ss_sydef}} (SYDEF);
    \draw[hier] (ORD) to node[pos=.45, edgeLabel]{\small\cite{Dassow_Truthe.2023}} (NC);
    \draw[hier] (NC) to node[pos=.45, edgeLabel]{\small\cite{Truthe.2021}} (PS);
    \draw[hier, bend left] (MON) to node[pos=.45, edgeLabel]{\small\cite{Dassow.2005}} (SUF);
    \draw[hier, bend left=35] (SUF) to node[pos=.45, edgeLabel]{\small\cite{Truthe.2021}} (PS);
    \draw[hier, bend right=35] (MON) to node[pos=.45, edgeLabel]{\small\ref{lemma:ec:mon_ss_star}} (STAR);
    \draw[hier, bend right=35] (STAR) to node[pos=.45, edgeLabel]{\small\ref{lemma:ec:star_ss_rcom}} (REG);
    \draw[hier, bend left] (SYDEF) to node[pos=.45, edgeLabel]{\small\ref{lemma:ec:sydef_ss_ps}} (PS);
    \draw[hier] (NIL) to node[pos=.45, edgeLabel]{\small\cite{Dassow.2005}} (COMM);
    \draw[hier, bend right] (NIL) to node[pos=.45, edgeLabel]{\small \cite{Dassow.2005}} (DEF);
    \draw[hier] (COMM) to node[pos=.45, edgeLabel]{\small\cite{Dassow_Manea_Truthe.2012}} (CIRC);
    \draw[hier, bend right] (CIRC) to node[pos=.45, edgeLabel]{\small\cite{Dassow_Manea_Truthe.2012}} (REG);
    \draw[hier] (PS) to node[pos=.45, edgeLabel]{\small\cite{Truthe.2021}} (REG);
  \end{tikzpicture}
  }
  \caption{Resulting hierarchy of language families by external contextual grammars with special selection languages}
  \label{fig:lang_erg_2}
  \end{figure}

  An arrow from a node $X$ to a node~$Y$ stands for the proper inclusion $X \subset Y$. 
  If two families are not connected by a directed path, then they are incomparable. 
  An edge label refers to the paper where the proper inclusion has been shown 
  or the lemma of this paper where the proper inclusion will be shown.

We now present some languages which will serve later as witness languages for proper inclusions or
incomparabilities. Due to space limitations, we give only proof sketches in some cases where we believe
that the reader finds the idea feasible.

\begin{lemma}\label{lemma:ec:nil_o_star}
  Let $L = \set{a^nbbb}{n\geq 1} \cup \{\lambda\}$. Then, it holds $L \in \ec{\NIL} \setminus \ec{\STAR}$.
\end{lemma}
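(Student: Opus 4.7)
For the inclusion $L \in \ec{\NIL}$, I would exhibit the explicit grammar
\[G = (\{a,b\},\,\{(\{a,b\}^+,\,\{(a,\lambda)\})\},\,\{\lambda, abbb\})\]
and check that its single selection language $\{a,b\}^+$ is nilpotent, since its complement in $\{a,b\}^*$ equals $\{\lambda\}$. Starting from the axiom $\lambda$ no rule is applicable, whereas starting from the axiom $abbb$ one iteratively prepends an $a$ and thereby produces every word $a^nbbb$ with $n \geq 1$. No further derivations are possible, so $L(G) = L$; this part is a routine check.

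For $L \notin \ec{\STAR}$, I would argue by contradiction. Suppose $G' = (V', \mathcal{S}, A)$ generates $L$ and every selection language has the form $H^*$. The decisive observation is that $\lambda \in H^*$ holds for every star language. Since contexts strictly increase the length of a word, $\lambda$ (which belongs to $L$) must itself be an axiom, and since $\lambda$ lies in every selection language, every rule is applicable to $\lambda$. Therefore, for every context $(u,v)$ appearing in $G'$, the one-step derivative $uv$ of $\lambda$ is reachable and hence lies in $L$; as $uv \neq \lambda$ by the definition of a context, one concludes that $uv = a^n bbb$ for some $n \geq 1$.

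The central step is then to verify that no such context can be applied to any word $a^m bbb$ with $m \geq 1$ without leaving $L$. A short case distinction on the decomposition of $a^n bbb$ into a prefix $u$ and matching suffix $v$ (i.e. either $u = a^i$ for some $0 \leq i \leq n$, or $u \in \{a^n b,\, a^n bb,\, a^n bbb\}$) shows that $u \cdot a^m bbb \cdot v$ either exhibits two disjoint occurrences of the factor $bbb$ or contains a letter $b$ to the left of a letter $a$; in either case it cannot have the shape $a^k bbb$ and thus falls outside $L$. Consequently, no rule can extend any element of $\{a^m bbb : m \geq 1\}$, and the reachable set is confined to the axioms together with the one-step derivatives of $\lambda$ -- a finite set, contradicting the infinitude of $L$.

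The main obstacle I anticipate is recognising the uniform consequence of the identity $\lambda \in H^*$: once one sees that every rule fires at $\lambda$ and thereby forces every context into the shape $a^n bbb$, the remainder of the argument reduces to the routine case analysis sketched above.
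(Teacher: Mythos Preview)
Your proof is correct. Both halves work, and your grammar for the $\NIL$ part is in fact a bit simpler than the paper's (the paper uses the selection language $\{a,b\}^*\{a,b\}^4$, i.e.\ all words of length at least four, whereas your $\{a,b\}^+$ already suffices).

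For the second half, you and the paper share the crucial observation that $\lambda$ lies in every star language, so every context $(u,v)$ can be fired at the axiom $\lambda$. From there, however, the arguments diverge. The paper notes that, because $L$ is infinite and every non\-empty word of $L$ has exactly three $b$'s, some derivation step must take $a^kbbb$ to $a^{k'}bbb$ with $k'>k$; the context used there then consists of $a$'s only, and applying \emph{that} context to $\lambda$ immediately produces a word in $\{a\}^+\setminus L$, a one-line contradiction. Your route instead pushes the $\lambda$-observation forward: you first conclude $uv=a^nbbb$ for \emph{every} context, then carry out a case split on the factorisations of $a^nbbb$ to show that no context can extend any $a^mbbb$ without leaving $L$, so the reachable set is finite. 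Both are sound; the paper's version avoids the case analysis at the cost of first isolating a specific ``pure~$a$'' context, while yours trades that step for a short but explicit enumeration of factorisations.
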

\begin{proof}
  The contextual grammar 
  $G = (\{a,b\}, \sets{\{a,b\}^*\{a,b\}^4\to(a,\lambda)}, \{abbb, \lambda\})$
  generates 
  $L$.

During the derivation, the number of the letter $a$ is increasing without changing the number of $b$. If the selection languages
are from $\STAR$, then such a context containing letters $a$ only could be wrapped around the empty word yielding a word 
without $b$ which is a contradiction.
\end{proof}

\begin{lemma}\label{lemma:ec:comb_o_pre_star}
  Let $L = \set{b^na}{n \geq 0}\cup \{\lambda\}$. Then, it holds $L \in \ec{\COMB}\setminus \ec{\STAR}$.
\end{lemma}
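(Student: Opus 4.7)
The plan for the first inclusion $L \in \ec{\COMB}$ is to exhibit an explicit grammar. Take $V = \{a, b\}$, axioms $A = \{\lambda, a\}$, and a single selection pair with selection language $S = \{a,b\}^*\{a\}$ (which has the combinational form $V^*X$ with $X = \{a\}$) and context set $C = \{(b, \lambda)\}$. The axiom $\lambda$ is not in $S$, so it cannot be extended and contributes only itself. The axiom $a$ lies in $S$, and wrapping $(b, \lambda)$ around any word of the form $b^k a$ again yields a word of the same form, so iteration produces exactly $\{b^n a : n \geq 0\}$.

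For $L \notin \ec{\STAR}$, the plan is a contradiction argument. Assume $L = L(G)$ for some external contextual grammar $G = (V, \cS, A)$ with every selection language in $\STAR$. I would first record two basic facts: since derivation steps strictly increase length and $\lambda \in L$, we must have $\lambda \in A$; and every star language $H^*$ contains $\lambda$. Hence for each rule $(S_j, C_j)$ and each context $(u,v) \in C_j$, applying it to the derivable word $\lambda$ yields $uv \in L$, so $uv \in \{b^m a : m \geq 0\}$ (it cannot be $\lambda$). This constrains $u$ and $v$ to be a complementary prefix/suffix pair of some word $b^m a$; in particular, the concatenation $uv$ always contains exactly one occurrence of the letter $a$.

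The crucial observation is then that for any such context $(u,v)$ and any word $w = b^k a \in L$, the result $uwv$ contains two $a$'s (one from $w$, one from $uv$), so $uwv \notin L$. Hence no selection language $S_j$ can contain any word $b^k a$: if it did, applying the rule to that word would produce a word outside $L$. Therefore $L \cap S_j \subseteq \{\lambda\}$ for every rule, so rules may only be applied to $\lambda$. But starting from $\lambda$, a single step gives a word of length at most $\ell_C(G)$, and no further steps are possible; combined with the finitely many axioms this makes $L(G)$ finite, contradicting $|L| = \infty$.

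The main obstacle is noticing the interaction between two small facts: every star language automatically contains $\lambda$, and every non-$\lambda$ word in $L$ has exactly one $a$. Once these are combined, contexts are forced to contain an $a$, which immediately precludes wrapping them around any non-empty word of $L$; without that step, one might try to argue via pumping or growth rates and get bogged down.
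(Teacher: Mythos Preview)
Your proof is correct. The grammar you give for $L \in \ec{\COMB}$ is exactly the one in the paper. For $L \notin \ec{\STAR}$, both arguments rest on the same pivotal observation---every star language contains $\lambda$, and $\lambda$ must be an axiom---but they exploit it in opposite directions. The paper argues that some context $(u,v)$ with $|uv|_b>0$ and $|uv|_a=0$ must exist (otherwise the $b$-count cannot grow unboundedly), and then applies this context to $\lambda$ to produce a $b$-only word outside $L$. You instead apply \emph{every} context to $\lambda$ first, forcing each $uv$ to equal some $b^m a$; then any application to a word $b^k a$ yields two occurrences of $a$, so no selection language can contain any non-empty word of $L$, and $L(G)$ collapses to a finite set. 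Your route avoids the (easy but extra) step of isolating a specific $b$-pumping context and reasoning about derivations of long words; the paper's route avoids the finiteness bookkeeping at the end. Both are short and equally valid.
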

\begin{proof}
  The contextual grammar $G = (\{a,b\}, \sets{\{a,b\}^*\{a\}\to(b, \lambda)}, \{\lambda,a\})$
  generates the language~$L$ and the selection language is combinational. 

  Similarly to the proof before: With star selection languages, a word with the letter $b$ but without $a$ could be generated.
\end{proof}

\begin{lemma}\label{lemma:ec:suf_o_star}
Let $L_1=\sets{a,b}^*\set{a^n b^m}{n\geq 1,\ m\geq 1}$, $L_2=\set{ca^n b^mc}{n\geq 1,\ m\geq 1}$, and $L=L_1\cup L_2$.
Then, it holds $L\in\ec{\SUF} \setminus \ec{\STAR}$.
\end{lemma}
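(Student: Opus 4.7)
My plan is to prove the two claims $L\in\ec{\SUF}$ and $L\notin\ec{\STAR}$ separately. For the membership I construct a contextual grammar $G$ over $\{a,b,c\}$ with single axiom $ab$ and the two selection pairs $\{a,b\}^*\to\{(a,\lambda),(b,\lambda),(\lambda,b)\}$ and $\{a\}^*\{b\}^*\to\{(c,c)\}$. Both selection languages are suffix-closed over $\{a,b\}$ (the first trivially; the suffixes of any $a^nb^m$ are again of that shape). Starting from $ab$, the first pair generates exactly $\{a,b\}^*\cdot a\cdot b^+=L_1$; the second pair becomes applicable to a derivable word precisely when the word has the form $a^nb^m$ with $n,m\geq 1$, and then wraps it to $ca^nb^mc\in L_2$. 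Since every word containing $c$ lies outside both selection languages, no further step is possible once $(c,c)$ has been applied, so $L(G)=L$.

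For the non-membership I argue by contradiction. Suppose that some contextual grammar $G'$ with $\STAR$ selection languages generates $L$, and fix $N$ much larger than $\ell(G')$. The last step of any derivation of $ca^Nbc\in L_2$ has the form $ca^Nbc=u\alpha' v$, where $(u,v)$ is a context of a selection pair with selection language $S_i=H_i^*$ and $\alpha'\in S_i$ is derivable, hence $\alpha'\in L$. The main technical step is a case analysis on the positions of the two letters $c$. If $\alpha'$ contains a $c$, then $\alpha'\in L_2$ forces $\alpha'=ca^{n'}b^{m'}c$, so both $c$'s of the result are already inside $\alpha'$ and hence $u=v=\lambda$, contradicting $uv\neq\lambda$. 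Otherwise $\alpha'$ is over $\{a,b\}$; writing $u=ca^p$ and examining the $c$-terminated suffixes $v$ of $a^Nbc$ shows that the only subcase in which $|\alpha'|$ grows with $N$ while $\alpha'\in L$ is $v=c$ and $\alpha'=a^{N-p}b\in L_1$, since every other option makes $\alpha'$ a pure $a$-block, which is in $L$ only if empty, forcing $N$ bounded.

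Because $G'$ has only finitely many selection pairs and contexts, a pigeonhole argument fixes a triple $(S_i,ca^p,c)$ that occurs as the last step for infinitely many such $N$. Pick one such $N$ and set $M:=N-p$, so $a^Mb\in S_i=H_i^*$. By closure of $H_i^*$ under concatenation, the word $a^Mba^Mb=(a^Mb)(a^Mb)$ also lies in $S_i$; moreover $a^Mba^Mb\in L_1$ (its final factor $a^Mb^1$ has $M,1\geq 1$), so it is derivable in $G'$. Applying the same selection pair with context $(ca^p,c)$ to $a^Mba^Mb$ now produces the derivable word $ca^{p+M}ba^Mbc$, which is neither in $L_1$ (it contains $c$) nor in $L_2$ (the inner word is not of the form $a^nb^m$, carrying two separate $b$-blocks), contradicting $L(G')=L$. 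I expect the case analysis pinning down $\alpha'=a^{N-p}b$ to be the main obstacle; once it is settled, the star closure under concatenation fabricates the word outside $L$ immediately.
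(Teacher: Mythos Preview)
Your proof is correct and follows the same route as the paper. Your grammar is essentially the paper's grammar (you take $S_2=\{a\}^*\{b\}^*$ where the paper takes $\{a\}^*\{b\}^+\cup\{\lambda\}$; both are suffix-closed over $\{a,b\}$ and both intersect the generated language in exactly $\{a^nb^m:n,m\ge 1\}$). For non-membership the paper gives only a one-line sketch---a star selection language admitting the $(c,\ldots,c)$ wrap must also admit a word with more than one $a$-to-$b$ change---and your argument is a faithful expansion of that idea: from $a^Mb\in H_i^*$ you pass to $(a^Mb)^2\in H_i^*\cap L_1$ and wrap it. One small slip in your case analysis: no pure $a$-block $a^r$ lies in $L$, not even $\lambda$ (every word of $L_1$ ends in $b$, every word of $L_2$ contains $c$), so the clause ``in $L$ only if empty, forcing $N$ bounded'' is inaccurate---those subcases are simply impossible, which only strengthens your conclusion.
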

\begin{proof}
  It holds $L=L(G)$ for the contextual grammar $G = (\{a,b,c\}, \sets{(S_1,C_1),(S_2,C_2)}, \{ab\})$ with
  \[S_1=\sets{a,b}^*,\quad
    C_1=\sets{(a, \lambda),(b, \lambda),(\lambda, b)},\qquad
    S_2=\set{a^n b^m}{n\geq 0,\ m\geq 1 }\cup\sets{\lambda},\quad
    C_2=\sets{(c,c)}.
  \]

Using star selection languages, the two letters $c$ could be wrapped around a word with more than 
one~$a$-to-$b$-change from $L_1$
which would yield a word not belonging to $L$.
%
\end{proof}

\begin{lemma}\label{lemma:ec:star_o_ps}
  Let $L_1 = \set{a^n}{n \geq 2}$ and $L_2 = \set{ba^{2n}b}{n \geq 1}$ be two languages and $L = L_1 \cup L_2$ its union.
  Then, the relation~$L\in\ec{STAR} \setminus \ec{PS}$ holds.
\end{lemma}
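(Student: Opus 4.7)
The plan is to prove the two directions separately: membership in $\ec{\STAR}$ by an explicit construction, and non-membership in $\ec{\PS}$ by a pumping-style argument exploiting the definition of power-separability.

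For $L \in \ec{\STAR}$, I propose the grammar $G = (\sets{a,b}, \sets{(S_1,C_1),(S_2,C_2)}, \sets{aa})$ with $S_1 = \sets{a}^*$, $C_1 = \sets{(\lambda,a)}$, $S_2 = \sets{aa}^*$, $C_2 = \sets{(b,b)}$. Both $S_1$ and $S_2$ are star languages since $S_1 = H_1^*$ for $H_1 = \sets{a}$ and $S_2 = H_2^*$ for $H_2 = \sets{aa}$. Starting from the axiom $aa$, the first pair generates all of $L_1$ by repeatedly appending $a$. The second pair can only be applied to even powers of $a$, producing exactly the words $ba^{2k}b$ with $k\geq 1$, i.e.\ the elements of $L_2$. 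No word of $L_2$ lies in $S_1$ or $S_2$, so the derivations terminate and $L(G) = L$.

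For $L \notin \ec{\PS}$, I proceed by contradiction. Assume $G=(V,\cS,A)$ is a contextual grammar with all selection languages in $\PS$ and $L(G) = L$. Let $\ell = \ell_C(G)$, and choose a single integer $m\geq 1$ that witnesses power-separability simultaneously for every selection language of $G$ (by taking the maximum of the individual constants). The main technical step is a case analysis on the last derivation step producing $ba^{2n}b$ for arbitrarily large $n$: writing $uwv = ba^{2n}b$ with $w\in L$, I would first exclude the case $w\in L_2$ by counting occurrences of $b$ (the factor $uwv$ would then contain at least three $b$'s, whereas $ba^{2n}b$ has exactly two). Consequently $w\in L_1$, so $w = a^k$ for some $k\geq 2$ with $u = ba^i$, $v = a^jb$, and $i+k+j = 2n$.

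Since $G$ has only finitely many contexts and selection pairs, the pigeonhole principle yields fixed values $I,J\geq 0$ and a fixed selection language $S^*$ of $G$ with $(ba^I,a^Jb)$ in the corresponding context set, such that $a^{2n - I - J}\in S^*$ for infinitely many $n$. Applying the power-separating property to $x=a$, this forces $a^k \in S^*$ for every $k\geq m$. Now I choose any $k \geq \max(m,2)$ whose parity is opposite to that of $I+J$. Then $a^k \in L = L(G)$ and $a^k\in S^*$, so $a^k \Lra ba^{I+k+J}b$ is a valid derivation step in $G$. However $I+k+J$ is odd, hence $ba^{I+k+J}b\notin L$, contradicting $L(G)=L$. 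The main obstacle is the initial case analysis: ruling out $w\in L_2$ cleanly and pinning down the context shape $(ba^i,a^jb)$ is what makes the whole pigeonhole/parity argument run; the rest is a standard application of the $\PS$-definition.
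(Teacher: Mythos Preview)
Your proof is correct and follows essentially the same strategy as the paper. Two remarks are in order.

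First, your exclusion of the case $w\in L_2$ is not quite right as stated: if $w=ba^{2k}b$ and $u,v\in\{a\}^*$, then $uwv$ still has exactly two occurrences of $b$, so the count alone does not rule this out. The actual reason the case fails is that with $u=a^i$, $v=a^j$ and $i+j\geq 1$ one gets $uwv=a^iba^{2k}ba^j$, which is never of the form $ba^{2n}b$; together with the subcase $|uv|_b\geq 1$ (which does give three $b$'s) this excludes $w\in L_2$.

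Second, the paper reaches the contradiction more directly and avoids your pigeonhole step. Having found one derivation step $a^k\Lra ua^kv$ with $k$ large and $ua^kv\in L_2$, the paper uses power-separability of the selection language $S$ just once, for $x=a$, to conclude $a^{k+1}\in S$. Since $a^{k+1}\in L$, the same context applies and yields $ua^{k+1}v$, whose number of $a$'s is $|ua^kv|_a+1$ and hence odd. No fixing of $(I,J)$ across infinitely many $n$ is needed; the shift by one already flips the parity. Your route via pigeonhole and an explicit parity choice for $k$ is perfectly valid, just slightly longer.
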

\begin{proof}
  It holds $L=L(G)$ for the contextual grammar 
  $G = (\{a,b\}, \sets{(S_1,C_1),(S_2,C_2)}, \{aa\})$ with
  \[
    S_1 = \set{a^{n}}{n \geq 0},\quad C_1 = \sets{(\lambda, a)} \quad\mbox{and}\quad
    S_2 = \set{a^{2n}}{n \geq 0},\quad C_2 = \sets{(b, b)}.
  \]
%

  Now assume that $L \in \ec{\PS}$. Then, $L = L(G')$ for a contextual 
  grammar $G'$
  where every selection language 
  is power-separating.
  
  For every selection language (since it is power-separating), there is a number $m_S\in \N$ such that,
  for every word $x\in \{a,b\}^*$, either~$J_x^{m_S} \cap S = \emptyset$ or $J_x^{m_S} \subseteq S$ 
  with $J_x^{m_S} = \set{x^n}{n \geq m_S}$. Let $m_S$ be the minimum of these numbers for $S$ and let $m$ be the maximum
  of all the values $m_S$ for a selection language $S$.
  
  Further, let $p = m+\ell(G')$. Then, we have the following statement for every selection language~$S$:
  For each word $x \in \{a,b\}^*$, it is 
  \begin{equation}\label{eq:ps}
    \text{either }J_x^p \cap S = \emptyset \text{ or } J_x^p \subseteq S
  \end{equation}
  where $J_x^p = \set{x^n}{n \geq p}$.

  The language $L_2$ contains words with an arbitrary even number of letters $a$ and a letter $b$ at each end. 
  Hence, there is a derivation $w_0 \Lra^* w_1 \Lra uw_1v$
  with $w_0 \in A$, $|w_1|_a > p$, $|w_1|_b = 0$, and $|uv|_b > 0$. This implies $w_1 = a^k$ with $k > p$.

  Let $S$ be the selection language used in the last derivation step. Then, we have $a^k\in S$ and, with property~(\ref{eq:ps}),
  also $a^{k+1} \in S$. Since $a^{k+1}$ belongs to $L_1$ and therefore also to $L$, the last derivation step can also be 
  applied to $a^{k+1}$ which yields the word $ua^{k+1}v$. Since $|uv|_b > 0$, the word $ua^{k+1}v$ belongs at most to $L_2$.
  Since $ua^{k}v\in L_1$, we know that $|ua^kv|_a$ is an even number and~$|ua^{k+1}v|_a$ is an odd number. 
  Therefore, the word~$ua^{k+1}v$ does not belong to $L_2$ and neither to $L$ which is a contradiction to $L=L(G')$.
  Thus, we conclude~$L\notin\cEC(\PS)$.
\end{proof}

\begin{lemma}\label{lem:ec:star_o_circ}
  Let $L = \set{a^nb^n}{n \geq 1} \cup \set{b^na^n}{n \geq 1}$. Then, it holds $L\in\ec{\STAR} \setminus \ec{\CIRC}$.
\end{lemma}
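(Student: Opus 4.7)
I plan to prove the statement in two parts. For $L\in\ec{\STAR}$ I will exhibit the external contextual grammar
\[G=(\{a,b\},\{(S_1,C_1),(S_2,C_2)\},\{ab,ba\})\]
with selection languages $S_1=(\{a\}^+\{b\}^+)^*$ and $S_2=(\{b\}^+\{a\}^+)^*$ and contexts $C_1=\{(a,b)\}$, $C_2=\{(b,a)\}$. Both $S_1$ and $S_2$ have the form $H^*$ for a regular language $H$, hence are star languages. Iterating the first rule from $ab$ generates every $a^nb^n$, and iterating the second rule from $ba$ generates every $b^na^n$. The two branches cannot mix, since every nonempty word of $S_1$ begins with $a$ while every nonempty word of $S_2$ begins with $b$; in particular $b^ma^m\notin S_1$ and $a^mb^m\notin S_2$ for $m\geq 1$. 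Thus $L(G)=L$.

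For $L\notin\ec{\CIRC}$ I will argue by contradiction. Suppose $L=L(G')$ for a contextual grammar $G'=(V,\cS',A')$ whose selection languages are all circular. Fix $n>\ell(G')$ so that $a^nb^n$ is not an axiom; hence its derivation ends with a step $w\Lra uwv=a^nb^n$ using some pair $(S,C)\in\cS'$ with $w\in S$ and $(u,v)\in C$. Because every sentential form of $G'$ lies in $L$ and $|w|\geq 2n-\ell(G')>0$, a short case analysis on where $u$, $w$, $v$ can sit inside the block $a^nb^n$ will show that $w$ cannot have the form $b^ma^m$ (any such placement introduces an extra $b$-to-$a$ alternation in $uwv$) and must therefore be $w=a^mb^m$ with $u=a^p$, $v=b^p$, and $m,p\geq 1$ (since the context is nonempty).

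Now I will invoke the circularity hypothesis on $S$: the cyclic shift of $a^mb^m$ by $m$ positions equals $b^ma^m$, so $b^ma^m\in S$. Since $b^ma^m\in L=L(G')$ it is derivable in $G'$, and one further legal step with the pair $(S,C)$ wraps the context $(a^p,b^p)$ around $b^ma^m$ to produce $a^pb^ma^mb^p$. This word must then lie in $L$; but for $m,p\geq 1$ it has the shape $a^+b^+a^+b^+$, which is neither of the form $a^Nb^N$ nor $b^Na^N$. This is the desired contradiction. The main obstacle will be the case analysis that pins down the shape of $w$, $u$, and $v$ in the last derivation step; once this is settled, circularity immediately supplies the illegal sentential form in a single stroke.
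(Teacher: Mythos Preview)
Your proposal is correct and follows essentially the same approach as the paper. Your selection languages $(\{a\}^+\{b\}^+)^*$ and $(\{b\}^+\{a\}^+)^*$ are exactly the paper's $\{a^nb^m\mid n\geq 1,\ m\geq 1\}^*$ and $\{b^na^m\mid n\geq 1,\ m\geq 1\}^*$ in different notation, and your contradiction argument---forcing $w=a^mb^m$, $u=a^p$, $v=b^p$, then using circularity to wrap $(a^p,b^p)$ around $b^ma^m$---is precisely what the paper sketches when it says a context $(a^k,b^k)$ could be wrapped around $b^ma^m$; you simply supply the case analysis that the paper leaves implicit.
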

\begin{proof}
  It holds $L=L(G)$ for the contextual grammar $G = (\{a,b\}, \sets{(S_1,C_1),(S_2,C_2)}, \{ab,ba\})$ with
  \[ S_1 = \set{a^n b^m}{n\geq 1,\ m\geq 1}^*,\quad C_1 = \sets{(a, b)}\quad\mbox{and}\quad
     S_2 = \set{b^n a^m}{n\geq 1,\ m\geq 1}^*,\quad C_2 = \sets{(b, a)}.
  \]

  With circular selection languages, a context $(a^k,b^k)$ could be wrapped around a word $b^ma^m$ yielding a word
  which does not belong to the language $L$.
%
\end{proof}

\begin{lemma}\label{lem:ec:ord_o_sydef}
  The language $L = \{a,b\}^*\cup \{c\}\{ab\}^*\{c\}$ belongs to the set $\ec{\ORD}\setminus \ec{\SYDEF}$.
\end{lemma}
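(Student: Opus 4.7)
The plan is to handle $L\in\ec{\ORD}$ for free: Example~\ref{ex-cg} already exhibits a contextual grammar with ordered selection generating precisely this $L=\{a,b\}^*\cup\{c\}\{ab\}^*\{c\}$, so nothing further is needed in that direction. The substance of the lemma is $L\notin\ec{\SYDEF}$, which I would prove by contradiction.

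Assume $L=L(G')$ for a contextual grammar $G'=(V,\cS,A)$ whose selection languages are all symmetric definite. The starting observation is that every intermediate word of every derivation of $G'$ is itself derivable from an axiom and hence lies in $L(G')=L$. Fix $n$ and take any derivation $w_0\Lra w_1\Lra\cdots\Lra w_k=c(ab)^nc$; each $w_i$ is a factor of $c(ab)^nc$, since the axiom sits in the middle and contexts wrap around it. Inspection of the factors shows that those lying in $L$ are either the whole word $c(ab)^nc$ (no proper factor contains both $c$-positions, which are $0$ and $2n+1$) or words in $\{a,b\}^*$ (factors with exactly one $c$ are not in $L$). Since each derivation step strictly lengthens the word, for all but finitely many $n$ we have $w_0,\dots,w_{k-1}\in\{a,b\}^*$.

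Next I would invoke pigeonhole on the last step: as $\cS$ and each context set are finite, there exist an infinite set $N_0\subseteq\N$, a selection pair $(S,C)\in\cS$, and a context $(u,v)\in C$ such that for every $n\in N_0$ some derivation of $c(ab)^nc$ ends by applying $(u,v)$ to a word $\tilde w_n\in S\cap\{a,b\}^*$ with $u\tilde w_nv=c(ab)^nc$. Since $(u,v)$ is fixed while $\tilde w_n$ contains no $c$, for large $n$ the context must split as $u=c\alpha$ and $v=\beta c$ with $\alpha,\beta\in\{a,b\}^*$, and $\alpha\tilde w_n\beta=(ab)^n$. Consequently $S$ contains an infinite family of words in $\{a,b\}^*$ of unbounded length, eventually using both letters $a$ and $b$.

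The hard part, as I see it, is to upgrade the structural permissiveness of the shape $S=EU^*H$ to a word that is not merely in $S$ but actually \emph{derivable} in $G'$ and, when wrapped by $(u,v)$, lands outside $L$. Writing $S=EU^*H$ over its alphabet $U\subseteq V$, the preceding paragraph forces $\{a,b\}\subseteq U$; and decomposing one $\tilde w_n$ as $e_0xh_0$ with $e_0\in E$, $h_0\in H$ forces $e_0,h_0\in\{a,b\}^*$, since they are a prefix and a suffix of a word in $\{a,b\}^*$. The key candidate is then $y=e_0\,aa\,h_0$: it lies in $S$ (because $aa\in U^*$) and, crucially, in $\{a,b\}^*\subseteq L=L(G')$, so it is derivable in $G'$ -- here the free inclusion $\{a,b\}^*\subseteq L$ supplies precisely the reachability that the grammar on its own does not. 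Applying $(u,v)$ to $y$ then produces $uyv=c\alpha e_0\,aa\,h_0\beta c\in L(G')$, whose middle $\alpha e_0\,aa\,h_0\beta\in\{a,b\}^*$ contains $aa$ as a factor and therefore is not in $\{ab\}^*$; hence $uyv\notin L$, contradicting $L(G')=L$ and finishing the argument.
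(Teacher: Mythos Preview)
Your proof is correct and follows essentially the same route as the paper's: identify the last derivation step producing $c(ab)^nc$ from some $z\in\{a,b\}^*$ via a symmetric definite selection language $S=EU^*H$, then exploit the middle $U^*$ to insert a forbidden double letter (you use $aa$, the paper uses $bb$) into a word that still lies in $\{a,b\}^*\subseteq L$ and hence is derivable. The only real difference is that your pigeonhole over infinitely many $n$ is unnecessary---the paper simply fixes a single $n\geq\ell(G')$ and reads off $u=c\alpha$, $v=\beta c$, $z\in\{a,b\}^+$ directly---but this is a matter of economy, not correctness.
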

\begin{proof}
  In Example~\ref{ex-cg}, we have given a contextual grammar where all selection languages are accepted by
  ordered finite automata, and thus, have shown that $L\in\cEC(\ORD)$.
  
  Suppose that the language $L$ is also generated by a contextual grammar $G'$
  where all selection languages 
  are symmetric definite.
  
  Let us consider a word $w=c(ab)^nc \in L$ for some $n \geq \ell(G')$. Due to the choice of $n$, the word~$w$
  is derived in one step from some word $z$ by using a selection language $S$ and context $(u,v)$: 
  $z\Lra uzv=w$. The word $u$ begins with the letter $c$; the word $v$ ends with $c$. Due to the choice of $n$, we also
  have~$|z|_a>0$ and $|z|_b>0$.
  Since $S$ is symmetric definite over the alphabet $V=\{a,b\}$, it can be expressed as $S=EV^*H$ for some 
  regular languages $E$ and $H$ over $V$. The sets $E$ and $H$ are not empty because $S$ contains at least the word $z$.
  Let $e$ be a word of $E$ and $h$ a word of $H$. Then, the word $ebbh$ belongs to the selection language $S$ as well.
  Since $ebbh\in\{a,b\}^*$ and $\{a,b\}^*\subseteq L$, we can apply the same derivation to this word and obtain $uebbhv$.
  This word starts and ends with $c$ but it does not have the form of those words from $L$ because of the double $b$.
  From this contradiction, it follows $L\notin\cEC(\SYDEF)$. 
\end{proof}

\begin{lemma}\label{lem:ec:suf_o_sydef}
  The language $L = \{a,b\}^*\cup \{c\}\{\lambda, b\}\{ab\}^*\{c\}$ belongs to $\ec{\SUF}\setminus \ec{\SYDEF}$.
\end{lemma}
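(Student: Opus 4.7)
For the inclusion $L \in \ec{\SUF}$, I will exhibit the contextual grammar
\[G = (\sets{a,b,c}, \sets{(S_1,C_1),(S_2,C_2)}, \sets{\lambda})\]
with $S_1 = \sets{a,b}^*$, $C_1 = \sets{(\lambda,a),(\lambda,b)}$, $S_2 = \sets{ab}^* \cup \sets{b}\sets{ab}^*$, and $C_2 = \sets{(c,c)}$. I will then verify (i)~that both selection languages are suffix-closed over $\sets{a,b}$---for $S_2$ this amounts to noting that every suffix of a word from $\sets{ab}^*$ or from $\sets{b}\sets{ab}^*$ is itself of the form $(ab)^k$ or $b(ab)^k$---and (ii)~that $L(G)=L$. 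The latter follows because, starting from $\lambda$, iterating the rule built from $S_1$ produces all of $\sets{a,b}^*$; then $(S_2,C_2)$ wraps the pair $(c,c)$ around each word of $S_2\subseteq\sets{a,b}^*$, giving precisely $\sets{c}\sets{\lambda,b}\sets{ab}^*\sets{c}$, and no further rule can fire once a $c$ has been introduced, because both selection languages live over the sub-alphabet $\sets{a,b}$.

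For the non-inclusion $L \notin \ec{\SYDEF}$, I will follow the strategy used in the proof of Lemma~\ref{lem:ec:ord_o_sydef}. Assume, for contradiction, that $L = L(G')$ for some contextual grammar $G'$ with symmetric-definite selection languages, fix $n\geq\ell(G')$, and consider the word $w = c(ab)^n c \in L$ together with a one-step derivation $z \Lra uzv = w$ using a rule $(S,C)$ with $(u,v) \in C$ and $z \in S$. The first key step is to argue that $z$ must lie strictly in the middle of $w$: every proper non-empty prefix of $w$ either starts with $c$ but fails to end in $c$, and hence belongs to neither $\sets{a,b}^*$ nor $\sets{c}\sets{\lambda,b}\sets{ab}^*\sets{c}$, and the symmetric statement holds for suffixes. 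Since $z$ is derivable in $G'$ and therefore lies in $L$, it cannot be a non-trivial prefix or suffix of $w$. This forces $u$ to begin with $c$, $v$ to end with $c$, and $z\in\sets{a,b}^*$, so $S$ has an alphabet $V'\subseteq\sets{a,b}$ and may be written as $S = EV'^*H$ with non-empty $E,H$.

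The contradiction is then obtained as in the earlier lemma: pick any $e\in E$ and $h\in H$; the word $ebbh$ lies in $S$ and in $\sets{a,b}^*\subseteq L$, so it is derivable in $G'$, and a further application of the same rule yields $u\cdot ebbh\cdot v\in L$. This word starts and ends with $c$ but contains $bb$ in its interior. The main point---and the only place where the extra factor $\sets{\lambda,b}$ appearing here compared with Lemma~\ref{lem:ec:ord_o_sydef} has to be re-examined---is that no word of $\sets{\lambda,b}\sets{ab}^*$ contains $bb$ as a factor, which is immediate since in both $\sets{ab}^*$ and $\sets{b}\sets{ab}^*$ any two $b$'s are separated by at least one~$a$. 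Hence $u\cdot ebbh\cdot v$ does not lie in $L$, the desired contradiction. The principal obstacle I anticipate is the prefix/suffix case analysis showing that $z$ lies strictly inside $w$; once that is in place, the rest of the argument carries over from the preceding lemma with only this $bb$-freeness check as a new ingredient.
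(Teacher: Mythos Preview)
Your approach matches the paper's: the same grammar for membership in $\cEC(\SUF)$ (your $S_2=\{ab\}^*\cup\{b\}\{ab\}^*$ is exactly the paper's $\Suf(\{ab\}^*)$), and the same appeal to the argument of Lemma~\ref{lem:ec:ord_o_sydef} for non-membership in $\cEC(\SYDEF)$, supplemented by the observation that no word of $\{\lambda,b\}\{ab\}^*$ contains the factor $bb$.

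Two small points to tighten. First, the assertion that every proper non-empty prefix of $w=c(ab)^nc$ ``fails to end in $c$'' is false for the one-letter prefix $c$; you should either treat this case separately by noting $c\notin L$, or use the length bound $n\ge\ell(G')$ to force $|z|>1$ (in fact $|z|_a,|z|_b>0$), which rules out $z=c$. Second, the inference ``$z\in\{a,b\}^*$, so $S$ has an alphabet $V'\subseteq\{a,b\}$'' does not follow: the selection alphabet could in principle contain $c$. The paper's version of this step in Lemma~\ref{lem:ec:ord_o_sydef} is equally terse, but the robust fix is to choose $e$ and $h$ not arbitrarily in $E$ and $H$ but from an actual factorisation $z=emh$ witnessing $z\in E(V')^*H$; then $e,h\in\{a,b\}^*$ automatically as factors of $z$, and since $z$ contains a $b$ one has $b\in V'$, so $ebbh\in S\cap\{a,b\}^*$ as required.
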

\begin{proof}
  The language $L$ is generated by the contextual grammar $G=(\sets{a,b,c},\sets{(S_1,C_1),(S_2,C_2)},\sets{\lambda})$ with 
  \[S_1 = \sets{a,b}^*,\quad C_1 = \sets{(\lambda,a),(\lambda,b)}\quad\mbox{and}\quad
    S_2 = \Suf(\sets{ab}^*),\quad C_2 = \sets{(c,c)}\]
  where $\Suf(M)$ denotes the suffix-closure of the set $M$. 
  

  With the same argumentation as in the proof of Lemma~\ref{lem:ec:ord_o_sydef}, one can show also here $L\notin\cEC(\SYDEF)$
  (the letters $c$ are in both cases wrapped around words which are an alternating sequence of $a$ and $b$ what
  cannot be checked by a symmetric definite selection language).
\end{proof}

\begin{lemma}\label{lem:ec:sydef_o_nc}
  Let $L_1= \set{a^n}{n \geq 1}$, $L_2= \set{ba^nb}{n\geq 1}$, $L_3=  \set{cba^{2n}bc}{n \geq 1}$, 
  and $L = L_1 \cup L_2 \cup L_3$. Then, it holds $L\in \ec{\SYDEF}\setminus \ec{\NC}$.
\end{lemma}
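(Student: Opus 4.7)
The plan is to treat the two halves of the statement separately: for membership $L \in \ec{\SYDEF}$ I will construct a contextual grammar whose selection languages are all symmetric definite, and for non-membership $L \notin \ec{\NC}$ I will derive a contradiction from the pumping property of non-counting languages.

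For the positive direction, I propose to build a contextual grammar $G = (\{a,b,c\}, \cS, A)$ whose axiom set $A$ contains $a$ together with the shortest word $cbaabc$ of $L_3$. Two selection pairs are routine: the pair with selection $\{a\}^+$ (combinational over $\{a\}$ and hence symmetric definite in the form $\{a\}\{a\}^*\{\lambda\}$) together with context $(\lambda,a)$ grows $a$ into all of $L_1$, and the same selection together with context $(b,b)$ produces all of $L_2$ from words of $L_1$. The delicate step is producing the remaining words of $L_3$: a third selection pair whose selection language lies in $\SYDEF$ together with an appropriate context must be engineered so that each application lands inside $L_3$. Since a single $\SYDEF$ language cannot distinguish even from odd $a$-exponents, the construction must exploit the restricted set of reachable intermediate words in $G$ and use a context whose length parity makes the wrap land in $L_3$ precisely for those reachable words.

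For the negative direction, assume for contradiction that $L = L(G')$ for some contextual grammar $G'$ whose selection languages are all non-counting. Let $k$ be the maximum of the non-counting constants of the selections in $G'$ and set $p = k + \ell(G')$. Pick a word $w = cba^{2n}bc \in L_3$ with $n$ so large that $2n > 2p$, and let $x \Lra uxv = w$ be a last derivation step producing $w$, where $x \in L(G') = L$ belongs to some selection language $S$ of $G'$. The case $x \in L_3$ is impossible: every word of $L_3$ already contains both $c$'s and both $b$'s of $w$, so $uv$ lies in $\{a\}^*$; then $uxv$ would gain an $a$ outside an outermost $c$, contradicting the form $cba^{2m}bc$. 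Hence $x \in L_1 \cup L_2$, and the $a$-run of $x$ exceeds $k$ in length.

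Now apply the non-counting property to $S$: decomposing $x = y_1 a^k y_3$ around $k$ consecutive $a$'s inside its $a$-run, the defining equivalence $y_1 a^k y_3 \in S \iff y_1 a^{k+1} y_3 \in S$ yields that the word $x'$ obtained from $x$ by inserting one further $a$ into the run also lies in $S$. The word $x'$ still belongs to the same $L_i$ (as $L_1$ and $L_2$ are closed under adding one $a$ to the $a$-run), hence is reachable in $G'$, and applying the same context $(u,v)$ to $x'$ yields $ux'v = cba^{2n+1}bc$, which has odd $a$-count and therefore does not belong to $L$. This contradicts $L = L(G')$, so $L \notin \ec{\NC}$. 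I expect the principal obstacle to be the design of the third selection pair in the positive direction: the absence of parity-counting power in $\SYDEF$ forces the construction to rely on the global derivation structure rather than on a parity-aware selection language.
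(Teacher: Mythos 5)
Your argument for $L\notin\ec{\NC}$ is sound and is in fact more detailed than the paper, which simply cites an earlier result for this half; the case analysis on the last derivation step and the insertion of one extra $a$ via the non-counting equivalence are exactly the right ingredients. The problem is the positive half. You never actually construct the selection pair that produces $L_3$, and the reason you give for its delicacy --- that ``a single $\SYDEF$ language cannot distinguish even from odd $a$-exponents'' --- is false in the relevant sense: a symmetric definite language has the form $EV^*H$ for \emph{arbitrary} regular $E$ and $H$, so all of the parity checking can be pushed into the prefix $E$. The paper takes $S_2=\set{ba^{2m}b}{m\geq 1}V^*\sets{\lambda}$ (that is, $E=\set{ba^{2m}b}{m\geq 1}$ and $H=\sets{\lambda}$), which is symmetric definite and whose intersection with $L$ consists exactly of the even-exponent words of $L_2$; wrapping the context $(c,c)$ around these yields precisely $L_3$. (The paper's other component is $S_1=\{a\}V^*\{\lambda\}$ with contexts $(\lambda,a)$ and $(b,b)$, essentially as you propose.)

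Moreover, the workaround you sketch cannot be made to work. Since the grammar must generate all of $L$, every word $ba^nb$ of $L_2$ --- for both parities of $n$ --- is reachable, so you cannot rely on reachability of intermediate words to filter out the odd exponents; and a fixed context $(u,v)$ containing the letters $c$, wrapped around $ba^{2m}b$ and around $ba^{2m+1}b$, produces two words whose numbers of $a$'s differ by one, so they cannot both lie in $L_3$ (and neither lies in $L_1\cup L_2$). Hence the selection language itself must separate the parities, and within $\SYDEF$ this is achieved by the prefix trick above. As it stands, the membership $L\in\ec{\SYDEF}$ is not established by your proposal.
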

\begin{proof}
  Let $V=\sets{a,b,c}$. The contextual grammar 
  $G=(V,\sets{(S_1,C_1),(S_2,C_2)},\sets{a})$ with 
  \[S_1 = \{a\}V^*\{\lambda\},\quad C_1 = \sets{(\lambda,a),(b,b)}\quad\mbox{and}\quad
    S_2 = \set{ba^{2m}b}{m\geq 1}V^*\{\lambda\},\quad C_2 = \sets{(c,c)}\]
  generates the language $L$. 
  This can be seen as follows: The shortest word of $L$ is $a$ which is the axiom. To every word of $L$ starting with the
  letter $a$ (hence, any word of $L_1$), another $a$ can be added or the letter~$b$ is added at the beginning 
  and the end of the word (using the first selection component) yielding all and only words of the languages $L_1$ 
  and $L_2$. To every word of $L_2$ which also belongs to $S_2$, the letter $c$ is added at the beginning and the end of the
  word (using the second selection component) yielding exactly the words of the language $L_3$. To the words of $L_3$,
  no selection component can be applied. All the selection languages are symmetric definite as can be seen from the form
  in which they are given.
  
  In \cite{Truthe.2021}, it was proved that the language $L$ does not belong to the family $\cEC(\NC)$.
\end{proof}

Next, we show some equalities.

\begin{lemma}\label{lemma:ec:reg_eq_com}
  A restriction to comet languages (left, right, two-sided) as selection languages does not decrease the generative capacity
  of external contextual grammars: 
  \[\ec{\REG} = \ec{\LCOM} = \ec{\RCOM} = \ec{\TCOM}.\]
\end{lemma}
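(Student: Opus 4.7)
The plan is to split the claimed chain of equalities into three easy inclusions and one non-trivial direction. The three containments $\ec{\LCOM} \subseteq \ec{\REG}$, $\ec{\RCOM} \subseteq \ec{\REG}$, and $\ec{\TCOM} \subseteq \ec{\REG}$ are direct consequences of Lemma~\ref{lemma:ec_monoton} applied to the inclusions $\LCOM, \RCOM, \TCOM \subseteq \REG$ displayed in Figure~\ref{fig:lang_erg_1}. The real content lies in establishing $\ec{\REG} \subseteq \ec{\LCOM}$ and $\ec{\REG} \subseteq \ec{\RCOM}$; the remaining inclusion $\ec{\REG} \subseteq \ec{\TCOM}$ then follows from monotonicity together with $\LCOM \subseteq \TCOM$ stated in Lemma~\ref{lemma:mon_subset_sydef_subset_com_subset_2com}.

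To show $\ec{\REG} \subseteq \ec{\LCOM}$, I would use a padding trick with a fresh symbol. Starting from an arbitrary contextual grammar $G = (V, \cS, A)$ with regular selection and $\cS = \{(S_1, C_1), \ldots, (S_n, C_n)\}$, I pick a letter $c \notin V$, set $V' = V \cup \{c\}$, and form $G' = (V', \cS', A)$ with $\cS' = \{(S_i \cdot \{c\}^*, C_i) : 1 \leq i \leq n\}$. Each modified selection language $S'_i = S_i \cdot \{c\}^*$ has the shape $E G^*$ with $E = S_i$ regular and $G = \{c\} \notin \{\emptyset, \{\lambda\}\}$, so $S'_i \in \LCOM$ over the alphabet $U_i \cup \{c\}$, where $U_i$ is the alphabet of $S_i$.

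Next I would verify $L(G') = L(G)$ by a short induction on derivation length. Because $c$ occurs neither in the axioms (which lie in $V^*$) nor in any context of any $C_i$ (contexts are copied verbatim from $G$), every sentential form reachable in $G'$ remains in $V^*$. For such a word $x \in V^*$, the membership $x \in S_i \cdot \{c\}^*$ is equivalent to $x \in S_i$ (only the $\lambda$ from $\{c\}^*$ is usable in a $c$-free word), so the one-step derivation relations $\Lra_G$ and $\Lra_{G'}$ coincide on $V^*$, whence $L(G') = L(G)$. The symmetric construction with padding on the left, $S'_i = \{c\}^* \cdot S_i$, handles $\ec{\REG} \subseteq \ec{\RCOM}$.

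The main obstacle, if any, is conceptual rather than computational: one has to notice that, since the selection mechanism only ever inspects words actually produced by the grammar, we are free to add \emph{inaccessible} words to a selection language without altering the grammar's behavior, and that appending (or prepending) $\{c\}^*$ for a fresh letter $c$ is a canonical way of forcing an arbitrary regular language into comet shape while leaving all accessible matches untouched. Once this observation is in hand, no further technical difficulty arises, and the normal form of Theorem~\ref{theorem:linksendliche_nf_2com} is not even needed for this particular statement.
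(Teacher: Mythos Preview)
Your proposal is correct and follows essentially the same approach as the paper: both introduce a fresh symbol not occurring in $V$ and pad each selection language by $\{c\}^*$ (respectively $\{X\}^*$) on the appropriate side to force it into comet shape, arguing that the padding is invisible to the derivation because no sentential form ever contains the new letter. Your write-up is in fact slightly more explicit than the paper's (you spell out the induction on derivation length and the alphabet of the new selection language), and your closing remark that Theorem~\ref{theorem:linksendliche_nf_2com} is not needed here is accurate.
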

\begin{proof}
  With the inclusions $\LCOM\subseteq \TCOM$, $\RCOM\subseteq \TCOM$, and $\TCOM\subseteq \REG$
  (see Theorem~\ref{theorem:neue_hierarchie} and Figure~\ref{fig:lang_erg_1}), we obtain also the inclusions
  $\cEC(\LCOM)\subseteq \cEC(\TCOM)$, $\cEC(\RCOM)\subseteq \cEC(\TCOM)$, and~$\cEC(\TCOM)\subseteq \cEC(\REG)$
  according to Lemma \ref{lemma:ec_monoton}.
  
  Let $G = (V, \{(S_1, C_1), \dots, (S_n,C_n)\}, A)$ be a contextual grammar 
  with arbitrary regular selection languages. Further, let $X$ be a new symbol ($X \notin V$). 
  We set $S_i' = \{X\}^*S_i$ for $1 \leq i \leq n$. Then, the contextual grammar 
  $G' = (V \cup \{X\}, \{(S_1', C_1), \dots, (S_n',C_n)\}, A)$ generates the same language as $G$. 
  The selection languages are all right-sided comet languages. The letter $X$ neither occurs in an axiom
  nor in a context. Therefore, the part $\{X\}^*$ of the selection languages has no impact on the possible
  derivations (the only word used is $\lambda$). Thus, the inclusion $\ec{\REG}\subseteq \ec{\RCOM}$ holds. 
  
  With $S_i' = S_i\{X\}^*$ for $1 \leq i \leq n$, the same language is generated and the selection languages are
  left-sided comets. Hence, we also have the inclusion $\ec{\REG}\subseteq \ec{\LCOM}$. 
  Hence, we obtain the chain of inclusions $\cEC(\REG)\subseteq \cC \subseteq \TCOM\subseteq \cEC(\REG)$ for $\cC\in\{\LCOM,\RCOM\}$ which implies the
  equalities stated in the lemma.
\end{proof}

We now prove some proper inclusions.

\begin{lemma}\label{lemma:ec:mon_ss_star}
  The family $\ec{\MON}$ is a proper subset of the family $\ec{\STAR}$.  
\end{lemma}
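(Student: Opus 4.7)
The plan is to establish the inclusion $\ec{\MON}\subseteq\ec{\STAR}$ from $\MON\subset\STAR$ (Lemma~\ref{lemma:mon_subset_star_subset_uf}) together with the monotonicity of $\mathcal{EC}$ (Lemma~\ref{lemma:ec_monoton}), and then to exhibit a witness language in $\ec{\STAR}\setminus\ec{\MON}$. My candidate is $L=\sets{ab}^*\cup\sets{ba}^*$.

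For $L\in\ec{\STAR}$, I would take the contextual grammar with alphabet $\sets{a,b}$, axiom $\lambda$, and the two selection pairs $(\sets{ab}^*,\sets{(\lambda,ab)})$ and $(\sets{ba}^*,\sets{(\lambda,ba)})$. Both selection languages are of the form $H^*$ and hence star. From $\lambda$, the first pair builds up $\sets{ab}^+$ step by step and the second $\sets{ba}^+$; once the current word lies in one of these two sets, it leaves the other selection language, so the two branches do not interfere and the generated language is exactly $L$.

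For $L\notin\ec{\MON}$ I would argue by contradiction. Assume a monoidal grammar $G'$ generates $L$ and pick $n>\ell(G')$. Then the derivation of $(ab)^n$ has at least one step, and in its last step $z\Lra(ab)^n$ a monoidal selection $U^*$ is used with $z\in U^*$ and some context $(u,v)$ satisfying $uzv=(ab)^n$. Since $z$ is derivable and a contiguous subword of $(ab)^n$, it lies in $L\cap\sets{a,b}^*=\sets{ab}^*\cup\sets{ba}^*$. The choice of $n$ guarantees $|z|\geq 2$, so $z$ contains both letters $a$ and $b$; this forces $U=\sets{a,b}$, hence $U^*=\sets{a,b}^*$.

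The decisive step is to apply the same selection pair to another derivable word. A short case analysis on the position of $z$ inside $(ab)^n$ pins down the shape of $(u,v)$: if $z=(ab)^k$, then $u,v\in\sets{ab}^*$ with $uv\neq\lambda$, and applying $(u,v)$ to the derivable word $ba\in U^*\cap L$ yields $u\cdot ba\cdot v$, which contains $bb$ or $aa$ according to whether $u$ or $v$ is non-empty; if $z=(ba)^k$, then $u$ ends with $a$ and $v$ starts with $b$, and applying $(u,v)$ to the derivable word $ab\in U^*\cap L$ produces a word of the form $(ab)^i\,aabb\,(ab)^j$. In either case the resulting word is not strictly alternating and therefore does not belong to $L$, contradicting $L(G')=L$. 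I expect the main obstacle to be this case analysis, in particular verifying that the structure of $(ab)^n$ forces $(u,v)$ into the described forms so that the chosen alternative ($ba$ or $ab$) really witnesses the failure of $L$-membership.
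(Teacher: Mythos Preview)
Your argument is correct, modulo one small imprecision: from the fact that $z$ contains both $a$ and $b$ you can only conclude $\{a,b\}\subseteq U$, not $U=\{a,b\}$ (the alphabet of $G'$ may be larger). This is harmless, since all you actually use is that $ab$ and $ba$ lie in $U^*$, which already follows from $\{a,b\}\subseteq U$. The case analysis on the shape of $(u,v)$ is sound: a factor of $(ab)^n$ lying in $\{ab\}^*\cup\{ba\}^*$ of length at least $2$ is indeed some $(ab)^k$ or $(ba)^k$, and the forms $u,v\in\{ab\}^*$ respectively $u=(ab)^ia$, $v=b(ab)^j$ drop out exactly as you describe.

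Your route differs from the paper's. The paper does not construct a fresh witness here; it reuses the language $\{a^n\mid n\geq 2\}\cup\{ba^{2n}b\mid n\geq 1\}$ from Lemma~\ref{lemma:ec:star_o_ps}, which is shown there to lie in $\ec{\STAR}\setminus\ec{\PS}$, and then invokes $\ec{\MON}\subseteq\ec{\PS}$. That approach buys a stronger separation (from $\ec{\PS}$, not merely from $\ec{\MON}$) with a single argument that is recycled in several subsequent incomparability lemmas. Your approach, by contrast, is more elementary and self-contained: the witness $\{ab\}^*\cup\{ba\}^*$ and its non-membership in $\ec{\MON}$ require no detour through power-separating languages. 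The trade-off is that your witness separates only from $\ec{\MON}$ and would not serve the later lemmas that need separation from $\ec{\PS}$.
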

\begin{proof}
  With the inclusion $\MON\subseteq \STAR$ (see Theorem~\ref{theorem:neue_hierarchie} and Figure~\ref{fig:lang_erg_1}), 
  we obtain also the inclusion~$\cEC(\MON)\subseteq \cEC(\STAR)$
  according to Lemma \ref{lemma:ec_monoton}.
  
  The language $L=\set{a^n}{n \geq 2}\cup \set{ba^{2n}b}{n \geq 1}$ from Lemma~\ref{lemma:ec:star_o_ps}
  belongs to the family $\ec{\STAR}$ but not to the family $\ec{\PS}$ and, hence, neither to $\ec{\MON}$.
  Thus, the language is a witness for the properness of the inclusion.
\end{proof}

\begin{lemma}\label{lemma:ec:fin_ss_inf_pre_star}
  The family $\ec{\FIN}$ is a proper subset of the family $\ec{\STAR}$
\end{lemma}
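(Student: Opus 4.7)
The plan is to obtain the inclusion $\ec{\FIN}\subseteq\ec{\STAR}$ by chaining two already available facts and to get strictness from a single infinite witness. For the inclusion, I would combine the proper inclusion $\ec{\FIN}\subset\ec{\MON}$ of \cite{Dassow.2005} (the edge between these two nodes in Figure~\ref{fig:lang_erg_2}) with the just-established $\ec{\MON}\subseteq\ec{\STAR}$ of Lemma~\ref{lemma:ec:mon_ss_star}; transitivity yields $\ec{\FIN}\subseteq\ec{\MON}\subseteq\ec{\STAR}$.

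For the strictness I would use the infinite monoidal language $L=\sets{a}^*$. Membership $L\in\ec{\STAR}$ is witnessed by the contextual grammar with alphabet $\sets{a}$, axiom $\lambda$, and the single selection pair $(\sets{a}^*,\sets{(a,\lambda)})$: its only selection language $\sets{a}^*$ is a star language, and every word $a^n$ is obtained in $n$ steps from~$\lambda$. To show $L\notin\ec{\FIN}$, I would verify the structural fact $\ec{\FIN}\subseteq\FIN$: in any contextual grammar $G=(V,\sets{(S_1,C_1),\ldots,(S_k,C_k)},A)$ with finite selection languages, every word $z\in L(G)$ is either an axiom or arises from a final derivation step $x\Lra uxv=z$ with $x\in\bigcup_i S_i$, so $L(G)\subseteq A\cup\Setl{uxv}{x\in\bigcup_i S_i,\ (u,v)\in\bigcup_i C_i}$ is finite. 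Since $\sets{a}^*$ is infinite, it cannot belong to $\ec{\FIN}$.

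No real obstacle is expected. The inclusion is immediate from the existing hierarchy, and the strictness rests on the routine observation that finite selection languages cannot yield infinite languages, which is a direct consequence of the form of a single derivation step.
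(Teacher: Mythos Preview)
Your proposal is correct and follows the same route as the paper: chain $\ec{\FIN}\subset\ec{\MON}$ from \cite{Dassow.2005} with $\ec{\MON}\subset\ec{\STAR}$ from Lemma~\ref{lemma:ec:mon_ss_star}. Note that your explicit witness $\sets{a}^*$ and the verification $\ec{\FIN}\subseteq\FIN$ are sound but unnecessary, since the properness of $\ec{\FIN}\subset\ec{\MON}$ that you already cite, together with the inclusion $\ec{\MON}\subseteq\ec{\STAR}$, already yields strictness.
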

\begin{proof}
According to \cite{Dassow.2005}, $\cEC(\FIN)\subset\cEC(\MON)$. According to Lemma~\ref{lemma:ec:mon_ss_star},
$\ec{\MON}\subset\ec{STAR}$. Hence, the family $\ec{\FIN}$ is also a proper subset of the family $\cEC(\STAR)$. 
\end{proof}

\begin{lemma}\label{lemma:ec:star_ss_rcom}
  The family $\ec{\STAR}$ is a proper subset of the families $\ec{\LCOM}$ and $\cEC(\RCOM)$.
\end{lemma}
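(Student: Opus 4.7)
The plan is to establish both inclusions $\ec{\STAR}\subseteq\ec{\LCOM}$ and $\ec{\STAR}\subseteq\ec{\RCOM}$ by the padding construction used in the proof of Lemma~\ref{lemma:ec:reg_eq_com}, and then to provide a single witness language that proves properness of both inclusions simultaneously. The main conceptual obstacle is that $\STAR$ is incomparable to both $\LCOM$ and $\RCOM$ as subregular families (Lemmas~\ref{lemma:star_uf_unvergleichbarzu_comb_2com}--\ref{lemma:star_unvergleichbarzu_comm_circ}), so monotonicity (Lemma~\ref{lemma:ec_monoton}) does not deliver the inclusion for free; it has to be recovered by an explicit grammar transformation that turns each star selection language into a comet selection language without changing the set of derivable words.

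For the inclusions, let $G = (V, \sets{(S_1, C_1), \dots, (S_n, C_n)}, A)$ be an external contextual grammar with every $S_i = H_i^*$ a star language, and pick a fresh symbol $X\notin V$. Define $S_i' = S_i\sets{X}^* = H_i^*\sets{X}^*$; this has the form $EG^*$ with $E = H_i^*$ regular and $G = \sets{X}\notin\sets{\emptyset,\sets{\lambda}}$, so $S_i'\in\LCOM$. The grammar $G' = (V\cup\sets{X}, \sets{(S_1',C_1),\dots,(S_n',C_n)}, A)$ generates exactly $L(G)$, because the symbol $X$ appears neither in any axiom nor in any context, so every word reachable in a derivation of $G'$ stays in $V^*$, and for such a word $x$ we have $x\in S_i'$ iff $x\in S_i$. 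Setting instead $S_i' = \sets{X}^*S_i$ produces right-sided comets $G^*H$ with $G = \sets{X}$ and $H = H_i^*$, and the identical derivation-preservation argument yields $\ec{\STAR}\subseteq\ec{\RCOM}$.

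For strictness of both inclusions simultaneously, I would invoke the witness from Lemma~\ref{lemma:ec:nil_o_star}, namely $L = \set{a^nbbb}{n\geq 1}\cup\sets{\lambda}\in\ec{\NIL}\setminus\ec{\STAR}$. Since $\NIL\subseteq\REG$, Lemma~\ref{lemma:ec_monoton} gives $\ec{\NIL}\subseteq\ec{\REG}$, and Lemma~\ref{lemma:ec:reg_eq_com} yields $\ec{\REG} = \ec{\LCOM} = \ec{\RCOM}$. Thus $L$ lies in $\ec{\LCOM}\cap\ec{\RCOM}$ but not in $\ec{\STAR}$, proving that both inclusions are proper. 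Verifying that the padded grammar really preserves the generated language is routine once one observes that no intermediate word can ever contain $X$, so the only real step of the argument is the recognition that every star language becomes a comet after a single concatenation with $\sets{X}^*$.
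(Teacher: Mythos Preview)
Your proof is correct, but it takes a different route than the paper and is slightly more laborious than necessary. The paper argues via the near-inclusion $\STAR\setminus\{\{\lambda\}\}\subseteq\LCOM\cap\RCOM$: every star language except $\{\lambda\}$ is already a comet, and a selection component $(\{\lambda\},C)$ can be eliminated by moving the finitely many words $uv$ with $(u,v)\in C$ into the axiom set (since $\lambda$ can only appear in a derivation if it is an axiom). Your padding trick $S_i\mapsto S_i\{X\}^*$ certainly works too, but note that it makes your opening remark (``monotonicity does not deliver the inclusion for free'') somewhat misleading: since $\STAR\subseteq\REG$, monotonicity gives $\ec{\STAR}\subseteq\ec{\REG}$, and Lemma~\ref{lemma:ec:reg_eq_com} then immediately yields $\ec{\REG}=\ec{\LCOM}=\ec{\RCOM}$ --- so the inclusion \emph{is} free once Lemma~\ref{lemma:ec:reg_eq_com} is available, and the explicit reconstruction of the padding is redundant. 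For properness you use the $\NIL$-witness of Lemma~\ref{lemma:ec:nil_o_star}, whereas the paper uses the $\COMB$-witness of Lemma~\ref{lemma:ec:comb_o_pre_star}; both choices are valid.
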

\begin{proof}
The inclusions~$\STAR\setminus \{\{\lambda\}\} \subseteq \LCOM$ and~$\STAR\setminus \{\{\lambda\}\} \subseteq \RCOM$ hold
as recalled in Section~\ref{sec:subreg}. 
Consider an external contextual grammar with a single selection component $(\{\lambda\},C)$ (if there are more components
with the selection language $\{\lambda\}$, they can be joined to one where the new set of contexts is the union 
of the single sets and the selection language is still the same). If the generated language contains the empty word, then this
is an axiom since it cannot be obtained by derivation. Then, exactly the (finitely many) words $uv$ with $(u,v)\in C$ are 
generated using this selection component. Thus, if we put all these words $uv$ with $(u,v)\in C$ into the set of
axioms as well, we can remove the component~$(\{\lambda\},C)$ and obtain a contextual grammar which generates the same
language but has no selection language $\{\lambda\}$ anymore. Then, the remaining selection languages belong to the 
families~$\LCOM$ and~$\RCOM$. Hence, every language of $\ec{\STAR}$ also belongs to the families $\ec{\LCOM}$ and $\cEC(\RCOM)$.

According to Lemma~\ref{lemma:ec:comb_o_pre_star}, the language 
$L = \set{b^na}{n \geq 0}\cup \{\lambda\}$ belongs to $\ec{\COMB}$ (and also to~$\ec{\LCOM}$ and $\ec{\RCOM}$ by 
Theorem~\ref{theorem:neue_hierarchie}, Figure~\ref{fig:lang_erg_1}, and Lemma~\ref{lemma:ec_monoton}) but not to $\ec{\STAR}$. 
This proves the properness of the inclusion.
\end{proof}

\begin{lemma}\label{lemma:ec:def_ss_sydef}
  The family $\ec{\DEF}$ is a proper subset of the family $\ec{\SYDEF}$.
\end{lemma}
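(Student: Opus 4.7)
The plan is to prove the inclusion $\ec{\DEF}\subseteq\ec{\SYDEF}$ by a grammar transformation, and then to derive properness from a witness language constructed in an earlier lemma.

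For the inclusion, I would start with a contextual grammar $G=(V,\cS,A)$ whose selection languages are all definite and construct an equivalent grammar $G'$ whose selection languages are all symmetric definite. Every definite selection pair has the form $(A_0\cup V^*B,C)$ with $A_0,B\subseteq V^*$ finite. The ``suffix part'' $V^*B=\{\lambda\}V^*B$ is already symmetric definite, so it transfers directly to a selection pair $(\{\lambda\}V^*B,C)$ of $G'$. The difficulty is the ``initial part'' $A_0$: over a non-empty alphabet no non-empty finite language is symmetric definite, since any $EV^*H$ with $E,H\neq\emptyset$ is infinite, so $A_0$ must be eliminated from the selection by some other means.

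To handle the finite part I plan to exploit the fact that every context application strictly increases the length of the current word while $A_0$ is finite: any maximal chain of derivation steps that uses the selection $(A_0,C)$ therefore has length at most $|A_0|$, so the iterated effect of the finite part on any given starting word is precomputable. By adding the finitely many ``closure'' words to the axiom set of $G'$ and by adding auxiliary symmetric definite selection pairs (possibly over an extended alphabet $V\cup\{X\}$ with a fresh marker $X$ appearing neither in any axiom nor in any context of $G'$, so that no derived word ever contains $X$), one can simulate the removed pair $(A_0,C)$. The main obstacle will be the case in which a word $w\in A_0$ first appears in the middle of a derivation, after some non-$A_0$ selection has already fired; a pure axiom extension does not suffice there, and the auxiliary pairs must be designed so as to catch $w$ ``on the fly''. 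Once the construction is in place, a routine induction on derivation length would yield $L(G)=L(G')$.

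For the strictness of the inclusion I would invoke Lemma~\ref{lem:ec:sydef_o_nc}, which exhibits a language $L=L_1\cup L_2\cup L_3$ in $\ec{\SYDEF}\setminus\ec{\NC}$. Because Figure~\ref{fig:lang_erg_2} gives the chain $\ec{\DEF}\subseteq\ec{\ORD}\subseteq\ec{\NC}$, this $L$ belongs to $\ec{\SYDEF}\setminus\ec{\DEF}$, witnessing the required properness.
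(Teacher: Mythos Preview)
Your overall strategy coincides with the paper's: split each definite selection $S=A_0\cup U^*B$ into the symmetric definite part $U^*B=\{\lambda\}U^*B$ and the finite part $A_0$, eliminate the latter, and invoke Lemma~\ref{lem:ec:sydef_o_nc} together with $\ec{\DEF}\subseteq\ec{\NC}$ for properness. The properness half is exactly the paper's argument.

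Where you diverge is in the elimination of the finite components. You assert that ``a pure axiom extension does not suffice'' when some $w\in A_0$ is reached mid-derivation, and you propose auxiliary selection pairs over an extended alphabet. This is a misconception: a pure axiom extension is enough, and that is precisely what the paper does. One simply adds to the axiom set every word $uwv$ with $w\in A_i\cap L(G)$ and $(u,v)\in C_i$ (finitely many, since both $A_i$ and $C_i$ are finite) and then deletes all the components $(A_i,C_i)$. To see why this suffices, take any derivation $x_0\Lra\cdots\Lra x_k=z$ in the original grammar and let $x_{j-1}\Lra x_j$ be the \emph{last} step that uses a finite component, say with $x_{j-1}\in A_i$. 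Since $x_{j-1}$ is derivable it lies in $A_i\cap L(G)$, so $x_j=ux_{j-1}v$ is one of the new axioms; the remaining steps $x_j\Lra\cdots\Lra z$ use only the symmetric definite components and can therefore be carried out verbatim in the modified grammar. Thus no auxiliary pairs and no alphabet extension are needed, and your anticipated ``on the fly'' obstacle does not arise.
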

\begin{proof}
  Let $G=(V,\{(S_1,C_1),\ldots, (S_n,C_n)\},A)$ be a contextual grammar where 
  all selection languages are definite: $S_i=U_i^*B_i\cup A_i$ for $1\leq i\leq n$.
  We first separate the finite parts and obtain the contextual 
  grammar~$G'=(V,\{(U_1^*B_1,C_1),(A_1,C_1),\ldots, (U_n^*B_n,C_n),(A_n,C_n)\},A)$
  which generates the same language as $G$. Next, we eliminate the components with finite selection languages:
  If a set $B_i$ is empty, then the entire selection language is empty and cannot be used for derivation. Hence,
  we can simply omit such selection components without changing the generated language.
  For every component $(A_i,C_i)$ where $A_i$ is a finite language ($1\leq i\leq n$), 
  we move all words $uwv$ with $(u,v)\in C_i$ and $w\in A_i\cap L(G)$ into the set of axioms. These are finitely many
  (as $A_i$ and $C_i$ are finite) and are exactly the words generated by these components). Hence, we can remove these
  components afterwards. Then, we have obtained a contextual grammar which still generates the same language $L(G)$ but
  has only symmetric definite languages left.
  
  The language $L= \set{a^n}{n \geq 1}\cup \set{ba^nb}{n\geq 1}\cup \set{cba^{2n}bc}{n \geq 1}$
  is a witness language for the properness of the inclusion which,
  according to Lemma~\ref{lem:ec:sydef_o_nc},
  belongs to the family $\ec{\SYDEF}$ but not to the family $\ec{\NC}$ and, hence, not to 
  (since $\cEC(\DEF)\subset\cEC(\NC)$ according to \cite{Dassow.2005}).
\end{proof}

\begin{lemma}\label{lemma:ec:sydef_ss_ps}
  The family $\ec{\SYDEF}$ is a proper subset of the family $\ec{\PS}$.
\end{lemma}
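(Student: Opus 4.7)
The plan is straightforward: split the statement into the inclusion and the properness, and reuse material already assembled earlier in the paper.

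For the inclusion $\ec{\SYDEF}\subseteq\ec{\PS}$, I would invoke the subregular relationship $\SYDEF\subset\PS$ recorded in Figure~\ref{fig:lang_erg_1} (originally proved in \cite{Olejar_Szabari.2023}) and then apply the monotonicity statement Lemma~\ref{lemma:ec_monoton} to lift this to the level of external contextual grammars.

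For the properness, I would exhibit a witness from the families already separated. The language $L=\{a,b\}^*\cup\{c\}\{ab\}^*\{c\}$ of Example~\ref{ex-cg} is the most convenient choice: Lemma~\ref{lem:ec:ord_o_sydef} already establishes $L\in\ec{\ORD}\setminus\ec{\SYDEF}$, so it only remains to observe that $L\in\ec{\PS}$. This follows from the known chain $\ec{\ORD}\subset\ec{\NC}\subset\ec{\PS}$ in Figure~\ref{fig:lang_erg_2} (which in turn uses the subregular inclusions $\ORD\subset\NC\subset\PS$ together with Lemma~\ref{lemma:ec_monoton}). Thus $L\in\ec{\PS}\setminus\ec{\SYDEF}$, which gives the desired strict separation.

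There is essentially no obstacle here, because the heavy lifting -- namely that the language $\{a,b\}^*\cup\{c\}\{ab\}^*\{c\}$ cannot be generated by any external contextual grammar with symmetric definite selection languages -- has been carried out in Lemma~\ref{lem:ec:ord_o_sydef}. The only minor care required is to make sure that one cites the correct transitive chain for placing the witness inside $\ec{\PS}$; alternatively, one could equally well take the suffix-closed witness of Lemma~\ref{lem:ec:suf_o_sydef} and use $\ec{\SUF}\subseteq\ec{\PS}$ from Figure~\ref{fig:lang_erg_2}, which would also work.
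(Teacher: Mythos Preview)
Your proposal is correct and matches the paper's own proof essentially line for line: the inclusion is obtained from $\SYDEF\subseteq\PS$ (due to \cite{Olejar_Szabari.2023}) via the monotonicity Lemma~\ref{lemma:ec_monoton}, and the properness is witnessed by the language $L=\{a,b\}^*\cup\{c\}\{ab\}^*\{c\}$ of Lemma~\ref{lem:ec:ord_o_sydef}. The only cosmetic difference is that the paper justifies $L\in\ec{\PS}$ by citing \cite{Truthe.2021} directly, whereas you route through the chain $\ec{\ORD}\subseteq\ec{\NC}\subseteq\ec{\PS}$ of Figure~\ref{fig:lang_erg_2}; these amount to the same thing.
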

\begin{proof}
  From \cite{Olejar_Szabari.2023}, we know the inclusion $\SYDEF\subseteq\PS$. Therefore, by Lemma~\ref{lemma:ec_monoton},
  we have the inclusion~$\ec{\SYDEF}\subseteq\ec{\PS}$. Its properness follows from Lemma~\ref{lem:ec:ord_o_sydef}
  with~$L = \{a,b\}^*\cup \{c\}\{ab\}^*\{c\}$ which belongs to the family $\ec{\ORD}$
  (and also to $\ec{\PS}$ by \cite{Truthe.2021}) but not to the family $\ec{\SYDEF}$.
\end{proof}

Now, we prove the incomparability relations mentioned in Figure~\ref{fig:lang_erg_2} which have not been proved
earlier. These are the relations regarding the families $\cEC(\STAR)$ and $\cEC(\SYDEF)$ since the families~$\cEC(\LCOM)$, $\cEC(\RCOM)$, and $\cEC(\TCOM)$ coincide with $\cEC(\REG)$ and are therefore not incomparable to the other 
families mentioned.

\begin{lemma}\label{lemma:ec:unv_star_comb_def_ord_nc_ps}
  Let $\cF = \{\COMB, \DEF, \SYDEF, \ORD, \NC, \PS\}$. The family $\ec{\STAR}$ is incomparable to each family $\ec{F}$ 
  with $F \in \cF$.
\end{lemma}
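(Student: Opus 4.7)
The plan is to reduce the six incomparability claims to just two witness languages by exploiting the hierarchy already established in Figure~\ref{fig:lang_erg_2} together with the monotonicity of $\cEC(\cdot)$ given in Lemma~\ref{lemma:ec_monoton}. First I will observe that within the family $\cF$, the class $\COMB$ is minimal and $\PS$ is maximal: the chain $\COMB\subset\DEF\subset\ORD\subset\NC\subset\PS$ together with $\DEF\subset\SYDEF\subset\PS$ shows that, for every $F\in\cF$, the inclusions $\COMB\subseteq F\subseteq\PS$ hold (at the level of subregular families, cf.\ Figure~\ref{fig:lang_erg_1}). By Lemma~\ref{lemma:ec_monoton} this lifts to $\ec{\COMB}\subseteq\ec{F}\subseteq\ec{\PS}$ for every $F\in\cF$.

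Consequently, to separate $\ec{\STAR}$ from every $\ec{F}$ from above, it is enough to find a single language in $\ec{\STAR}\setminus\ec{\PS}$, because such a language cannot lie in the smaller class $\ec{F}$. For this I will simply invoke Lemma~\ref{lemma:ec:star_o_ps}, which supplies the witness $L=\set{a^n}{n\geq 2}\cup\set{ba^{2n}b}{n\geq 1}\in\ec{\STAR}\setminus\ec{\PS}$.

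Symmetrically, to separate every $\ec{F}$ from $\ec{\STAR}$, it suffices to exhibit a single language in $\ec{\COMB}\setminus\ec{\STAR}$, which by monotonicity then lies in $\ec{F}\setminus\ec{\STAR}$ for every $F\in\cF$. Here I will invoke Lemma~\ref{lemma:ec:comb_o_pre_star}, which provides $L=\set{b^na}{n\geq 0}\cup\sets{\lambda}\in\ec{\COMB}\setminus\ec{\STAR}$.

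There is essentially no hard step: all the heavy lifting has been done in Lemmas~\ref{lemma:ec:star_o_ps} and~\ref{lemma:ec:comb_o_pre_star}, whose constructions already produce the needed witnesses. The only subtlety is to make sure that the extremal-family argument really does cover every member of $\cF$ in both directions, i.e.\ to check that the hierarchy of Figure~\ref{fig:lang_erg_2} does indeed sandwich each $\ec{F}$ between $\ec{\COMB}$ and $\ec{\PS}$; once this is noted, the proof collapses to two one-line invocations of earlier lemmas together with Lemma~\ref{lemma:ec_monoton}.
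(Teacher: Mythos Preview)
Your proposal is correct and matches the paper's proof exactly: the paper also reduces everything to the two witnesses $L_1\in\ec{\COMB}\setminus\ec{\STAR}$ from Lemma~\ref{lemma:ec:comb_o_pre_star} and $L_2\in\ec{\STAR}\setminus\ec{\PS}$ from Lemma~\ref{lemma:ec:star_o_ps}, relying on the sandwich $\ec{\COMB}\subseteq\ec{F}\subseteq\ec{\PS}$ for every $F\in\cF$. One small slip to fix: at the subregular level $\DEF\not\subset\SYDEF$ (these families are in fact incomparable, cf.\ Lemma~\ref{lemma:sydef_2com_unvergleichbarzu_fin_nc}); use $\COMB\subset\SYDEF\subset\PS$ directly from Figure~\ref{fig:lang_erg_1} instead, which is all your argument actually needs.
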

\begin{proof}
  Due to the inclusion relations, it suffices to show that there are two languages $L_1$ and $L_2$ 
  with the properties $L_1\in\cEC(\COMB)\setminus\cEC(\STAR)$ and $L_2\in\cEC(\STAR)\setminus\cEC(\PS)$.
  From Lemma \ref{lemma:ec:comb_o_pre_star}, we have the language $L_1 = \set{b^na}{n \geq 0}\cup \{\lambda\}$.
  From Lemma \ref{lemma:ec:star_o_ps}, we have $L_2 = \set{ba^{2n}b}{n \geq 1} \cup \set{a^n}{n \geq 2}$.
\end{proof}

\begin{lemma}\label{lemma:ec:unv_star_nil_comm_circ}
  Let $\cF = \{\NIL, \COMM, \CIRC\}$. The family $\ec{\STAR}$ is incomparable to each family $\ec{F}$ with $F \in \cF$.
\end{lemma}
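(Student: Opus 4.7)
The plan is to mirror the recipe already used in the preceding incomparability lemmas (for example Lemma~\ref{lemma:ec:unv_star_comb_def_ord_nc_ps} and Lemma~\ref{lemma:ec:unv_star_nil_comm_circ} itself would be handled the same way): exhibit one witness on each side of the incomparability and let the monotonicity principle (Lemma~\ref{lemma:ec_monoton}) together with the already-established hierarchy of Figure~\ref{fig:lang_erg_2} collapse the three separate incomparability claims into two witness verifications.

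Concretely, from Figure~\ref{fig:lang_erg_2} we read off the chain $\ec{\NIL}\subseteq\ec{\COMM}\subseteq\ec{\CIRC}$. Consequently, to prove incomparability of $\ec{\STAR}$ with each of the three families at once, it suffices to produce a language $L_1\in\ec{\STAR}\setminus\ec{\CIRC}$ (which, by the chain, certifies $\ec{\STAR}\not\subseteq\ec{F}$ for every $F\in\cF$) and a language $L_2\in\ec{\NIL}\setminus\ec{\STAR}$ (which, again by the chain, certifies $\ec{F}\not\subseteq\ec{\STAR}$ for every $F\in\cF$).

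For the first witness I would invoke Lemma~\ref{lem:ec:star_o_circ} and take $L_1=\set{a^nb^n}{n\geq 1}\cup\set{b^na^n}{n\geq 1}$, which is already shown there to lie in $\ec{\STAR}\setminus\ec{\CIRC}$. For the second witness I would invoke Lemma~\ref{lemma:ec:nil_o_star} and take $L_2=\set{a^nbbb}{n\geq 1}\cup\{\lambda\}$, for which the required membership in $\ec{\NIL}$ and non-membership in $\ec{\STAR}$ have already been verified.

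I do not anticipate any real obstacle: the statement is essentially a corollary of the two cited lemmas combined with monotonicity. The only point worth double-checking while writing the proof is that the inclusion chain $\ec{\NIL}\subseteq\ec{\COMM}\subseteq\ec{\CIRC}$ really is the one depicted in Figure~\ref{fig:lang_erg_2} (via the arrows labelled \cite{Dassow.2005} and \cite{Dassow_Manea_Truthe.2012}), so that a single witness on each side is genuinely enough to cover all three choices $F\in\{\NIL,\COMM,\CIRC\}$.
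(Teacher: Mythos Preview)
Your proposal is correct and matches the paper's own proof essentially verbatim: the paper reduces the three incomparabilities to the same two witnesses, citing Lemma~\ref{lemma:ec:nil_o_star} for $\set{a^nbbb}{n\geq 1}\cup\{\lambda\}\in\ec{\NIL}\setminus\ec{\STAR}$ and Lemma~\ref{lem:ec:star_o_circ} for $\set{a^nb^n}{n\geq 1}\cup\set{b^na^n}{n\geq 1}\in\ec{\STAR}\setminus\ec{\CIRC}$, and invokes the inclusion chain $\ec{\NIL}\subseteq\ec{\COMM}\subseteq\ec{\CIRC}$ exactly as you do. The only cosmetic difference is that the paper swaps the names $L_1$ and $L_2$ relative to your write-up.
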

\begin{proof}
  Due to the inclusion relations, it suffices to show that there are two languages $L_1$ and $L_2$ 
  with the properties $L_1\in\cEC(\NIL)\setminus\cEC(\STAR)$ and $L_2\in\cEC(\STAR)\setminus\cEC(\CIRC)$.
  From Lemma \ref{lemma:ec:nil_o_star}, we have the language $L_1 = \set{a^nbbb}{n\geq 1} \cup \{\lambda\}$.
  From Lemma \ref{lem:ec:star_o_circ}, we have $L_2 = \set{a^nb^n}{n \geq 1} \cup \set{b^na^n}{n \geq 1}$.
\end{proof}

\begin{lemma}\label{lemma:ec:unv_star_suf}
  The language family $\ec{\STAR}$ is incomparable to the family $\ec{\SUF}$.
\end{lemma}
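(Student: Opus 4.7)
The plan is to establish the incomparability by exhibiting one witness language in each direction, both of which are already available from lemmas proved earlier in this section. Concretely, I need a language $L_1\in\ec{\STAR}\setminus\ec{\SUF}$ and a language $L_2\in\ec{\SUF}\setminus\ec{\STAR}$.

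For the first direction, I would reuse the language $L_1=\set{a^n}{n\geq 2}\cup\set{ba^{2n}b}{n\geq 1}$ from Lemma~\ref{lemma:ec:star_o_ps}, where it is shown that $L_1\in\ec{\STAR}\setminus\ec{\PS}$. The inclusion $\ec{\SUF}\subset\ec{\PS}$ is already depicted in Figure~\ref{fig:lang_erg_2} (with the edge label pointing to~\cite{Truthe.2021}), so $L_1\notin\ec{\PS}$ immediately yields $L_1\notin\ec{\SUF}$. This handles the first half without any new argument.

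For the second direction, I would simply cite Lemma~\ref{lemma:ec:suf_o_star}, which supplies the language $L=L_1\cup L_2$ with $L_1=\sets{a,b}^*\set{a^n b^m}{n\geq 1,\ m\geq 1}$ and $L_2=\set{ca^n b^mc}{n\geq 1,\ m\geq 1}$, already shown to lie in $\ec{\SUF}\setminus\ec{\STAR}$. Together with the previous witness, both strict non-inclusions are established, so the families are incomparable.

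There is essentially no obstacle here: the nontrivial work (constructing the contextual grammars and ruling out $\STAR$-selection in Lemma~\ref{lemma:ec:suf_o_star}, and ruling out $\PS$-selection in Lemma~\ref{lemma:ec:star_o_ps}) has already been done. The only thing worth double-checking is that the chain $\ec{\SUF}\subseteq\ec{\PS}$ used in the first direction is indeed established earlier in the paper (it is, via the edge labelled \cite{Truthe.2021} in Figure~\ref{fig:lang_erg_2}); once this is observed, the proof reduces to a two-line invocation of prior results in the style used throughout this section.
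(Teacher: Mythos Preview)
Your proposal is correct and matches the paper's own proof essentially line for line: the same two witness languages are used (with the labels $L_1$ and $L_2$ swapped), and the non-membership in $\ec{\SUF}$ is deduced in both via the inclusion $\ec{\SUF}\subseteq\ec{\PS}$ from~\cite{Truthe.2021}.
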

\begin{proof}
  We have 
  $L_1 = \sets{a,b}^*\set{a^n b^m}{n\geq 1,\ m\geq 1}\cup\set{ca^n b^mc}{n\geq 1,\ m\geq 1}\in\ec{\SUF}\setminus\ec{\STAR}$
  from Lemma \ref{lemma:ec:suf_o_star}.
  From Lemma \ref{lemma:ec:star_o_ps}, we know that 
  $L_2 = \set{a^n}{n \geq 2}\cup \set{ba^{2n}b}{n \geq 1}$ belongs to the family $\ec{\STAR}$ but not to
  $\ec{\PS}$ (and neither to $\ec{\SUF}$ by \cite{Truthe.2021}).
\end{proof}

\begin{lemma}\label{lemma:ec:unv_sydef_suf}
  The language family $\ec{\SYDEF}$ is incomparable to the family $\ec{\SUF}$.
\end{lemma}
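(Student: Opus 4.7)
The plan is to follow the two-step template used by the nearby incomparability lemmas of this section. By Lemma~\ref{lemma:ec_monoton} and the inclusions already present in Figure~\ref{fig:lang_erg_2}, it suffices to exhibit one language $L_1\in\ec{\SUF}\setminus\ec{\SYDEF}$ and one language $L_2\in\ec{\SYDEF}\setminus\ec{\SUF}$.

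For the first direction, I would simply reuse the witness from Lemma~\ref{lem:ec:suf_o_sydef}, namely $L_1=\{a,b\}^*\cup\{c\}\{\lambda,b\}\{ab\}^*\{c\}$, which has already been placed in $\ec{\SUF}\setminus\ec{\SYDEF}$ there; no additional work is required.

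For the second direction, I would recall the language of Lemma~\ref{lem:ec:sydef_o_nc},
$L_2=\set{a^n}{n\ge 1}\cup\set{ba^nb}{n\ge 1}\cup\set{cba^{2n}bc}{n\ge 1}$, which was already shown to lie in $\ec{\SYDEF}$. What is left is the assertion $L_2\notin\ec{\SUF}$; this is quoted from \cite{Truthe.2021} in precisely the style used in the proof of Lemma~\ref{lemma:ec:unv_star_suf}. The underlying reason, for orientation, is the familiar suffix-closure/parity obstruction: the last derivation step producing $cba^{2n}bc$ for large $n$ must either wrap $(c,c)$ around $ba^{2n}b$ or wrap a context of the form $(cba^{i_1},a^{i_2}bc)$ around a power $a^k$ with $i_1+k+i_2=2n$, since no other split of $cba^{2n}bc$ can be the image of a word in $L_2$; in the second shape, suffix-closure forces $a^{k-1}\in S$, and $a^{k-1}\in L_2$ is derivable, so the same context yields $cba^{2n-1}bc\notin L_2$.

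I expect the genuinely delicate step, were one to reprove $L_2\notin\ec{\SUF}$ from scratch, to be the first shape. There the shorter suffixes provided by suffix-closure (strings $a^jb$, $b$, $\lambda$) lie outside $L_2$ and therefore are not directly derivable, so the quick ``apply the context to a shorter suffix in $L_2$'' trick does not close the case on its own; one has to argue more carefully about the regular structure of a suffix-closed selection that contains $ba^{2n}b$ for infinitely many even $n$ while missing all odd cases, and this finer analysis is what the citation to \cite{Truthe.2021} packages away.
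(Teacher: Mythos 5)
Your first witness and the overall two-witness template are exactly what the paper uses: $L_1=\{a,b\}^*\cup\{c\}\{\lambda,b\}\{ab\}^*\{c\}$ from Lemma~\ref{lem:ec:suf_o_sydef} settles the direction $\ec{\SUF}\setminus\ec{\SYDEF}$. The problem is your second witness. The language $L_2=\set{a^n}{n\ge 1}\cup\set{ba^nb}{n\ge 1}\cup\set{cba^{2n}bc}{n\ge 1}$ of Lemma~\ref{lem:ec:sydef_o_nc} is only known (via \cite{Truthe.2021}) to lie outside $\ec{\NC}$, and in the hierarchy of Figure~\ref{fig:lang_erg_2} the family $\ec{\SUF}$ is \emph{not} contained in $\ec{\NC}$ --- it sits below $\ec{\PS}$ only --- so non-membership in $\ec{\NC}$ tells you nothing about $\ec{\SUF}$. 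This is unlike Lemma~\ref{lemma:ec:unv_star_suf}, where the witness is outside $\ec{\PS}\supseteq\ec{\SUF}$; there is no result here for the citation to ``package away.''

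Worse, the claim $L_2\notin\ec{\SUF}$ is false. Take the grammar with axiom $a$ and the three components $(\{a\}^*,\{(\lambda,a)\})$, $(\{a\}^*,\{(b,b)\})$, and $(\Suf(\set{ba^{2n}b}{n\ge 1}),\{(c,c)\})$. All three selection languages are regular and suffix-closed, and the third one meets the generated language in exactly $\set{ba^{2n}b}{n\ge 1}$, because the additional suffixes $a^kb$, $b$, $\lambda$ forced by suffix-closure are never generated words; hence this grammar generates precisely $L_2$ and $L_2\in\ec{\SUF}$. The ``delicate first shape'' you flagged is not merely delicate --- it is where the intended obstruction collapses, for exactly the reason you half-noticed: the parity-breaking suffixes lie outside the generated language, so suffix-closure costs nothing. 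The paper instead takes $L_2=\set{ab^n}{n\ge 1}\cup\{\lambda\}$, which is in $\ec{\COMB}\setminus\ec{\SUF}$ by \cite{Dassow.2005} and lies in $\ec{\SYDEF}$ via $\ec{\COMB}\subseteq\ec{\DEF}\subseteq\ec{\SYDEF}$ (the latter by Lemma~\ref{lemma:ec:def_ss_sydef}). You should switch to that witness.
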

\begin{proof}
  We have 
  $L_1 = \{a,b\}^*\cup \{c\}\{\lambda, b\}\{ab\}^*\{c\}\in\ec{\SUF}\setminus\ec{\SYDEF}$
  from Lemma \ref{lem:ec:suf_o_sydef}.
  From~\cite{Dassow.2005}, we know that 
  $L_2 = \set{ab^n}{n\geq 1}\cup\{\lambda\}$ belongs to the family $\ec{\COMB}$ 
  but not to $\ec{\SUF}$. By \cite{Truthe.2021} and Lemma~\ref{lemma:ec:def_ss_sydef}, the language $L_2$
  also belongs to $\cEC(\SYDEF)$.
\end{proof}

\begin{lemma}\label{lemma:ec:unv_sydef_ord_nc}
  The family $\ec{\SYDEF}$ is incomparable to each of the families $\cEC(\ORD)$ and $\cEC(\NC)$.
\end{lemma}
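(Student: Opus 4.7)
The plan is to reduce the two incomparabilities to a single pair of witness languages by exploiting the known inclusion $\cEC(\ORD)\subseteq\cEC(\NC)$ (which itself follows from $\ORD\subset\NC$ via Lemma~\ref{lemma:ec_monoton}). If I can exhibit a language $L_1\in\cEC(\SYDEF)\setminus\cEC(\NC)$, then $L_1$ also lies outside $\cEC(\ORD)$, covering one direction of both incomparabilities simultaneously. Symmetrically, if I exhibit $L_2\in\cEC(\ORD)\setminus\cEC(\SYDEF)$, then $L_2$ also lies in $\cEC(\NC)\setminus\cEC(\SYDEF)$, covering the other direction of both.

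For the first witness, I would simply invoke Lemma~\ref{lem:ec:sydef_o_nc}, which provides the language $L_1=\set{a^n}{n\geq 1}\cup\set{ba^nb}{n\geq 1}\cup\set{cba^{2n}bc}{n\geq 1}$ together with the external contextual grammar generating it from symmetric definite selection languages, and the reference to \cite{Truthe.2021} showing $L_1\notin\cEC(\NC)$. For the second witness, I would invoke Lemma~\ref{lem:ec:ord_o_sydef}, which gives $L_2=\{a,b\}^*\cup\{c\}\{ab\}^*\{c\}$ with the ordered grammar constructed in Example~\ref{ex-cg} and the pumping-style argument showing that no symmetric definite selection can control the $c$-insertion so that exactly the alternating $ab$-words are accepted.

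There is essentially no obstacle here since all the hard work has been done in Lemmas~\ref{lem:ec:ord_o_sydef} and~\ref{lem:ec:sydef_o_nc}; the role of the present lemma is only to assemble those witnesses and invoke the monotonicity Lemma~\ref{lemma:ec_monoton} together with the inclusion $\cEC(\ORD)\subseteq\cEC(\NC)$. The proof will be short and in the same style as Lemmas~\ref{lemma:ec:unv_star_comb_def_ord_nc_ps} through~\ref{lemma:ec:unv_sydef_suf}.
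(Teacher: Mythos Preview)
Your proposal is correct and matches the paper's own proof essentially verbatim: the paper also reduces the two incomparabilities via the inclusion $\cEC(\ORD)\subseteq\cEC(\NC)$ to exhibiting one language in $\cEC(\ORD)\setminus\cEC(\SYDEF)$ (taken from Lemma~\ref{lem:ec:ord_o_sydef}) and one in $\cEC(\SYDEF)\setminus\cEC(\NC)$ (taken from Lemma~\ref{lem:ec:sydef_o_nc}). The only difference is the labeling of $L_1$ and $L_2$, which is immaterial.
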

\begin{proof}
  Due to the inclusion relations, it suffices to show that there are two languages $L_1$ and $L_2$ 
  with the properties $L_1\in\cEC(\ORD)\setminus\cEC(\SYDEF)$ and $L_2\in\cEC(\SYDEF)\setminus\cEC(\NC)$.
  From Lemma \ref{lem:ec:ord_o_sydef}, we have~$L_1 = \{a,b\}^*\cup \{c\}\{ab\}^*\{c\}$.
  As $L_2$, we take 
  $L_2 = \set{a^n}{n \geq 1}\cup \set{ba^nb}{n\geq 1}\cup\set{cba^{2n}bc}{n \geq 1}$ from Lemma \ref{lem:ec:sydef_o_nc}.
\end{proof}

\begin{lemma}\label{lemma:ec:unv_sydef_comm_circ}
  The family $\ec{\SYDEF}$ is incomparable to each of the families $\cEC(\COMM)$ and $\cEC(\CIRC)$.
\end{lemma}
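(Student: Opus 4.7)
The plan is to prove both incomparabilities simultaneously by exhibiting two witness languages; since $\cEC(\COMM)\subseteq\cEC(\CIRC)$ (by Lemma~\ref{lemma:ec_monoton} applied to the inclusion $\COMM\subset\CIRC$ from Theorem~\ref{theorem:neue_hierarchie}), it suffices to find an $L_1\in\cEC(\SYDEF)\setminus\cEC(\CIRC)$, which automatically yields $L_1\notin\cEC(\COMM)$, and an $L_2\in\cEC(\COMM)\setminus\cEC(\SYDEF)$, which automatically yields $L_2\in\cEC(\CIRC)\setminus\cEC(\SYDEF)$.

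For $L_1$ I reuse $L_1=\set{a^nb^n}{n\geq 1}\cup\set{b^na^n}{n\geq 1}$ from Lemma~\ref{lem:ec:star_o_circ}, which already supplies $L_1\notin\cEC(\CIRC)$. A symmetric definite grammar generating $L_1$ takes axioms $\{ab,ba\}$ and two selection components $(\{a\}\{a,b\}^*\{b\},\{(a,b)\})$ and $(\{b\}\{a,b\}^*\{a\},\{(b,a)\})$; both selection languages are of the form $EV'^*H$ with $V'=\{a,b\}$, hence symmetric definite. The two components do not cross-interact, since each derived word $a^nb^n$ starts with $a$ and ends with $b$ (so it fits only the first selection) and symmetrically each $b^na^n$ fits only the second. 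For $L_2$ I propose $L_2=\{a,b\}^*\cup\set{cwc}{w\in\{a,b\}^*,\;|w|\text{ is even}}$. A commutative grammar for $L_2$ has $V=\{a,b,c\}$, axiom $\{\lambda\}$, the selection component $(\{a,b\}^*,\{(\lambda,a),(\lambda,b)\})$ generating all of $\{a,b\}^*$ and the selection component $(\set{w\in\{a,b\}^*}{|w|\text{ is even}},\{(c,c)\})$ producing the $cwc$-words; both selection languages are regular and invariant under letter permutations.

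The main work is to show $L_2\notin\cEC(\SYDEF)$. Assume for contradiction that a grammar $G'$ with symmetric definite selections generates $L_2$, and consider a word $cwc\in L_2$ with $|w|$ even and much larger than $\ell(G')$; its last derivation step takes the form $z\Lra uzv=cwc$ with $z\in S=EV'^*H$ for some selection component of $G'$. A case analysis on the positions of the two occurrences of $c$, together with the bound $|u|+|v|\leq\ell_C(G')$, rules out $z=cw'c$ (which would force $u=v=\lambda$) and also rules out $u$ or $v$ carrying both $c$'s (which would blow up the context length with $|w|$); hence $z\in\{a,b\}^*$ and the context takes the shape $(cu',v'c)$ with $u',v'\in\{a,b\}^*$ of bounded length. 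Writing $z=e_0x_0h_0$ with $e_0\in E$, $x_0\in V'^*$, $h_0\in H$, the symmetric definite structure of $S$ supplies $e_0x'h_0\in S$ for every $x'\in V'^*$. Choosing $x'\in V'^*\cap\{a,b\}^*$ with length parity opposite to $|x_0|$ (such $x'$ exists unless $V'\cap\{a,b\}=\emptyset$, a case already eliminated by the length bound), the word $e_0x'h_0$ still lies in $\{a,b\}^*\subseteq L_2$ and is therefore derivable; applying the same context produces $cu'e_0x'h_0v'c$, whose middle $u'e_0x'h_0v'$ has odd length and so lies outside $L_2$, contradicting $L(G')=L_2$. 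The main obstacle is this derivation-step case analysis, in particular handling the placement of the two $c$'s in the factorisation $uzv$ and exploiting the middle-insertion closure of symmetric definite selection languages.
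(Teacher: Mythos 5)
Your proposal is correct, but it takes a genuinely different route from the paper. The paper settles both non-inclusions by leaning on its inclusion lattice and the literature: for $\cEC(\COMM)\not\subseteq\cEC(\SYDEF)$ it takes $\set{a^n}{n\geq 2}\cup\set{ba^{2n}b}{n\geq 1}$, known from \cite{Truthe.2021} to lie in $\cEC(\COMM)\setminus\cEC(\PS)$, and pushes it out of $\cEC(\SYDEF)$ via $\cEC(\SYDEF)\subseteq\cEC(\PS)$ (Lemma~\ref{lemma:ec:sydef_ss_ps}); for the converse it takes $\set{abc^n}{n\geq 1}\cup\set{c^nab}{n\geq 1}$, known from \cite{Dassow.2005} to lie in $\cEC(\COMB)\setminus\cEC(\CIRC)$, and places it in $\cEC(\SYDEF)$ via $\cEC(\DEF)\subset\cEC(\SYDEF)$ (Lemma~\ref{lemma:ec:def_ss_sydef}). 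You instead use the same overall reduction (one witness per direction, exploiting $\cEC(\COMM)\subseteq\cEC(\CIRC)$) but build fresh witnesses: you reuse Lemma~\ref{lem:ec:star_o_circ} for the language $\set{a^nb^n}{n\geq 1}\cup\set{b^na^n}{n\geq 1}$ and supply a correct symmetric definite grammar for it, and you prove $\{a,b\}^*\cup\set{cwc}{w\in\{a,b\}^*,\ |w|\mbox{ even}}\notin\cEC(\SYDEF)$ from scratch using the middle-insertion closure of languages $EV'^*H$ --- essentially the same pumping trick the paper itself deploys in Lemma~\ref{lem:ec:ord_o_sydef}, here applied to a parity constraint instead of an $ab$-alternation constraint. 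Your version is self-contained (it does not need \cite{Truthe.2021} or \cite{Dassow.2005} for this lemma) but duplicates work the paper gets for free from Lemmas~\ref{lemma:ec:sydef_ss_ps} and~\ref{lemma:ec:def_ss_sydef}. One small gap to close in your case analysis: besides $z=cw'c$ and the cases where $u$ or $v$ carries both occurrences of $c$, you must also exclude the sub-cases in which $z$ contains exactly one $c$; this is immediate because $z$ is derivable from an axiom and hence belongs to $L_2$, whose words contain either zero or two occurrences of $c$, but you should say so explicitly before concluding $z\in\{a,b\}^*$.
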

\begin{proof}
  Due to the inclusion relations, it suffices to show that there are two languages $L_1$ and $L_2$ 
  with the properties $L_1\in\cEC(\COMM)\setminus\cEC(\SYDEF)$ and $L_2\in\cEC(\SYDEF)\setminus\cEC(\CIRC)$.
  In \cite{Truthe.2021}, it was proved that the language $L_1 = \set{a^n}{n \geq 2}\cup \set{ba^{2n}b}{n \geq 1}$
  belongs to $\cEC(COMM)$ but not to $\cEC(\PS)$ (this can be seen also in the proof of Lemma~\ref{lemma:ec:star_o_ps}).
  By Lemma~\ref{lemma:ec:sydef_ss_ps}, the language $L_1$ neither belongs to the family~$\cEC(\SYDEF)$.
  
  In \cite{Dassow.2005}, it was proved that the language $L_2 = \set{abc^n}{n\geq 1}\cup\set{c^nab}{n\geq 1}$
  belongs $\cEC(\COMB)$ but not to $\cEC(\CIRC)$. By \cite{Truthe.2021} and Lemma~\ref{lemma:ec:def_ss_sydef},
  the language $L_2$ also belongs to the family $\cEC(\SYDEF)$.
\end{proof}

\begin{theorem}[Hierarchy of the $\cEC$ language families]\label{theo:neu}
  The inclusion relations presented in Figure~\ref{fig:lang_erg_2} hold. An arrow from an entry $X$ to
  an entry~$Y$ depicts the proper inclusion $X \subset Y$; if two families are not connected by a directed
  path, then they are incomparable.
\end{theorem}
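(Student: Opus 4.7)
The plan is to assemble Theorem~\ref{theo:neu} entirely from the lemmas already established, in the same spirit as the proof of Theorem~\ref{theorem:neue_hierarchie} for the subregular hierarchy. First I would note that the equalities displayed in the top node of Figure~\ref{fig:lang_erg_2}, namely $\ec{\REG}=\ec{\UF}=\ec{\LCOM}=\ec{\RCOM}=\ec{\TCOM}$, come from \cite{Dassow_Manea_Truthe.2012} for $\ec{\REG}=\ec{\UF}$ and from Lemma~\ref{lemma:ec:reg_eq_com} for the three comet equalities. Every remaining edge in the figure is labelled either by a citation (for relations already in the literature) or by a reference to a lemma of this paper; the proof is then to go once through all the edges and verify that the cited result indeed yields the depicted proper inclusion.

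For the new proper inclusions I would cite in turn Lemma~\ref{lemma:ec:mon_ss_star} for $\ec{\MON}\subset\ec{\STAR}$, Lemma~\ref{lemma:ec:star_ss_rcom} (together with Lemma~\ref{lemma:ec:reg_eq_com}) for $\ec{\STAR}\subset\ec{\REG}$, Lemma~\ref{lemma:ec:def_ss_sydef} for $\ec{\DEF}\subset\ec{\SYDEF}$, and Lemma~\ref{lemma:ec:sydef_ss_ps} for $\ec{\SYDEF}\subset\ec{\PS}$. The witness languages for properness in each case were exhibited already inside those lemmas, so no new construction is needed.

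The incomparability part of the statement requires that for every pair of nodes $X$, $Y$ in Figure~\ref{fig:lang_erg_2} with no directed path between them, we exhibit witnesses in $X\setminus Y$ and $Y\setminus X$. All previously settled incomparabilities (those without the families $\ec{\STAR}$ or $\ec{\SYDEF}$) are inherited from the hierarchy of \cite{Dassow_Manea_Truthe.2012,Truthe.2021,Dassow_Truthe.2023}. The new cases involve precisely $\ec{\STAR}$ and $\ec{\SYDEF}$: they are covered by Lemma~\ref{lemma:ec:unv_star_comb_def_ord_nc_ps} ($\ec{\STAR}$ versus $\ec{\COMB},\ec{\DEF},\ec{\SYDEF},\ec{\ORD},\ec{\NC},\ec{\PS}$), Lemma~\ref{lemma:ec:unv_star_nil_comm_circ} ($\ec{\STAR}$ versus $\ec{\NIL},\ec{\COMM},\ec{\CIRC}$), Lemma~\ref{lemma:ec:unv_star_suf} ($\ec{\STAR}$ versus $\ec{\SUF}$), Lemma~\ref{lemma:ec:unv_sydef_suf} ($\ec{\SYDEF}$ versus $\ec{\SUF}$), Lemma~\ref{lemma:ec:unv_sydef_ord_nc} ($\ec{\SYDEF}$ versus $\ec{\ORD}$ and $\ec{\NC}$), and Lemma~\ref{lemma:ec:unv_sydef_comm_circ} ($\ec{\SYDEF}$ versus $\ec{\COMM}$ and $\ec{\CIRC}$). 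Note that no incomparability results concerning $\ec{\LCOM},\ec{\RCOM},\ec{\TCOM}$ are needed, since by Lemma~\ref{lemma:ec:reg_eq_com} these families all coincide with $\ec{\REG}$, the maximal node of the diagram.

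There is no genuinely hard step here: the only obstacle is bookkeeping, namely making sure that the list of arrows drawn in Figure~\ref{fig:lang_erg_2} is exactly covered by the lemmas cited and that every missing arrow is accounted for by an incomparability lemma. I would carry this out by scanning the figure node by node, pairing each outgoing edge with its edge-label reference for inclusions, and then scanning each pair of nodes not connected by a directed path to match it to the appropriate incomparability lemma above. Having done so, the theorem follows.
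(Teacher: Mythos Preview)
Your proposal is correct and follows essentially the same approach as the paper: the paper's proof simply states that each edge label points to the paper or lemma establishing the corresponding proper inclusion, and that the incomparability results are covered by Lemmas~\ref{lemma:ec:unv_star_comb_def_ord_nc_ps} through~\ref{lemma:ec:unv_sydef_comm_circ}. Your version is more explicit in listing which lemma handles which relation, but the strategy is identical.
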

\begin{proof}
  An edge label refers to the paper or lemma in the present paper where the proper inclusion is shown.
  The incomparability results are proved in Lemmas~\ref{lemma:ec:unv_star_comb_def_ord_nc_ps} 
  to~\ref{lemma:ec:unv_sydef_comm_circ}.
\end{proof}

\section{Conclusion and future work}

In this paper, we have extended the previous hierarchy of subregular language families and 
families generated by external contextual grammars with selection in certain subregular language families.


Various other subregular language families have also been investigated in the past 
(for instance, in~\cite{Bordihn_Holzer_Kutrib.2009, Han_Salomaa.2009, Olejar_Szabari.2023}). 
Future research will be on extending and unifying current hierarchies of
subregular language families 
(presented, for instance, in \cite{Dassow_Truthe.2023,Truthe.2021})
by additional families and to use them as control in external contextual grammars. 
We already started investigations on the position of prefix- and infix-closed as well as
prefix-, suffix-, and infix-free languages in the current hierarchy and their impact on the
generative power of external contextual grammars when used for selection. 
The extension of the hierarchy with other families of definite-like languages (for instance,
ultimate definite, central definite, noninital definite) has also already begun. 

The research will be also extended to internal contextual grammars or tree-controlled grammars 
where results are already available in \cite{Dassow_Truthe.2023, Truthe.2021, Truthe.2023.ncma, Truthe.2023.afl}.

\bibliographystyle{eptcs}
\bibliography{generic}

\begin{thebibliography}{10}
\providecommand{\bibitemdeclare}[2]{}
\providecommand{\surnamestart}{}
\providecommand{\surnameend}{}
\providecommand{\urlprefix}{Available at }
\providecommand{\url}[1]{\texttt{#1}}
\providecommand{\href}[2]{\texttt{#2}}
\providecommand{\urlalt}[2]{\href{#1}{#2}}
\providecommand{\doi}[1]{doi:\urlalt{https://doi.org/#1}{#1}}
\providecommand{\eprint}[1]{arXiv:\urlalt{https://arxiv.org/abs/#1}{#1}}
\providecommand{\bibinfo}[2]{#2}

\bibitemdeclare{article}{Bordihn_Holzer_Kutrib.2009}
\bibitem{Bordihn_Holzer_Kutrib.2009}
\bibinfo{author}{Henning \surnamestart Bordihn\surnameend},
  \bibinfo{author}{Markus \surnamestart Holzer\surnameend} \&
  \bibinfo{author}{Martin \surnamestart Kutrib\surnameend}
  (\bibinfo{year}{2009}): \emph{\bibinfo{title}{Determination of finite
  automata accepting subregular languages}}.
\newblock {\slshape \bibinfo{journal}{Theoretical Computer Science}}
  \bibinfo{volume}{410}(\bibinfo{number}{35}), pp. \bibinfo{pages}{3209--3222},
  \doi{10.1016/j.tcs.2009.05.019}.

\bibitemdeclare{phdthesis}{Brzozowski.1962}
\bibitem{Brzozowski.1962}
\bibinfo{author}{Janusz~A. \surnamestart Brzozowski\surnameend}
  (\bibinfo{year}{1962}): \emph{\bibinfo{title}{Regular expression techniques
  for sequential circuits}}.
\newblock Ph.D. thesis, \bibinfo{school}{Princeton University, Princeton, NJ,
  USA}.

\bibitemdeclare{article}{Brzozowski.1967}
\bibitem{Brzozowski.1967}
\bibinfo{author}{Janusz~A. \surnamestart Brzozowski\surnameend}
  (\bibinfo{year}{1967}): \emph{\bibinfo{title}{Roots of star events}}.
\newblock {\slshape \bibinfo{journal}{Journal of the ACM}}
  \bibinfo{volume}{14}(\bibinfo{number}{3}), pp. \bibinfo{pages}{466--477},
  \doi{10.1109/SWAT.1966.21}.

\bibitemdeclare{article}{Brzozowski_Cohen.1969}
\bibitem{Brzozowski_Cohen.1969}
\bibinfo{author}{Janusz~A. \surnamestart Brzozowski\surnameend} \&
  \bibinfo{author}{Rina \surnamestart Cohen\surnameend} (\bibinfo{year}{1969}):
  \emph{\bibinfo{title}{On decompositions of regular events}}.
\newblock {\slshape \bibinfo{journal}{Journal of the ACM}}
  \bibinfo{volume}{16}(\bibinfo{number}{1}), pp. \bibinfo{pages}{132--144},
  \doi{10.1145/321495.321505}.

\bibitemdeclare{article}{Brzozowski_Jiraskova_Zou.2014}
\bibitem{Brzozowski_Jiraskova_Zou.2014}
\bibinfo{author}{Janusz~A. \surnamestart Brzozowski\surnameend},
  \bibinfo{author}{Galina \surnamestart Jir\'askov\'a\surnameend} \&
  \bibinfo{author}{Chenglong \surnamestart Zou\surnameend}
  (\bibinfo{year}{2014}): \emph{\bibinfo{title}{Quotient complexity of closed
  languages}}.
\newblock {\slshape \bibinfo{journal}{Theory of Computing Systems}}
  \bibinfo{volume}{54}, pp. \bibinfo{pages}{277--292},
  \doi{10.1007/s00224-013-9515-7}.

\bibitemdeclare{article}{Dassow.2005}
\bibitem{Dassow.2005}
\bibinfo{author}{J{\"{u}}rgen \surnamestart Dassow\surnameend}
  (\bibinfo{year}{2005}): \emph{\bibinfo{title}{Contextual grammars with
  subregular choice}}.
\newblock {\slshape \bibinfo{journal}{Fundamenta Informaticae}}
  \bibinfo{volume}{64}(\bibinfo{number}{1--4}), pp. \bibinfo{pages}{109--118}.

\bibitemdeclare{article}{Dassow.2015}
\bibitem{Dassow.2015}
\bibinfo{author}{J{\"u}rgen \surnamestart Dassow\surnameend}
  (\bibinfo{year}{2015}): \emph{\bibinfo{title}{Contextual languages with
  strictly locally testable and star free selection languages}}.
\newblock {\slshape \bibinfo{journal}{Analele Universitatii Bucuresti}}
  \bibinfo{volume}{62}, pp. \bibinfo{pages}{25--36}.

\bibitemdeclare{article}{Dassow_Manea_Truthe.2012}
\bibitem{Dassow_Manea_Truthe.2012}
\bibinfo{author}{J{\"{u}}rgen \surnamestart Dassow\surnameend},
  \bibinfo{author}{Florin \surnamestart Manea\surnameend} \&
  \bibinfo{author}{Bianca \surnamestart Truthe\surnameend}
  (\bibinfo{year}{2012}): \emph{\bibinfo{title}{On external contextual grammars
  with subregular selection languages}}.
\newblock {\slshape \bibinfo{journal}{Theoretical Computer Science}}
  \bibinfo{volume}{449}, pp. \bibinfo{pages}{64--73},
  \doi{10.1016/j.tcs.2012.04.008}.

\bibitemdeclare{article}{Dassow_Truthe.2023}
\bibitem{Dassow_Truthe.2023}
\bibinfo{author}{J{\"{u}}rgen \surnamestart Dassow\surnameend} \&
  \bibinfo{author}{Bianca \surnamestart Truthe\surnameend}
  (\bibinfo{year}{2023}): \emph{\bibinfo{title}{Relations of contextual
  grammars with strictly locally testable selection languages}}.
\newblock {\slshape \bibinfo{journal}{{RAIRO} -- Theoretical Informatics and
  Applications}} \bibinfo{volume}{57}, p. \bibinfo{pages}{\#10},
  \doi{10.1051/ita/2023012}.

\bibitemdeclare{book}{Gecseg_Peak.1972}
\bibitem{Gecseg_Peak.1972}
\bibinfo{author}{Ference \surnamestart G\'ecseg\surnameend} \&
  \bibinfo{author}{Istv{\'{a}}n \surnamestart Pe{\'{a}}k\surnameend}
  (\bibinfo{year}{1972}): \emph{\bibinfo{title}{Algebraic Theory of Automata}}.
\newblock \bibinfo{publisher}{Academiai Kiado, Budapest}.

\bibitemdeclare{article}{Gill_Kou.1974}
\bibitem{Gill_Kou.1974}
\bibinfo{author}{Arthur \surnamestart Gill\surnameend} \&
  \bibinfo{author}{Lawrence~T. \surnamestart Kou\surnameend}
  (\bibinfo{year}{1974}): \emph{\bibinfo{title}{Multiple-entry finite
  automata}}.
\newblock {\slshape \bibinfo{journal}{Journal of Computer and System Sciences}}
  \bibinfo{volume}{9}(\bibinfo{number}{1}), pp. \bibinfo{pages}{1--19},
  \doi{10.1016/S0022-0000(74)80034-6}.

\bibitemdeclare{article}{Han_Salomaa.2009}
\bibitem{Han_Salomaa.2009}
\bibinfo{author}{Yo-Sub \surnamestart Han\surnameend} \& \bibinfo{author}{Kai
  \surnamestart Salomaa\surnameend} (\bibinfo{year}{2009}):
  \emph{\bibinfo{title}{State complexity of basic operations on suffix-free
  regular languages}}.
\newblock {\slshape \bibinfo{journal}{Theoretical Computer Science}}
  \bibinfo{volume}{410}(\bibinfo{number}{27}), pp. \bibinfo{pages}{2537--2548},
  \doi{10.1016/j.tcs.2008.12.054}.

\bibitemdeclare{article}{Havel.1969}
\bibitem{Havel.1969}
\bibinfo{author}{Ivan~M. \surnamestart Havel\surnameend}
  (\bibinfo{year}{1969}): \emph{\bibinfo{title}{The theory of regular events
  {II}}}.
\newblock {\slshape \bibinfo{journal}{Kybernetika}}
  \bibinfo{volume}{5}(\bibinfo{number}{6}), pp. \bibinfo{pages}{520--544}.

\bibitemdeclare{inproceedings}{Holzer_Truthe.2015}
\bibitem{Holzer_Truthe.2015}
\bibinfo{author}{Markus \surnamestart Holzer\surnameend} \&
  \bibinfo{author}{Bianca \surnamestart Truthe\surnameend}
  (\bibinfo{year}{2015}): \emph{\bibinfo{title}{On relations between some
  subregular language families}}.
\newblock In \bibinfo{editor}{Rudolf \surnamestart Freund\surnameend},
  \bibinfo{editor}{Markus \surnamestart Holzer\surnameend},
  \bibinfo{editor}{Nelma \surnamestart Moreira\surnameend} \&
  \bibinfo{editor}{Rog{\'{e}}rio \surnamestart Reis\surnameend}, editors:
  {\slshape \bibinfo{booktitle}{Seventh Workshop on Non-Classical Models of
  Automata and Applications -- \hbox{NCMA} 2015, Porto, Portugal, August 31 --
  September 1, 2015. Proceedings}}, {\slshape \bibinfo{series}{books@ocg.at}}
  \bibinfo{volume}{318}, \bibinfo{publisher}{{\"{O}}sterreichische Computer
  Gesellschaft}, pp. \bibinfo{pages}{109--124}.

\bibitemdeclare{incollection}{Kudlek.2004}
\bibitem{Kudlek.2004}
\bibinfo{author}{Manfred \surnamestart Kudlek\surnameend}
  (\bibinfo{year}{2004}): \emph{\bibinfo{title}{On languages of cyclic words}}.
\newblock In \bibinfo{editor}{Natasha \surnamestart Jonoska\surnameend},
  \bibinfo{editor}{{\relax Gh}eorghe \surnamestart P\u{a}un\surnameend} \&
  \bibinfo{editor}{Grzegorz \surnamestart Rozenberg\surnameend}, editors:
  {\slshape \bibinfo{booktitle}{Aspects of Molecular Computing, Essays
  Dedicated to Tom Head on the Occasion of His 70th Birthday}}, {\slshape
  \bibinfo{series}{LNCS}} \bibinfo{volume}{2950},
  \bibinfo{publisher}{Springer-Verlag}, pp. \bibinfo{pages}{278--288},
  \doi{10.1007/978-3-540-24635-0_20}.

\bibitemdeclare{article}{Marcus.1969}
\bibitem{Marcus.1969}
\bibinfo{author}{Solomon \surnamestart Marcus\surnameend}
  (\bibinfo{year}{1969}): \emph{\bibinfo{title}{Contextual grammars}}.
\newblock {\slshape \bibinfo{journal}{Revue Roumaine de Math{\'{e}}matique
  Pures et Appliqu{\'{e}}es}} \bibinfo{volume}{14}, pp.
  \bibinfo{pages}{1525--1534}.

\bibitemdeclare{book}{McNaughton_Papert.1971}
\bibitem{McNaughton_Papert.1971}
\bibinfo{author}{Robert \surnamestart McNaughton\surnameend} \&
  \bibinfo{author}{Seymour \surnamestart Papert\surnameend}
  (\bibinfo{year}{1971}): \emph{\bibinfo{title}{Counter-Free Automata}}.
\newblock \bibinfo{publisher}{MIT Press}, \bibinfo{address}{Cambridge, USA}.

\bibitemdeclare{inproceedings}{Nagy.2019}
\bibitem{Nagy.2019}
\bibinfo{author}{Benedek \surnamestart Nagy\surnameend} (\bibinfo{year}{2019}):
  \emph{\bibinfo{title}{Union-Freeness, Deterministic Union-Freeness and
  Union-Complexity}}.
\newblock In \bibinfo{editor}{Michal \surnamestart Hospod{\'a}r\surnameend},
  \bibinfo{editor}{Galina \surnamestart Jir{\'a}skov{\'a}\surnameend} \&
  \bibinfo{editor}{Stavros \surnamestart Konstantinidis\surnameend}, editors:
  {\slshape \bibinfo{booktitle}{Descriptional Complexity of Formal Systems,
  21st IFIP WG 1.02 International Conference, DCFS 2019, Ko{\v{s}}ice,
  Slovakia, July 17--19, 2019, Proceedings}}, \bibinfo{publisher}{Springer,
  Cham}, pp. \bibinfo{pages}{46--56}, \doi{10.1007/978-3-030-23247-4_3}.

\bibitemdeclare{article}{Olejar_Szabari.2023}
\bibitem{Olejar_Szabari.2023}
\bibinfo{author}{Viktor \surnamestart Olej\'{a}r\surnameend} \&
  \bibinfo{author}{Alexander \surnamestart Szabari\surnameend}
  (\bibinfo{year}{2023}): \emph{\bibinfo{title}{Closure Properties of
  Subregular Languages Under Operations}}.
\newblock {\slshape \bibinfo{journal}{International Journal of Foundations of
  Computer Science}}, pp. \bibinfo{pages}{1--25},
  \doi{10.1142/S0129054123450016}.

\bibitemdeclare{article}{Paz_Peleg.1965}
\bibitem{Paz_Peleg.1965}
\bibinfo{author}{Azaria \surnamestart Paz\surnameend} \&
  \bibinfo{author}{Bezalel \surnamestart Peleg\surnameend}
  (\bibinfo{year}{1965}): \emph{\bibinfo{title}{Ultimate-definite and
  symmetric-definite events and automata}}.
\newblock {\slshape \bibinfo{journal}{Journal of the ACM}}
  \bibinfo{volume}{12}(\bibinfo{number}{3}), pp. \bibinfo{pages}{399--410},
  \doi{10.1145/321281.321292}.

\bibitemdeclare{article}{Perles_Rabin_Shamir.1963}
\bibitem{Perles_Rabin_Shamir.1963}
\bibinfo{author}{Micha~A. \surnamestart Perles\surnameend},
  \bibinfo{author}{Michael~O. \surnamestart Rabin\surnameend} \&
  \bibinfo{author}{Eli \surnamestart Shamir\surnameend} (\bibinfo{year}{1963}):
  \emph{\bibinfo{title}{The theory of definite automata}}.
\newblock {\slshape \bibinfo{journal}{IEEE Transactions of Electronic
  Computers}} \bibinfo{volume}{12}, pp. \bibinfo{pages}{233--243},
  \doi{10.1109/PGEC.1963.263534}.

\bibitemdeclare{book}{Rozenberg_Salomaa.1997}
\bibitem{Rozenberg_Salomaa.1997}
\bibinfo{editor}{Grzegorz \surnamestart Rozenberg\surnameend} \&
  \bibinfo{editor}{Arto \surnamestart Salomaa\surnameend}, editors
  (\bibinfo{year}{1997}): \emph{\bibinfo{title}{Handbook of Formal Languages}}.
\newblock \bibinfo{publisher}{Springer-Verlag}, \bibinfo{address}{Berlin},
  \doi{10.1007/978-3-642-59136-5}.

\bibitemdeclare{book}{Shyr.1991}
\bibitem{Shyr.1991}
\bibinfo{author}{Huei{-}Jan \surnamestart Shyr\surnameend}
  (\bibinfo{year}{1991}): \emph{\bibinfo{title}{Free Monoids and Languages}}.
\newblock \bibinfo{publisher}{Hon Min Book Co., Taichung, Taiwan}.

\bibitemdeclare{article}{Shyr_Thierrin.1974.ord}
\bibitem{Shyr_Thierrin.1974.ord}
\bibinfo{author}{Huei{-}Jan \surnamestart Shyr\surnameend} \&
  \bibinfo{author}{Gabriel \surnamestart Thierrin\surnameend}
  (\bibinfo{year}{1974}): \emph{\bibinfo{title}{Ordered automata and associated
  languages}}.
\newblock {\slshape \bibinfo{journal}{Tamkang Journal of Mathematics}}
  \bibinfo{volume}{5}(\bibinfo{number}{1}), pp. \bibinfo{pages}{9--20}.

\bibitemdeclare{article}{Shyr_Thierrin.1974.ps}
\bibitem{Shyr_Thierrin.1974.ps}
\bibinfo{author}{Huei{-}Jan \surnamestart Shyr\surnameend} \&
  \bibinfo{author}{Gabriel \surnamestart Thierrin\surnameend}
  (\bibinfo{year}{1974}): \emph{\bibinfo{title}{Power-separating regular
  languages}}.
\newblock {\slshape \bibinfo{journal}{Mathematical Systems Theory}}
  \bibinfo{volume}{8}(\bibinfo{number}{1}), pp. \bibinfo{pages}{90--95},
  \doi{10.1007/BF01761710}.

\bibitemdeclare{inproceedings}{Truthe.2014}
\bibitem{Truthe.2014}
\bibinfo{author}{Bianca \surnamestart Truthe\surnameend}
  (\bibinfo{year}{2014}): \emph{\bibinfo{title}{A relation between definite and
  ordered finite automata}}.
\newblock In \bibinfo{editor}{Suna \surnamestart Bensch\surnameend},
  \bibinfo{editor}{Rudolf \surnamestart Freund\surnameend} \&
  \bibinfo{editor}{Friedrich \surnamestart Otto\surnameend}, editors: {\slshape
  \bibinfo{booktitle}{Sixth Workshop on Non-Classical Models for Automata and
  Applications -- {NCMA} 2014, Kassel, Germany, July 28--29, 2014.
  Proceedings}}, {\slshape \bibinfo{series}{books@ocg.at}}
  \bibinfo{volume}{304}, \bibinfo{publisher}{{\"{O}}sterreichische Computer
  Gesellschaft}, pp. \bibinfo{pages}{235--247}.

\bibitemdeclare{techreport}{Truthe.2018}
\bibitem{Truthe.2018}
\bibinfo{author}{Bianca \surnamestart Truthe\surnameend}
  (\bibinfo{year}{2018}): \emph{\bibinfo{title}{Hierarchy of Subregular
  Language Families}}.
\newblock \bibinfo{type}{Technical Report},
  \bibinfo{institution}{Justus-Liebig-Universit{\"{a}}t Giessen, Institut
  f{\"{u}}r Informatik, IFIG Research Report 1801}.

\bibitemdeclare{article}{Truthe.2021}
\bibitem{Truthe.2021}
\bibinfo{author}{Bianca \surnamestart Truthe\surnameend}
  (\bibinfo{year}{2021}): \emph{\bibinfo{title}{Generative Capacity of
  Contextual Grammars with Subregular Selection Languages}}.
\newblock {\slshape \bibinfo{journal}{Fundamenta Informaticae}}
  \bibinfo{volume}{180}(\bibinfo{number}{1--2}), pp. \bibinfo{pages}{123--150},
  \doi{10.3233/FI-2021-2037}.

\bibitemdeclare{inproceedings}{Truthe.2023.ncma}
\bibitem{Truthe.2023.ncma}
\bibinfo{author}{Bianca \surnamestart Truthe\surnameend}
  (\bibinfo{year}{2023}): \emph{\bibinfo{title}{Merging two Hierarchies of
  Internal Contextual Grammars with Subregular Selection}}.
\newblock In \bibinfo{editor}{Benedek \surnamestart Nagy\surnameend} \&
  \bibinfo{editor}{Rudolf \surnamestart Freund\surnameend}, editors: {\slshape
  \bibinfo{booktitle}{Proceedings of the 13th International Workshop on
  Non-Classical Models of Automata and Applications, {NCMA} 2023, Famagusta,
  North Cyprus, 18th--19th September, 2023}}, {\slshape
  \bibinfo{series}{{EPTCS}}} \bibinfo{volume}{388}, pp.
  \bibinfo{pages}{125--139}, \doi{10.4204/EPTCS.388.12}.

\bibitemdeclare{inproceedings}{Truthe.2023.afl}
\bibitem{Truthe.2023.afl}
\bibinfo{author}{Bianca \surnamestart Truthe\surnameend}
  (\bibinfo{year}{2023}): \emph{\bibinfo{title}{Strictly Locally Testable and
  Resources Restricted Control Languages in Tree-Controlled Grammars}}.
\newblock In \bibinfo{editor}{Zsolt \surnamestart Gazdag\surnameend},
  \bibinfo{editor}{Szabolcs \surnamestart Iv{\'{a}}n\surnameend} \&
  \bibinfo{editor}{Gergely \surnamestart Kov{\'{a}}sznai\surnameend}, editors:
  {\slshape \bibinfo{booktitle}{Proceedings of the 16th International
  Conference on Automata and Formal Languages, {AFL} 2023, Eger, Hungary,
  September 5--7, 2023}}, {\slshape \bibinfo{series}{{EPTCS}}}
  \bibinfo{volume}{386}, pp. \bibinfo{pages}{253--268},
  \doi{10.4204/EPTCS.386.20}.

\bibitemdeclare{book}{Wiedemann.1978}
\bibitem{Wiedemann.1978}
\bibinfo{author}{Barbara \surnamestart Wiedemann\surnameend}
  (\bibinfo{year}{1978}): \emph{\bibinfo{title}{Vergleich der
  {L}eistungsf{\"a}higkeit endlicher determinierter {A}utomaten}}.
\newblock \bibinfo{publisher}{Diplomarbeit, Universit{\"a}t Rostock}.

\end{thebibliography}
\end{document}